%% file: main.tex
\documentclass[a4paper,12pt]{article}
\usepackage{biblatex}
\usepackage{graphicx} 
\usepackage[margin=1in]{geometry}
\usepackage{booktabs}
\usepackage{mathtools}
\usepackage{amsmath}
\usepackage{amsthm}
\usepackage{amssymb}
\usepackage{amsfonts}
\usepackage{bbm}
\usepackage{graphicx}
\usepackage{titling}
\usepackage{multicol}
\usepackage[font=small]{caption}
\usepackage[capposition=top]{floatrow}
\newcommand{\qcite}[1]{\citeauthor{#1} (\citeyear{#1})}
\newcommand{\qcitenp}[1]{\citeauthor{#1} \citeyear{#1}}

\usepackage{tikz}
\usetikzlibrary{matrix}

\tikzset{
    empty/.style={
        draw opacity=0,
        fill opacity=1,
        align=center
    },
    payoffmatrix/.style={
        matrix of nodes,
        nodes in empty cells,
        column sep=-\pgflinewidth,
        row sep=-\pgflinewidth,
        nodes={
            draw, 
            minimum height=1cm,
            text width=2cm,
            align=center,
            anchor=center,
            inner sep=0pt,
            outer sep=0pt
        },
        column 1/.style={nodes={draw=none, font=\bfseries, text width=2.2cm}},
        row 1/.style={nodes={draw=none, font=\bfseries, minimum height=0.6cm}}
    }
}

\DeclareMathOperator*{\argmax}{\arg\!\max}
\DeclareMathOperator*{\argmin}{\arg\!\min}
 
\newcommand{\expectation}[1]{\mathbb{E}[#1]}

\newcommand{\distexpectation}[2][]{\mathbb{E}_{#1}[#2]}
\DeclarePairedDelimiterX{\Earg}[1]{[}{]}{#1}
\DeclarePairedDelimiterX{\Vararg}[1]{(}{)}{#1}
\newcommand{\E}[2][]{\mathbb{E}_{#1}\Earg#2}
\newcommand{\Vari}[2][]{\mathrm{Var}_{#1}\Vararg#2}

\newtheorem{theorem}{Theorem}
\newtheorem{proposition}{Proposition}
\theoremstyle{definition}

\newtheorem{corollary}{Corollary}
\newtheorem{assumption}{Assumption}
\newtheorem{remark}{Remark}
\newtheorem{example}{Example}
\newtheorem{hypothesis}{Hypothesis}

\title{Do Preferences Matter in Balanced Task Allocation?}
\author{Terence Highsmith II$^1$}
\date{Version: \today}

\begin{document}

\maketitle
\begin{center}
    \begin{minipage}{0.9\textwidth}
\textit{I model balanced task allocation where tasks stochastically arrive and must be matched to a fixed set of agents; the novel constraint is that agents must receive allocations that require the same level of average effort. Social work supervisors, call center managers, and courts all rotate allocation across workers to satisfy this constraint, but the Rotation mechanism is not Pareto efficient. I design the Dynamic Pseudomarket (DPM) mechanism, and it satisfies Pareto efficiency and asymptotic balance. I derive an explicit equation characterizing DPM's expected productivity gain over Rotation that can be estimated only from aggregate statistics in firm-level data. Simulation results indicate large average productivity gains. These results indicate that preference-based allocation can Pareto dominate the status quo.} (JEL D47)
\end{minipage}
\end{center}

\vspace{1em}

\input{Paper/Introduction}

\input{Paper/Preliminaries}

\input{Paper/Results}

\input{Paper/Conclusion}

\printbibliography

\appendix
\counterwithin{table}{section}
\counterwithin{figure}{section}
\counterwithin{theorem}{section}
\counterwithin{proposition}{section}
\counterwithin{lemma}{section}
\counterwithin{assumption}{section}
\counterwithin{remark}{section}

\input{Paper/Appendix}

\input{Paper/ReducedForm.tex}

\end{document}

%% file: Paper/Introduction.tex
\section*{1 \hspace{5pt} Introduction}

Many firms hire similar workers to perform similar tasks. Customer service centers frequently employ hundreds to answer calls and requests (\qcitenp{avramidis2004modeling}), social work agencies hire caseworkers to manage child cases, and court systems hire public defenders to represent defendants in trials. These internal labor markets all share key features: the tasks dynamically arrive, the firm allocates tasks immediately, and the firm desires \textit{balance}.

Balance is pervasive, appearing in each of these applications. Its requirement is simple: workers must receive allocations that have the same \textit{average workload}. The firm determines a task's marginal added workload using its own precise or heuristic estimates typically from the task's average cost of effort across workers (\qcitenp{jacoby1985}). I use foster care as a running example. County social agencies investigate allegations of child abuse or neglect, and substantiated claims result in the child entering the custody of the county. Subsequently, the supervisors in the county assign a caseworker to manage the child's social services while in care. Three features are pertinent. First, future cases of substantiated abuse are unknown, and they stochastically arrive over time. Second, cases cannot wait, and so supervisors must assign them immediately upon arrival. Third, as mentioned, supervisors aim to give caseworkers equitable workloads. U.S. federal guidelines explicitly enshrine this: "The goal of a supervisor is to achieve equitable caseloads." (\qcitenp{salus2004supervising}). Determining a \textit{good} allocation that satisfies these goals is non-trivial.

The method that county social agencies---and many other firms---use to guarantee fairness is \textit{Rotation}. Rotation arbitrarily orders workers and assigns incoming tasks to the next worker in the order. The order simply repeats from the beginning after the end. Rotation guarantees that all workers receive the same number of tasks over time. Though tasks may vary in their required effort, a simple modification also guarantees that workers receive task allocations that require the same level of effort. Indeed, the U.S. federal guidelines for case allocation mentioned above specifically reference Rotation and advise using it with special attention to case difficulty. U.S. counties heed these guidelines and actively use Rotation (\qcitenp{baron2024mechanism}). The firm requires objective fairness: agents must receive allocations with the same number of tasks or allocations with equal weights, where weights represent the task's average cost of effort\footnote{Court case allocation to public defenders also makes this explicit: "Providing defense services in criminal cases is the primary function of the office of the public defender. It consumes the largest proportion of the defender's budget but unlike other public delivery systems, the services provided by public defenders varies greatly depending on the client the case and the way it is disposed. Pleas of guilty to burglaries, for example, consume far less effort than a jury trial for a rape case. This variability has historically led to calls for caseload standards or better ways to relate workload to staffing requirements. Since the early 1970's efforts have been directed at developing standards, workload measures or other indicators for evaluating the number of cases each attorney should carry." (\qcitenp{jacoby1985})}.

Firms might use Rotation for many reasons. It is simple. Even absent further considerations, it is not immediately clear what a better allocation mechanism for dynamically arriving tasks would be. Moreover, some firms could have institutional desires for balance if it is perceived as fair. Strong employee sentiments could prohibit methods of allocating tasks that result in inequitable distributions. Last, balance could be optimal. While a firm might want to allocate tasks to workers whom are most productive at them, this might be at odds with retention if workers that are most productive at difficult tasks also have strong outside options. They could leave the firm, and firms that anticipate this might pre-emptively offer worker rents through the veneer of balance.

Whatever the reason may be, this formal objective for balance in task allocation substantially differs from fairness notions found in economics literature. \qcite{foley1967} introduces envy-freeness: it requires that no agent would prefer another agent's allocation over her own. \qcite{budish2011} introduces a criterion in combinatorial allocation relaxing envy-freeness to envy-bounded by a single good. An allocation satisfies this if an agent would not prefer another agent's allocation if a single good were removed. Both carry the normative interpretation that an allocation is fair if it is subjectively preferred by all agents, so both concepts invoke agent preferences. Yet, balance is completely agnostic to preferences, and mechanisms like Rotation can operate without preference elicitation.

Indeed, the core issue present in Rotation is precisely its ignorance of preferences. My foster care case study demonstrates this. Child outcomes directly related to caseworker performance have been historically weak (\qcitenp{carnochan2013}). Moreover, caseworker retention is extremely poor, and evidence suggests that this could be due, in part, to suboptimal allocations (\qcitenp{marsh2020caseworker}). Intuitively, ignoring worker preferences is costly if workers either prefer tasks in which they are more productive or prefer tasks in which they face smaller effort costs. Overlooking the former could disadvantage firm-level productivity, and overlooking the latter could cause workers to leave the firm. Consequently, using worker preferences to determine better allocations could both increase productivity and retention. Nevertheless, for this problem, it is unclear how---or if---the firm can incorporate worker preferences. In addition, the causal effects of such an improved allocation mechanism on productivity and retention are unknown.

In this paper, I explicitly model balanced task allocation. There is a set of workers $I$ and a set of task types $X$. A single task arrives at each period $t$ over a finite time horizon $T$ that represents the firm's planning horizon. Each task $x \in X$ has a weight $w^x$ that reflects a firm-estimated difficulty or effort cost, and tasks arrive at time $t$ with probability $F_t(x)$ that can shift over time as the arrival of children to foster care (higher in the school year, lower in the summer) and calls (higher in mornings and evenings, lower at midday) do. Every task $x$ must be allocated to a worker $i \in I$ immediately upon arrival. Workers have preferences $v_i(\cdot) : X \to \mathbf{R}_{++}$ for tasks that can represent productivity, cost of effort, idiosyncratic tastes, or other factors.

I formalize the fairness objective as balance. An ex-post task allocation $a$ specifies an agent $i$ that receives an object $x$ at a time $t$: $a_{i,t}^x \in \{0,1\}$ (such that $\sum_{i \in I} \sum_{x \in X} a_{i,t}^x = 1$). An allocation is ex-post balanced if each worker's allocation weight
\[W(a_i) = \sum_{t = 1}^T \sum_{x \in X} w^x a_{i,t}^x\]
is (approximately) equal up to a feasibility constant. An allocation is ex-post efficient if there is no other ex-post allocation (where the same objects arrive) that every worker weakly prefers, some strictly prefers, and none are worse off. 

In this framework, I show that Rotation is ex-post balanced, but it is not ex-post efficient. Worse, there are markets such that Rotation is \textit{always} ex-post inefficient (Propositions \ref{proposition:rotation} and \ref{proposition:inefficient}). Moreover, there is generally no mechanism that can be ex-post balanced and efficient (Proposition \ref{proposition:expost}). To circumvent these impossibilities, I explore two routes: relaxing ex-post balance and relaxing ex-post efficiency.

My approach is to utilize the pseudomarket approach from matching theory. In pseudomarkets, workers receive budgets and demand randomized allocations. A pseudomarket allocation mechanism solves for a general equilibrium and randomly allocates objects to workers according to the equilibrium allocation. The nature of balanced task allocation demands two innovations. One, objects are not static, and, instead, I allow workers to demand probability shares of objects that can arrive in the future. Two, competitive equilibria in pseudomarkets are not balanced. I use a novel equilibrium concept---balanced equilibria---which endogenously vary budgets and prices to simultaneously achieve balance and market clearing, respectively.

Balanced equilibria always exist, are ex-ante balanced, and are ex-ante efficient; every agent's random allocation exactly equals the average expected weight of all task arrivals, and there is no random allocation that Pareto dominates the equilibrium (Theorems \ref{thm:existence} and \ref{thm:first-welfare}). Surprisingly, ex-ante efficiency is stronger than ex-post efficiency in this setting. Any ex-ante efficient allocation has only ex-post efficient allocations in its support. Consequently, this allows me to design a mechanism---the Dynamic Pseudomarket Mechanism (DPM)---that computes balanced equilibria and uses the equilibrium to allocate objects as they arrive. DPM is ex-post efficient (Theorem \ref{thm:efficient}).

Additionally, while DPM is not ex-post balanced, it is \textit{asymptotically balanced}, meaning that, on average, workers receive exactly balanced allocations as the mechanism is run on increasingly longer time horizons (equivalently, as the mechanism is run on a greater number of tasks). Moreover, it has strong finite-market expected imbalances: the expected imbalance of DPM scales sub-linearly with $T$ at the rate of the size of the average allocation $\sqrt{T/|I|}$. For a typical call center that receives approximately 1,150 calls per day with 30 workers, the average worker's deviation from balance is expected to be $\sqrt{1150/30} \approx 6.2$ (Theorem \ref{thm:asymptotic-balance}). Moreover, I provide computational results suggesting that naive, myopic ex-post reallocations can restore near-exact balance while maintaining asymptotic efficiency (Remark \ref{remark:asymptotic-efficiency}).

These results are promising, but two obstacles prevent immediate conclusions about DPM's efficacy in practice. First, the impact of efficiency on worker productivity is not obvious. While DPM should improve worker welfare, its effects on productivity depend on the relationship between worker preferences and productivity. Worker preferences for tasks could reflect heterogeneous tastes or uncertainty that are uncorrelated with outcomes. Further, DPM is unfortunately manipulable (Proposition \ref{proposition:manipulable}). Although its balance does not depend on worker reports, its efficiency may be harmed if workers misrepresent their preferences.

To understand the magnitude of DPM's empirical effects, I theoretically derive a \textit{reduced-form} estimate $\underline{\mathcal{P}}$ for DPM's expected productivity gain over Rotation. The estimate only relies on aggregate statistics, making it particularly attractive for fields where econometric or experimental approaches are infeasible. I show that the reduced-form assumptions used to derive $\underline{\mathcal{P}}$ result in a highly conservative estimate. This approach also allows me to characterize DPM allocations under the reduced-form assumptions, and the characterization implies manipulability does not meaningfully harm productivity in equilibrium.

Early simulation results strongly indicate that DPM's effects far exceed $\underline{\mathcal{P}}$. DPM tends to increase productivity by twenty to thirty percentage points with low preference reporting noise, and these effects remain non-negligible (five to ten percentage points) even with extreme noise and between-worker outcome correlation. This result is driven by DPM's theorized Pareto efficiency and shows the advantages of the reduced-form approach to conservatively estimate DPM's expected effect. Overall, these results provide strong evidence that $\underline{\mathcal{P}}$ is a useful, accurate, and conservative prediction for DPM's expected effects that can be generalized to the field. It can be estimated using aggregate statistics from firm-level data with minimal data collection efforts (optional to estimate a correction parameter), and DPM's true expected effects are likely to be substantially larger.

\subsection*{Related Literature}

The balanced task allocation problem intersects fair online division and matching with pseudomarkets. Fair online division (\qcitenp{dickerson2014}; \qcitenp{kurokawa2016}; \qcitenp{he2019}; \qcitenp{gkatzelis2021}; \qcite{bogomolnaia2022}) is the problem of allocating (in)divisible objects that arrive dynamically and must be allocated to a static set of agents. The key considerations are typically variants on envy-freeness or lexicographically maximizing the welfare of the worst-off agent. Unlike these works, I focus on the balance criterion, which is a fundamentally separate concept from envy-freeness.

\qcite{benade2024fair} is the closest work to mine in this strand. Their setting is nearly identical; however, they focus designing algorithms that satisfy approximate Pareto efficiency and envy-freeness. They prove that it is impossible to simultaneously satisfy ex-post efficiency and ex-post envy-freeness up to one good; subsequently, they provide algorithms that achieve some combinations of these properties under different adversarial regimes. The key contribution I provide is designing online allocation mechanisms that satisfy \textit{balance} and efficiency. The differing objectives in our settings lead to different solutions. They apply a linear program to solve for equilibria  that are distinct from balanced equilibria. In particular, I use the insight that agent budgets can be varied and \textit{non-equal} in equilibrium to achieve ex-ante balance. I also simplify and generalize their proof that ex-ante efficiency is stronger than ex-post efficiency. Further, I provide finite-market results for the DPM mechanism and show that it can be flexibly relaxed to achieve either ex-post balance and approximate efficiency or ex-post efficiency and approximate balance. Last, I prove strategic results characterizing DPM's manipulability.

The other related literature is matching with pseudomarkets (\qcitenp{budish2011}; \qcitenp{he2018pseudo}; \qcitenp{echenique2021}; \qcitenp{gul2025pseudo}; \qcitenp{nguyen2025}). Pseudomarkets are matching mechanisms that assign fiat budgets to agents and compute (typically randomized) allocation demands and market-clearing prices. The advantage of the pseudomarket approach is that, unlike naive methods like Serial Dictatorship, they are ex-ante efficient and ex-ante envy-free (I refer the reader to \qcitenp{pycia-2023} for a more detailed literature review). My technique builds on the first, basic method that \qcite{hylland1979} proposes. The main contribution I make is to generalize pseudomarkets to dynamic allocation and to technically prove the existence of a new equilibrium concept---balanced equilibria---that guarantees both ex-ante efficiency and ex-ante balance rather than envy-freeness.

Within economics, \qcite{baron2024mechanism} explicitly addresses using preferences to improve rotation-based allocation mechanisms and is most conceptually similar to this work. They use a mechanism design framework to model the allocation problem as a maximization problem where the designer's objective is productivity, and the \textit{constraint} is agent preferences. They design a mechanism for task allocation that maintains status quo agent welfare while improving productivity and apply it to assigning cases to child abuse investigators. In this, our application scope differs. DPM can be applied to any firms that use Rotation and/or desire balance, which includes assigning cases to investigators, assigning foster children to caseworkers who match them to foster homes, customer service requests to call takers, court cases to public defenders, and more. In addition, their mechanism is not balanced.

Instead, I model balance as the fundamental constraint and treat the objective as efficiency with respect to agent preferences. The possible advantage of modeling balance as the constraint is that firms may be more open to adopting a mechanism that satisfies existing desideratum, especially if balance is adopted for reasons unknown to the economist. This is in line with \qcite{sonmez2023}'s proposal to design mechanisms that minimally change existing institutions. Given the separate frameworks (\qcite{baron2024mechanism} use mechanism design; I use general equilibrium), my approach to solve the problem is similar yet also distinct from theirs. They solve a large market problem (I solve an ex-ante allocation problem) and approximate it with a small market mechanism (I directly implement the ex-ante mechanism and show it exactly implements the ex-ante properties ex-post). Additionally, mechanism design with a multiplicity of object types is generally intractible, forcing them to restrict attention to a two-type model whereas I design a mechanism that allows for completely flexible preferences. Last, while balance is potentially advantageous for minimalist considerations, it is disadvantageous because efficiency, balance, and strategyproofness are incompatible. Their mechanism is strategyproof; mine is not.

In what follows, I detail the model in Section 2 and theoretical results in Section 3. I conclude in Section 4 with future directions.

%% file: Paper/Preliminaries.tex
\section*{2 \hspace{5pt} Preliminaries}

\textbf{Notation.} The space of real numbers is $\mathbb{R}$. $m \in \mathbb{R}^{n \times k}$ denotes a matrix with $n$ rows and $k$ columns. For two vectors $v, v'\in \mathbb{R}^n$ with $n$ rows, the dot product is $v \cdot v' := \sum_{i = 1}^n v_i v'_i$.

\textbf{Model.} There is a finite one-sided market. The agents are $I = \{1 , 2, ..., |I|\}$ with a typical $i \in I$. The tasks are $X = \{1, 2, ..., |X|\}$ with a typical $x \in X$. The time horizon is $[T] = \{1, 2, ..., T\}$. The market unfolds over this finite horizon, and one task $x \in X$ arrives at each time $t$ according to the family of independent distributions $F = (F_1, F_2, ..., F_t, ..., F_{T-1}, F_T)$. I assume that $F_t(x) > 0$ for all $(x,t) \in X \times [T]$ and that arrivals are independent across time. A single-period allocation $q_{i,t} \in R_{+}^{|X|}$ for $i$ is a distribution over $X$ that may be unconstrained. $q_{i,t}^x$ represents the quantity of $x$ that $i$ consumes \textit{conditional on} $x$ arriving in time $t$. A multi-period allocation is a matrix $q_i = (q_{i,1}, q_{i,2}, ... q_{i,T})$. I will denote the consumption set for $i$ as $\mathcal{Q} := R_{+}^{|X| \times T}$ such that $q_i \in \mathcal{Q}$. The market allocation is a multidimensional matrix $q = (q_1, q_2, ..., q_{|I|})$.

I will write $E_F[q_i] := \sum_{t = 1}^T \sum_{x \in X} F_t(x) q_{i,t}^{x}$ to denote the expected number of tasks that $i$ receives under $q_i$. For weights $w = (w^1, w^2, ..., w^{|X|})$ such that $\sum_{x \in X} w^x = 1$, the weight-expected tasks for $i$ is $E_{F,w}[q_i] := \sum_{t = 1}^T \sum_{x \in X} F_t(x) w^x q_{i,t}^{x}$. I will denote the expected weight as $W_F := \sum_{t = 1}^T \sum_{x \in X} F_t(x) w^x$. The set of feasible, deterministic allocations is:
\[\mathcal{A} = \{ a \in \mathcal{Q} : a_{i,t}^{x} \in \{0,1\} \hspace{3pt} \forall x \in X \text{ and } \sum_{i \in I} \sum_{x \in X} a_{i,t}^{x} \leq 1 \hspace{3pt} \forall t \in [T]\}\]
The set of feasible, deterministic allocations up until $k$ is:
\[\mathcal{A}_k = \{ a \in \mathcal{Q} : a \in \mathcal{A} \text{ and } a_{i,t}^x = 0 \text{ if } t > k  \}\]
A feasible, deterministic allocation $a$ assigns an integral task and, at most, one task at each time. The set of \textit{random ex-post allocations} is the set of probability distributions $\Delta(\mathcal{A})$ over $\mathcal{A}$. (In contrast, a market allocation $q$ is a vector of individual probabilistic allocations, which need not only contain feasible, deterministic allocation in its support.). The weight of an ex-post allocation $a_i$ is $W(a_i) = \sum_{t = 1}^T \sum_{x \in X} w^x a_{i,t}^{x}$, and $W(a) = \sum_{i \in I} W(a_i)$. I will refer to $q$ as an ex-ante allocation and $a$ as an ex-post allocation.

An agent's valuation function is $v_i(\cdot) : X \to \mathbb{R}$, the space of valuation functions is $V$, and the vector of valuations is $v = (v_1, v_2, ..., v_{|I|})$. I normalize $v_i(x) > 0$ for all $x \in X$\footnote{Specifically, this is equivalent to assuming that all tasks generate negative utility because of effort costs, and the firm pays agents per-task fees sufficient that each task generates positive utility.}. I assume that $i$'s utility for a bundle is additive so that $i$'s expected utility is $u_i : \Delta(X)^T \to \mathbb{R}$ where $u_i(q_i) = \sum_{t = 1}^T \sum_{x \in X} F_t(x) q_{i,t}^x v_i(x)$. A market is a tuple describing the primitives: $\mathcal{M} = (I, X, [T], F, w, v)$. The set of markets is $\mathbb{M}$.

\textbf{Allocation Mechanisms and Desiderata.} A \textit{random matching mechanism} is $\pi = (\pi_1, \pi_2, ..., \pi_T)$ where $\pi_t : V^{|I|} \times \mathcal{A}_{t-1} \rightarrow \Delta(\mathcal{A}_t)$ assigns a random ex-post allocation for reported preferences given previous allocations. An allocation $q \in \mathcal{Q}$ is \textit{ex-ante balanced} if $E_{F,w}[q_i] = W_F/|I|$ for all $i \in I$. An ex-post allocation $a \in \mathcal{A}$ is \textit{ex-post balanced} if $W(a_i) = W(a)/|I|$ for all $i \in I$\footnote{If $\sum_{i \in I} W(a_i)/|I|$ is fractional, it is simple to change this to require weights in a generic floor and ceiling bound. I will assume that it is integral for simplicity. This will not affect the main results.}. An allocation $q$ is ex-ante efficient if there does not exist any $q'$ such that $u_i(q'_i) \geq u_i(q_i)$ for all $i \in I$ and $u_j(q'_j) > u_j(q_j)$ for some $j \in J$. Let $U_i(a_i) = \sum_{t = 1}^T \sum_{x \in X} a_{i,t}^x v_i(x)$ be $i$'s ex-post utility from an ex-post allocation $a$. $a$ is ex-post efficient if, for all realizations $\tilde a$, there does not exist any $a'$ such that $\sum_{i \in I} a'_i \leq \sum_{i \in I} \tilde a$, $u_i(a'_i) \geq u_i(a_i)$ for all $i \in I$ and $u_j(a'_j) > u_j(a_j)$ for some $j \in J$. A random matching mechanism $\pi$ is ex-post balanced if it always produces balanced allocations, and it is ex-post efficient if it always produces ex-post efficient allocations. Last, $\pi$ is strategyproof if truthfully reporting is weakly dominant for each agent: for any agent $i \in I$, $\pi(v_1, ..., v_i, ..., v_{|I|}) \geq \pi(v_1, ..., \hat{v}_i, ..., v_{|I|})$ for any $\hat{v}_i \in V$.

\textbf{Discussion.} The main simplification of this model is to assume that utility for an allocation is additive expected utility. While this rules out many other interesting cases, I view this as a good starting point for studying balanced task allocation. Many results do not trivially extend without additive expected utility. However, the assumption that utility is identical for objects over time is not necessary for most results and is made for simplicity. The model can accommodate time-discounting if the discounting directly applies to objects rather than allocations.

%% file: Paper/Results.tex
\section*{3 \hspace{5pt} Theoretical Results}

In this section, I detail my theoretical results.

\subsection*{3.1 \hspace{5pt} Rotation}

In this setting, there is a simple "greedy" mechanism aimed at achieving balance. In its simplest form, the Rotation mechanism establishes an order of agents for each time $t$ and allocates the task at $t$ to the agent in the sequence at $t$. The order never repeats an agent before iterating through all agents, and the order is the same after each iteration\footnote{For example, if the agents are $a,b,c$, then the sequence follows $(a, b, c, a, b, c, ...)$}. I describe a generalized version of Rotation that allocates objects to the agent with the least weighted allocation at $t$; this version is ubiquitous in contexts where tasks may vary in their effort costs such as social work case allocation and public defender assignment. \\~\

\textit{Algorithm: Rotation} (Input: $\mathcal{M}$. Initialize $a_{i,t}^x = 0$ for all $i,t,x$.)
\begin{enumerate}
    \item Allocate an item $x$ arriving at time $t$ to $i^*$ satisfying $i^* = \argmin_{i \in I} W(a_i)$. \\~\
\end{enumerate}

The main advantage of Rotation is that it is approximately ex-post balanced. I say that an allocation $a$ is ex-post balanced up to $k$ if the maximum imbalance between any two agents is $k$. Formally: $a$ is ex-post balanced up to $k$ (EPB-$k$) if $\max_{i,j \in I} W(a_i) - W(a_j) \leq k$. Rotation satisfies this for a particular $k^*$.

\begin{proposition}\label{proposition:rotation}
    Rotation is ex-post balanced up to $k^*$, where $k^* = \max_{x \in X} w^x$.
\end{proposition}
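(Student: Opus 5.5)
We argue by induction on the number of periods elapsed. Let $a^{(t)}$ denote the ex-post allocation produced by Rotation after the tasks arriving at times $1,\dots,t$ have been assigned, and write $k^* = \max_{x \in X} w^x$. The invariant we maintain is
\[
D_t := \max_{i,j \in I} \bigl( W(a^{(t)}_i) - W(a^{(t)}_j) \bigr) \leq k^* .
\]
The base case $t=0$ is immediate: every agent starts with weight $0$, so $D_0 = 0 \le k^*$. Note that $k^* > 0$, because the weights are nonnegative and sum to one.

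For the inductive step, assume $D_{t-1} \le k^*$. Suppose the task $x$ arriving at time $t$ goes to $i^* \in \argmin_{i} W(a^{(t-1)}_i)$. Only $i^*$'s weight changes, and it rises by $w^x \le k^*$. Let $m = W(a^{(t-1)}_{i^*})$ be the minimum weight before the assignment, and let $M$ be the maximum weight before the assignment, so that $M - m \le k^*$.

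After the assignment, every agent other than $i^*$ has weight in $[m, M]$, and $i^*$ has weight $m + w^x \in [m, m+k^*]$. We bound the two kinds of pairwise gaps separately.

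\emph{Pairs not involving $i^*$.} These gaps are unchanged, so they are still at most $M - m \le k^*$.

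\emph{Pairs involving $i^*$.} There are two directions to check.
\begin{itemize}
\item If $i^*$ is the heavier agent, then $W(a^{(t)}_{i^*}) - W(a^{(t)}_j) \le (m + w^x) - m = w^x \le k^*$, since every $j$ had weight at least $m$.
\item If $i^*$ is the lighter agent, then $W(a^{(t)}_j) - W(a^{(t)}_{i^*}) \le M - (m + w^x) \le M - m \le k^*$.
\end{itemize}
Hence $D_t \le k^*$, which closes the induction. Taking $t = T$ shows that the final allocation is EPB-$k^*$.

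The argument applies to every realization of arrivals and every tie-breaking rule in the $\argmin$, so Rotation is ex-post balanced up to $k^*$ for all inputs.

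The one point needing care is the pair in which $i^*$ overtakes other agents. The key observation is that $i^*$ was a minimizer, so after receiving the task it can exceed any other agent by at most the weight of the single task it just received, which is at most $k^*$. No other step is delicate.
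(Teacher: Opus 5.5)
Your proof is correct and is essentially the paper's argument. The paper takes the first period $t^*$ at which some gap exceeds $k^*$, notes that the heavier agent must have been the minimizer at $t^*-1$, and concludes $W_{t^*}(a_i) \le W_{t^*-1}(a_j) + k^* \le W_{t^*}(a_j) + k^*$; this is exactly your ``$i^*$ is heavier'' case, phrased as a minimal counterexample rather than an induction. Your explicit handling of the other cases (pairs not involving the recipient, and the recipient as the lighter agent) fills in the step the paper leaves implicit when it asserts that the heavier agent in the violating pair must be the $\argmin$ at $t^*-1$.
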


The ex-post imbalance cannot exceed the maximum weight of one task; this is a strong guarantee. Rotation's disadvantage is that it is not efficient. Worse, it can be arbitrarily ex-post inefficient:

\begin{proposition}\label{proposition:inefficient}
    There exists markets such that Rotation is always ex-post inefficient.
\end{proposition}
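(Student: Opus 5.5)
The plan is to exhibit one explicit small market and show directly that the allocation Rotation produces is Pareto dominated by a swap of realized tasks. Take two agents $I=\{1,2\}$ and two task types $X=\{1,2\}$ with equal weights $w^1=w^2=1/2$. Equal weights make Rotation ignore types: it alternates strictly $1,2,1,2,\dots$ under a fixed tie-break in favour of agent $1$, so agent $1$ receives the tasks arriving at odd $t$ and agent $2$ those at even $t$. Give the agents opposed preferences, $v_1(1)=v_2(2)=2$ and $v_1(2)=v_2(1)=1$, and take $T\ge 2$ even with arrivals i.i.d.\ uniform. The structural feature I will exploit is that Rotation's assignment is a function of the arrival \emph{time} only, never of the arrival \emph{type}.

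The key step is a swap lemma. Suppose in a realization $\tilde a$ some odd $t$ carries type $2$ and some even $s$ carries type $1$. Then exchanging those two tasks between the agents gives an $a'$ with $\sum_i a'_i=\sum_i\tilde a_i$. Each agent's count of tasks is unchanged, and each agent's utility rises by $2-1=1$, so $a'$ strictly Pareto dominates. For the complementary realizations I will check the cases one by one. If both types appear but every odd slot carries type $1$ and every even slot type $2$, then Rotation happens to be efficient.

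This shows that a fixed market cannot be \emph{literally} inefficient at every realization. It gets worse: with $F_t(x)>0$ for all $x,t$, the realization in which every arrival is the same type $x$ has positive probability, and there any full allocation is efficient, since $U_i=n_iv_i(x)$ and $\sum_i n_i$ is fixed. So the main obstacle is the word ``always.'' It cannot mean almost surely over arrivals under the paper's own assumption that $F_t(x)>0$.

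I would therefore establish the strongest version that is true. First, Rotation's induced random allocation is ex-ante inefficient for every realization of its tie-breaking, and hence for every run of the mechanism. Its ex-ante allocation gives each agent $q_{i,t}^x=1$ on its time slots regardless of $x$. Shifting probability mass so that agent $1$ receives type $1$ and agent $2$ receives type $2$ at matched slots keeps feasibility and strictly raises both $u_i$. Second, the probability of ex-post inefficiency is $1-O(2^{-T})$. The swap lemma applies unless odd slots are all type $1$, or even slots are all type $2$, or all arrivals are a single type, and each of these events has probability at most $2^{-T/2}$. So Rotation is inefficient with probability tending to one. I would state explicitly which reading of ``always'' this establishes.
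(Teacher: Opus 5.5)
Your proposal is correct and follows the paper's proof. Both use the same two-agent, two-type market with opposed preferences and uniform arrivals, with Rotation alternating by time slot. Both then swap a disliked odd-slot task for a disliked even-slot task. That swap is available with probability $(1-2^{-T/2})^2 = 1-O(2^{-T/2})$, not $1-O(2^{-T})$ as you wrote, and this is exactly the asymptotic sense of ``always'' that the paper's argument delivers.

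Your version is slightly cleaner than the paper's in two places. You fix equal weights, so the weighted Rotation rule really does alternate. You also correctly require both events to hold (agent 1 holds some type-2 task and agent 2 holds some type-1 task), whereas the paper writes their union.
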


Rotation fully ignores preferences. Hence, its worst-case performance is guaranteed to be achieved in some problems and generates Pareto-improving trades for \textit{all agents} and \textit{all tasks}. However, it is not possible to achieve optimal performance with respect to both ex-post efficiency and ex-post balance.

\begin{proposition}\label{proposition:expost}
    An ex-post efficient and ex-post balanced up to $k^*$ random mechanism does not exist.
\end{proposition}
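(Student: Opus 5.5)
We prove the impossibility with a small counterexample market, taking ex-post efficiency and EPB-$k^*$ to mean exactly what they mean in the paper. The idea is that a mechanism cannot know future arrivals. Early on it must either allocate efficiently, and so risk imbalance exceeding $k^*$ later, or keep balance, and so commit to an allocation that turns out to be Pareto-dominated once later tasks are realized. Since every $F_t(x)>0$, every arrival sequence occurs with positive probability. It therefore suffices to exhibit one realization on which every deterministic allocation in the support of $\pi$ fails one of the two properties.

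\textbf{Construction.} Take $I=\{1,2\}$ and $X=\{x,y\}$ with $w^x=w^y=1/2$, so $k^*=1/2$, and set $T=2$. Let agent $1$ strictly prefer $x$ and agent $2$ strictly prefer $y$: say $v_1(x)=2$, $v_1(y)=1$, $v_2(x)=1$, $v_2(y)=2$.

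\textbf{Step 1: the forced allocation under $(x,x)$.} Consider the realized arrival sequence $(x,x)$. For EPB-$1/2$ with total weight $1$, each agent must receive exactly one task, because giving both tasks to one agent creates imbalance $1>1/2$. The only balanced outcome is therefore one $x$ to each agent. This outcome is not Pareto-dominated among allocations of $\{x,x\}$, so this realization alone gives no contradiction.

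\textbf{Step 2: a realization that breaks both properties.} I instead want a realization where balance forces an agent to take her disliked task while the other agent holds her own disliked task, so that a swap is a strict Pareto improvement that stays balanced. The realization $(y,x)$ with $y$ given to agent $1$ and $x$ to agent $2$ is of this form, but there the mechanism can avoid the problem by allocating $y$ to agent $2$ at $t=1$. To bind the mechanism, use $T=2$ with equal weights and let agent $1$ value both tasks highly, for instance $v_1(x)=v_1(y)=10$, while agent $2$ prefers $y$. The plan is then to find a realization in which some agent must receive a dominated bundle. Whether such a realization exists in this form is the point I have not yet checked.

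\textbf{Step 3: reduce to a case check.} The main obstacle is that the paper's definition of ex-post efficiency compares against reallocations of the realized tasks. Under that definition the sequential nature of the mechanism is irrelevant: $\pi$ can simply pick a balanced allocation that is efficient with respect to the realized tasks, if one exists. The proof therefore reduces to exhibiting a multiset of realized tasks for which \emph{no} EPB-$k^*$ allocation is Pareto efficient among all allocations of those tasks. Every balanced allocation must be dominated by some allocation, which is necessarily unbalanced, since swaps between balanced allocations would otherwise escape.

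For the construction I will use unequal weights. Take $w^x=0.9$ and $w^y=0.1$, so $k^*=0.9$, and let the realization be $\{x,y,y,\dots\}$ with many $y$'s. Balance up to $0.9$ then forces the $y$'s to be split roughly evenly. Now suppose agent $2$ values $y$ at essentially zero and agent $1$ values $y$ highly. Giving all $y$'s to agent $1$ and $x$ to agent $2$ is a candidate Pareto improvement over every balanced allocation. To make it strict for both agents, I will pick the valuations so that agent $2$ strictly prefers a lone $x$ to her balanced share, i.e.\ $v_2(x)$ is large and $v_2(y)$ is small. Verifying this inequality, for every balanced allocation and for both possible holders of $x$, is the routine calculation that closes the proof.
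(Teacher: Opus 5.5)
Your proposal does not reach a proof, and the reduction in Step 3 aims at something false. The error is the claim that the sequential nature of the mechanism is irrelevant. In the model $\pi_t$ sees only the reports and the past allocation in $\mathcal{A}_{t-1}$, so each task is committed before later arrivals are known. The mechanism therefore cannot ``pick a balanced allocation that is efficient with respect to the realized tasks.'' That commitment is the whole source of the impossibility, and the static counterexample you seek does not exist with two agents. To see this, list the realized tasks in decreasing order of $v_1(\cdot)/v_2(\cdot)$, give the first $j$ to agent $1$ and the rest to agent $2$. Each such split maximizes $\lambda U_1+(1-\lambda)U_2$ for some $\lambda\in(0,1)$, so it is Pareto efficient. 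As $j$ increases, $W(a_1)-W(a_2)$ rises from $-W(a)$ to $W(a)$ in steps of at most $2k^*$, so some split is EPB-$k^*$.

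Your construction fails in exactly this way. Take $w^x=0.9$, $w^y=0.1$, $m$ copies of $y$, and agent $1$ holding the comparative advantage in $y$ as you intend. Giving $x$ to agent $2$ and $k$ copies of $y$ to agent $1$, with $m/2\le k\le m/2+9$, is EPB-$0.9$ and is such a split, hence efficient. Your candidate improvement takes $m-k$ copies of $y$ from agent $2$, who already holds $x$ and cannot be compensated. It leaves her strictly worse off whenever $v_2(y)>0$, which the normalization $v_i>0$ forces. Step 2 stalls for a related reason. In both of your equal-weight parametrizations, the mechanism that gives $x$ to agent $1$ and $y$ to agent $2$ at $t=1$, and the $t=2$ task to the other agent, is balanced and efficient on every realization.

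The missing idea is to make the first-period commitment itself the trap. You need a heavy task that both agents rank second, so that whoever receives it, balance forces the next task to the other agent, and each choice is punished by some positive-probability arrival. The paper takes $X=\{x,y,z\}$ with weights $1/5,1/5,3/5$ (so $k^*=3/5$), $T=2$, $v_1=(10,1,5)$ and $v_2=(1,10,5)$, and conditions on $z$ arriving at $t=1$. Suppose $z$ goes to agent $1$ and $x$ arrives next. Giving $x$ to agent $1$ as well yields imbalance $4/5>3/5$, so agent $2$ gets $x$, and swapping $x$ and $z$ is a strict Pareto improvement ($10>5$ for agent $1$, $5>1$ for agent $2$). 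The case where $z$ goes to agent $2$ and $y$ arrives next is symmetric. Since $\pi_1$ cannot condition on the $t=2$ arrival and every arrival has positive probability, $\pi$ produces an inefficient allocation with positive probability. In both bad realizations the swapped allocation is itself balanced and efficient. That is precisely why only an argument exploiting the online constraint can succeed.
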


The approach that I take is to relax balance rather than eschew ex-post efficiency. In the context of public defender case allocation, agencies recognize that weights are not precise, and exact fairness might not be attainable, even if the allocation is ex-post balanced (\qcitenp{jacoby1985}). This motivates the idea that balance is a flexible criterion that can tolerate some degree of violation if the asymmetries are not too extreme. I show that this idea is fruitful in the Dynamic Pseudomarket Mechanism and discuss microfoundations for this in the Extensions.

\subsection*{3.2 \hspace{5pt} Efficiency and Balance}

I design the Dynamic Pseudomarket Mechanism (DPM) to allocate tasks in an efficient and balanced manner. DPM relies on a novel equilibrium concept: balanced equilibria.

\textbf{Balanced Equilibria.} A single-period price vector $p_t \in \mathbb{R}^{|X|}$ describes a price for each task at time $t$. The price matrix is $p = (p_1, p_2, ..., p_T)$. A purchasing power ratio for $i$ is $r_i \in \mathbb{R}$, and the purchasing power ratio vector is $(r_1, r_2, ..., r_{|I|})$; I will also refer to $i$'s implied budget as $b_i := 1/r_i$. A balanced equilibrium is a tuple $(p^*, r^*, q^*)$ such that, for all $i \in I$:
\begin{enumerate}
    \item $r_i^* p^* \cdot q_i^* \leq 1$
    \item $u_i(q_i) > u_i(q_i^*)$ implies that $r_i^* p^* \cdot q_i > 1$ for each $q_i \in \mathcal{Q}$.
    \item $q^*$ is ex-ante balanced.
\end{enumerate}
An equilibrium is \textit{market clearing} if $p^*_{x,t} > 0$ implies that $\sum_{i \in I} q_{i,t}^{*,x} = 1$ for all $(x,t) \in X \times [T]$.

The main complication is that it is not clear if an (ex-ante) efficient, balanced, and market-clearing equilibrium exists. I constructively prove that such equilibria do exist. The idea is analogous to Robin-Hood mechanisms: by taking from the "rich" (allocations with large weight) and giving to the "poor" (allocations with small weight), the market travels along the Pareto frontier toward a balanced allocation. The market can do so by endogenously adjusting agent purchasing power through their ratios. 

\begin{theorem}\label{thm:existence}
    \textit{(Existence)} There exists a balanced equilibrium satisfying market clearing.
\end{theorem}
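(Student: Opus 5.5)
The plan is to reduce existence of a balanced equilibrium to a fixed point over budgets, built on Hylland--Zeckhauser/Fisher-market existence. Because utilities are linear and every $v_i(x)>0$, the problem can be rescaled to a linear Fisher market. Set the goods to be the pairs $(x,t)$ with unit supply, and let agent $i$ value $(x,t)$ at $F_t(x)v_i(x)$. This works because $q_{i,t}^x$ is a conditional share, so the total quantity of $(x,t)$ available across agents is $1$. For any strictly positive budget vector $b=(b_1,\dots,b_{|I|})$, the Eisenberg--Gale convex program gives a market-clearing competitive equilibrium $(p(b),q(b))$ in which every good has a positive price and is fully allocated. It satisfies conditions 1 and 2 with $r_i=1/b_i$, and it is Pareto efficient. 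The equilibrium utilities and prices are unique and continuous in $b$ (Berge's maximum theorem applied to the strictly concave-in-utilities program). The allocations may not be unique, so the correspondence $b\mapsto q(b)$ is upper hemicontinuous with convex values.

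The next step is to search over normalized budgets $b$ in the simplex $\Delta$ for one at which the allocation is ex-ante balanced. Define $g_i(q)=E_{F,w}[q_i]-W_F/|I|$. Market clearing gives $\sum_i g_i(q)=0$, since every $(x,t)$ is fully allocated. I would then use a Robin-Hood tatonnement: take budget from agents whose weight exceeds the balanced share and give it to those below. Concretely, the correspondence is
\[\Phi(b)=\Big\{ b' : b'_i=\tfrac{\max(b_i-\epsilon g_i(q),\,\delta)}{\sum_j \max(b_j-\epsilon g_j(q),\,\delta)},\ q\in q(b)\Big\},\]
or a convexified version of it, and I would apply Kakutani on a truncated simplex $\Delta_\delta=\{b\in\Delta: b_i\ge\delta\}$. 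A fixed point gives $g_i(q)=c$ for all $i$ with $b_i>\delta$, where $c$ is a common constant. Together with $\sum_i g_i=0$, this yields balance, provided no coordinate sits at the floor $\delta$.

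The main obstacle is this boundary step: showing that an agent with tiny budget cannot have weight above her share. The intended argument is that as $b_i\to 0$, her equilibrium expenditure $p\cdot q_i=b_i\to 0$. All prices are bounded below on a compact set of the remaining budgets, because every good has a positive value for every agent. It then follows that her consumption, and hence her weight $E_{F,w}[q_i]$, tends to $0<W_F/|I|$. So for $\delta$ small enough, $g_i<0$ whenever $b_i=\delta$, and the map pushes $b_i$ up. This rules out fixed points on the boundary.

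I expect the delicate part to be making that uniform price lower bound rigorous. I would try the bound $p_{x,t}\ge \min_j b_j\,F_t(x)v_j(x)/\max_j u_j$, which comes from the bang-per-buck conditions. If it fails near the boundary, I would fall back to a limiting argument, letting $\delta\to0$ and taking convergent subsequences. Along such a subsequence the budgets cannot concentrate on a strict subset of agents: the excluded agents would receive zero weight, while the included agents must absorb all of $W_F$. Ruling this out still needs an argument that the excluded agents' weight stays bounded away from $W_F/|I|$ along the limit.
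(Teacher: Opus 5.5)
\textbf{Assessment.} Your route is viable and genuinely different from the paper's, but the boundary step you flag as delicate fails as written, and the fixed-point bookkeeping needs repair.

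\textbf{The boundary step.} The paper runs a single joint price-and-ratio adjustment with demand truncated at $K$. It then has to prove by hand that prices are positive and below $\bar p$, that markets clear, and that truncated demand is true demand (Claims 1--3 and 6). Solving the inner linear Fisher market by Eisenberg--Gale gives you all of that for free, leaving a fixed point over budgets only. The problem is your bound $p_{x,t}\ge \min_j b_jF_t(x)v_j(x)/\max_j u_j$. It carries the factor $\min_j b_j\le\delta$ on exactly the face where you need it, so it only yields prices of order $\delta$. The floor agent's consumption is then bounded only by $b_i/p_{x,t}=O(1)$, which does not push her weight below $W_F/|I|$. Nor do the remaining budgets stay in a compact set away from zero, since several agents can sit at the floor.

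The missing observation is to use the \emph{richest} agent. Since $\sum_jb_j=1$, some $j$ has $b_j\ge 1/|I|$. Bang-per-buck optimality gives $p_{x,t}\ge b_jF_t(x)v_j(x)/u_j$, and feasibility gives $u_j\le\sum_{s,y}F_s(y)v_j(y)$. Hence every price is at least $\underline p:=\min_{j,s,y}F_s(y)v_j(y)/(|I|\max_j\sum_{s,y}F_s(y)v_j(y))>0$, uniformly on the whole simplex. It follows that $E_{F,w}[q_i]\le b_i\max_xw^x/\underline p$, so $g_i<0$ whenever $b_i<\underline p\,W_F/(|I|\max_xw^x)$. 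No limiting argument is needed. This is the contrapositive of the paper's Claim 4: there, an over-weighted agent at the budget floor $1/\bar r$ must hold more than $1/|I|$ of some good, forcing its price down to order $1/\bar r$, which a budget-one agent would then over-demand.

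\textbf{The fixed-point map.} Because $\sum_ig_i=0$ and $\sum_ib_i=1$, the normalizer $S=\sum_j\max(b_j-\epsilon g_j,\delta)$ satisfies $S\ge 1$, with equality iff no coordinate is clamped. This causes two problems.
\begin{itemize}
\item When $S>1$, your map sends clamped coordinates to $\delta/S<\delta$, so it leaves $\Delta_\delta$.
\item At a fixed point, an unclamped coordinate satisfies $\epsilon g_i=(1-S)b_i$, not $g_i=c$.
\end{itemize}
Convex-valuedness is also not automatic, because the normalization is nonlinear, and convexifying $\Phi$ would break the fixed-point analysis.

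Both issues are fixed by applying Kakutani to $(b,z)\mapsto\{\phi(b,z)\}\times G(b)$ on $\Delta_{\delta/\bar S}\times[-W_F,W_F]^{|I|}$. Here $\phi$ is your single-valued update with $z$ in place of $g(q)$, and $G(b)=\{g(q):q\in q(b)\}$ is convex because $g$ is affine in $q$. The constant $\bar S=1+|I|(\delta+\epsilon W_F)$ bounds $S$, so $\phi$ maps into $\Delta_{\delta/\bar S}$. The fixed-point argument then runs as follows.
\begin{itemize}
\item At a fixed point, $z\in G(b)$ gives $\sum_iz_i=0$.
\item With $\delta<1/|I|$, not every coordinate is clamped, since that would give $S=|I|\delta<1$.
\item If $S>1$, every unclamped agent has $z_i<0$. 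Hence some clamped agent, whose budget is $\delta/S<\delta$, has $z_i>0$.
\item Taking $\delta$ at most the threshold $\underline p\,W_F/(|I|\max_xw^x)$ contradicts the corrected boundary estimate.
\end{itemize}
Hence $S=1$, nothing is clamped, and $z=0$. The Eisenberg--Gale equilibrium at $b$, with $r_i=1/b_i$, is therefore a market-clearing balanced equilibrium.
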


Technically, I build on the pseudomarket approach by showing that aggregate demand is continuous even when varying purchasing power. Consequently, I can use fixed point theorems to prove existence. Together with the following result, this pins down the existence of ex-ante efficient and balanced allocations.

\begin{theorem}\label{thm:first-welfare}
    \textit{(First Welfare Theorem)} Any allocation $q^*$ for a balanced equilibrium satisfying market clearing is ex-ante efficient.
\end{theorem}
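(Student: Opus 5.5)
The argument will be the standard First Welfare Theorem argument by contradiction, adapted to individual purchasing power ratios. Suppose $(p^*, r^*, q^*)$ is a balanced equilibrium satisfying market clearing, and suppose some $q'$ Pareto dominates $q^*$: $u_i(q'_i) \geq u_i(q_i^*)$ for all $i$, with strict inequality for some $j$. For the argument to have content, $q'$ must be feasible ex ante, meaning $\sum_{i \in I} q'^{x}_{i,t} \leq 1$ for every $(x,t)$. I will read the definition of ex-ante efficiency as ranging over such feasible $q'$.

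\textbf{Step 1: value the dominating allocation at equilibrium prices.} Condition 2 of the equilibrium immediately gives $r_j^* p^* \cdot q'_j > 1$ for the strictly better-off agent $j$. For the weakly better-off agents, I need the usual local non-satiation step. Since $v_i(x) > 0$ and $F_t(x) > 0$, utility is strictly increasing in every coordinate. If $r_i^* p^* \cdot q'_i < 1$, I can scale up some coordinate of $q'_i$ slightly while staying within the budget, which strictly raises utility above $u_i(q_i^*)$ and contradicts condition 2. Hence $r_i^* p^* \cdot q'_i \geq 1$ for all $i$.

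One subtlety in this step: if $p^*$ has a zero coordinate, agents would demand unboundedly, so the equilibrium itself forces $p^* > 0$ componentwise whenever $r_i^* > 0$. I will record this first as a lemma. It also gives $r_i^* > 0$, since otherwise condition 2 fails for any $q_i$ with higher utility.

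\textbf{Step 2: translate back to common prices.} Because budgets differ across agents, I cannot simply sum the inequalities $r_i^* p^* \cdot q'_i \geq 1$ against $r_i^* p^* \cdot q_i^* \leq 1$. Instead I compare agent by agent. The same non-satiation argument applied to $q_i^*$ shows that every equilibrium bundle exhausts the budget, so $r_i^* p^* \cdot q_i^* = 1$. Dividing by $r_i^* > 0$ then gives $p^* \cdot q'_i \geq p^* \cdot q^*_i$ for all $i$, with strict inequality for $j$. Summing over $i$ yields
\[ \sum_{i\in I} p^* \cdot q'_i > \sum_{i \in I} p^* \cdot q_i^*. \]

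\textbf{Step 3: contradict feasibility via market clearing.} Market clearing together with $p^* > 0$ (or, more generally, positivity of prices exactly on the cleared coordinates) gives
\[ \sum_{i \in I} p^* \cdot q_i^* = \sum_{(x,t)} p^*_{x,t}. \]
Feasibility of $q'$ gives $\sum_{i \in I} p^* \cdot q'_i \leq \sum_{(x,t)} p^*_{x,t}$. Combining these with Step 2 produces a contradiction.

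The main obstacle is Steps 1 and 2, which need positivity of $r_i^*$ and the well-posedness of the budget. A zero price would make demand unbounded and the budget degenerate, so I must show that a balanced equilibrium has strictly positive prices and ratios. This follows from condition 2 together with strictly positive $v_i$ and $F_t$.

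Note that ex-ante balance, condition 3, plays no role in the argument. Efficiency here is unrestricted Pareto efficiency, not efficiency among balanced allocations.
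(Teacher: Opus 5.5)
Your proposal is correct and follows essentially the same route as the paper: local non-satiation shows weakly preferred bundles cost at least as much as $q_i^*$ at $p^*$, strictly preferred bundles are unaffordable, and summing over agents contradicts feasibility of the dominating allocation through market clearing. You are more careful than the paper in explicitly establishing $p^*>0$ and $r_i^*>0$, which the paper uses implicitly when it divides by $r_i^*$; your budget-exhaustion step is harmless but unnecessary, since $r_i^* p^*\cdot q'_i \geq 1 \geq r_i^* p^*\cdot q_i^*$ already gives the agent-by-agent comparison.
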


Theorems \ref{thm:existence} and \ref{thm:first-welfare} simultaneously imply the existence of efficient and balanced allocations for divisible good markets since task probabilities can be thought of as divisble objects. Additionally, a simple reduction of this model to the $T = 1$ case also demonstrates that efficient and balanced probabilistic allocations must exist for static, indivisible allocation problems. To my knowledge, both of these results are novel. Theorems \ref{thm:existence} and \ref{thm:first-welfare} provide a foundation for DPM.

The \textit{Dynamic Pseudomarket (DPM)} mechanism is the following algorithm: \\~\

\textit{Algorithm: Dynamic Pseudomarket} (Input: $\mathcal{M}$)
\begin{enumerate}
    \item Compute a balanced equilibrium $(p^*, r^*, q^*)$ for the market $\mathcal{M}$.
    \item Allocate an item $x$ arriving at time $t$ to $i$ with probability $q_{i,t}^x$. \\~\
\end{enumerate}

In principle, one might imagine that an ex-ante efficient allocation $q$ can fail to be implementable as a feasible, ex-post efficient allocation. The dynamic uncertainty that arrivals cause could prevent the central planner from realizing an efficient allocation; a task $x$ allocated at time $t$ to $i$ might be preferred by an agent $j$, and a task $y$ allocated at $t+1$ to $j$ might be preferred by $i$. In this case, DPM could be ex-ante efficient yet fail to produce ex-post efficient allocations. However, this intution, surprisingly, is \textit{false}.

\begin{theorem}\label{thm:efficient}
    \textit{(Efficiency)} DPM is ex-post efficient.
\end{theorem}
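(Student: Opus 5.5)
The plan is to derive ex-post efficiency from ex-ante efficiency (Theorem \ref{thm:first-welfare}) by contradiction. Any Pareto-improving reshuffle of a realized ex-post allocation can be "lifted" to a small perturbation of the equilibrium allocation $q^*$ that Pareto dominates $q^*$ ex ante. Throughout, let $(p^*,r^*,q^*)$ be the market-clearing balanced equilibrium computed in step 1 of DPM, which exists by Theorem \ref{thm:existence}. By Theorem \ref{thm:first-welfare}, $q^*$ is ex-ante efficient among all allocations in $\mathcal{Q}$. Balance plays no role in this argument; only the efficiency of $q^*$ is used.

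\textbf{Setup.} Fix any realization $\tilde a$ that DPM produces with positive probability. At each $t$ the arriving task $x_t$ went to some agent $i_t$, and this happened with positive probability, so $q^{*,x_t}_{i_t,t}>0$. Suppose toward contradiction that there is an ex-post allocation $a'$ using only the realized tasks ($\sum_i a'_i\le\sum_i\tilde a_i$) that weakly improves every agent and strictly improves some $j$. Because $a'$ hands out only the realized tasks, at each $t$ it either gives $x_t$ to a single agent $j_t$ or discards it.

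\textbf{Perturbation.} For small $\varepsilon>0$ define $q'$ from $q^*$ as follows. For each $t$, move mass $\varepsilon/F_t(x_t)$ of coordinate $(x_t,t)$ from $i_t$ to $j_t$; if $a'$ discards $x_t$, simply remove that mass from $i_t$. All other coordinates are unchanged. The rescaling by $1/F_t(x_t)$ is the key choice. It exactly cancels the arrival probability in $u_k$, so that for every agent $k$,
\[
u_k(q'_k)-u_k(q^*_k)=\varepsilon\bigl(U_k(a'_k)-U_k(\tilde a_k)\bigr).
\]
Hence $q'$ weakly improves everyone and strictly improves $j$.

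\textbf{Feasibility and the main obstacle.} I expect the main obstacle to be verifying that $q'$ is a legitimate allocation. Nonnegativity holds once $\varepsilon\le\min_t F_t(x_t)\,q^{*,x_t}_{i_t,t}$, and this minimum is strictly positive by the positivity noted in the setup together with $F_t(x)>0$. Aggregate supply $\sum_i q'^{x}_{i,t}\le 1$ is preserved, since mass is only transferred or removed. If the same agent is $i_t$ at several dates, the cumulative decrease must still be checked, but the decreases fall on distinct coordinates $(x_t,t)$, so each coordinate is reduced only once. I would also check that the ex-ante efficiency notion in Theorem \ref{thm:first-welfare} ranges over all feasible $q'$, not only balanced ones, which the definition in Section 2 permits. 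With feasibility established, $q'$ Pareto dominates $q^*$ ex ante, contradicting Theorem \ref{thm:first-welfare}. Therefore every realization of DPM is ex-post efficient.
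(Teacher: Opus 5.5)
Your proof is correct and is essentially the paper's argument. The paper first shows (Step 1) that $q^*$ is also Pareto efficient with respect to the unweighted utilities $U_i$, via the same $1/F_t(x)$ rescaling, and then (Step 2) lifts the ex-post improvement to the perturbation $q^* + \lambda(a - a^*)$, using $q^{*,x}_{i,t} > 0$ on realized assignments for feasibility; your single perturbation of size $\varepsilon/F_t(x_t)$ on each realized coordinate is just the composition of these two steps, which makes the argument slightly more direct.
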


The proof for ex-post efficiency is a generalization and simplification of \qcite{benade2024fair} to allow for non-identical distributions over time. In the Pareto-improving trade above, under these two (implicit) assumptions, $i$ and $j$ should trade all of their probability shares for the respective objects so that $i$ consumes $x$ with zero probability and $j$ consumes $y$ with zero probability. The very fact that the ex-post allocation is inefficient means that the ex-ante allocation must also be inefficient. I illustrate this with a simple example.

\begin{example}\label{example:ex-post}
    The market $\mathcal{M}$ is:
    \[I = \{1,2\} \quad X = \{x, y\} \quad T = 2\]
    where $F(x) = 1 - \epsilon$ and $F(y) = \epsilon$ are time-independent. Preferences are:
    \begin{center}
        \begin{tikzpicture}
            \matrix[payoffmatrix] (m) {
                & $x$ & $y$ \\
                $v_1(\cdot)$ & $1$ & $2$  \\
                $v_2(\cdot)$ & $1000$ & $1$ \\
            };
        \end{tikzpicture}
    \end{center}
    The primary concern in this example is that, if $\epsilon$ is small enough (for example, $\epsilon = 1/4$), then both agents have the same ordinal preferences over objects \textit{ex-ante}:
    \[(1 - \epsilon) v_1(x) > \epsilon v_1(y) \quad \text{ and } (1 - \epsilon) v_2(x) > \epsilon v_2(y)\]
    Consequently, one might imagine the following ex-ante allocation $q$: give all of $x$ to agent $1$ and all of $y$ to agent $2$ (in both time periods). It is plausible to think that this is ex-ante efficient because agent $1$ ex-ante prefers one unit of $x$ over one unit of $y$, whereas agent $2$ can only be made better off if $x$ is taken from agent $1$ and given to agent $2$. In this case, the ex-ante allocation would have an ex-post inefficient allocation in its support: whenever $x$ and $y$ arrive over the finite time horizon, $1$ gets $x$ and $2$ gets $y$. Ex-post, the two agents would trade objects.
    
    Indeed, this would be true if the objects were indivisible ex-ante. However, because the probability shares are divisible, the agents can trade incrementally. To see this, note that utilities under the proposed allocation are:
    \[u_1(q_1) = 2(1 - \epsilon) \quad \text{ and } \quad u_2(q_2) = 2\epsilon\]
    If, instead, one considers an allocation $q^*$ where agent $1$ receives $\alpha$ probability shares of $x$ and $y$ with probability one, whereas agent $2$ receives $1 - \alpha$ of $x$ and zero of $y$:
    \[u_1(q_1^*) = 2\big[ (1 - \epsilon)\alpha + 2\epsilon \big] \quad \text{ and } \quad u_2(q_2^*) = 2\big[ 1000(1 - \epsilon)(1 - \alpha) \big]\]
    At $\epsilon = 1/4$ and $\alpha = 1/3$, it can be verified that $u_1(q_1^*) = u_1(q_1) = 3/2$ and $u_2(q_2^*) = 1000 > u_2(q_2) = 1/2$. Therefore, $q$ is not ex-ante efficient. Moreover, it must be the case that agent $2$ receives $y$ with zero probability, which fully pre-empts any ex-post Pareto improving trades.

    More technically, ex-ante efficiency is stronger than ex-post efficiency. Any ex-ante efficient allocation is also efficient under ex-post utility. By the Birkhoff von-Neumann Theorem, one can decompose the ex-ante efficient allocation into a randomization over ex-post allocations. If any ex-post allocation was inefficient, then one could replace it with an ex-post efficient allocation, yielding also an improvement on the ex-ante allocation. The difficulty of the proof is to show that this replacement maintains a feasible ex-ante allocation, and the conclusion follows as, if such a replacement existed, this would contradict ex-ante efficiency.
\end{example}

Theorem \ref{thm:efficient} generally only holds under unconstrained demand and additive expected utility. Further complications such as demand constraints or combinatorial preferences could lead to different results. An example demand constraint is agents that can only be assigned one object. The resulting ex-ante allocation implemented by DPM would not be equivalent to the ex-ante allocation that DPM computes; dynamic uncertainty could then lead to discrepancies in the distributions and prevent efficiency. One instance of the second assumption might be if agents discount time at heterogeneous rates for different allocations (e.g. an agent is willing to wait for her ideal allocation but not for an allocation that is marginally different). An ex-post Pareto improving trade may not be a feasible ex-ante Pareto improving trade if the trade involves swapping objects intertemporally. As a result, an ex-ante efficient allocation could contain allocations that are ex-post inefficient.

The next question is: what guarantees does ex-ante balance provide? Proposition \ref{proposition:expost} shows that ex-post balance is generically impossible for a Pareto efficient random mechanism. Further, I argue that ex-ante balance and ex-post Pareto efficiency are insufficient on their own. A dictatorial mechanism that uniformly randomly picks a winner at the beginning of time and allocates every task to the winner is immediately ex-post Pareto efficient and ex-ante balanced, but it is obviously infeasible and undesirable. Only one worker in the firm would ever perform any work. Hence, a stricter criteria is necessary.

The next result shows that DPM is \textit{asymptotically ex-post balanced} in the sense that, as the time horizon becomes large, the average amount of imbalance in the realized allocation vanishes almost surely. Formally, I denote the $k$-extension (for an integer $k$) of $\mathcal{M}$ as $\mathcal{M}^k = (I, X, [kT], F^k, w, v)$ where $F^k = (F
^1, F^2, ..., F^{k})$ satisfies $F^i = F$, so that $\mathcal{M}^k$ is constructed by replicating the length and arrivals $k$ times. For a randomized mechanism $\pi$, let $q^k = \pi(\mathcal{M}^k)$. I say that $\pi$ is asymptotically ex-post balanced if, for any $\epsilon > 0$, for each $i \in I$:
\[\lim_{k \rightarrow \infty} \Pr\Bigg\{ \bigg| \frac{1}{k} \sum_{t = 1}^{kT} \sum_{x \in X} w^x q_{i,t}^{k,x} - \frac{W_F}{|I|} \bigg| < \epsilon \Bigg\} = 1\]
Asymptotic ex-post balance guarantees that each agent's \textit{average} allocation weight is balanced. For example, one can be almost sure that the average number of calls that a customer service employee takes over a year will be the same across all employees. However, this does not prevent imbalance within a day---an employee could face more or less calls on a day-to-day basis. One might be interested in the finite-time expected balance if imbalance in daily operations is costly. I refer to this as a mechanism's \textit{expected imbalance} $\pi_W$:
\[\pi_W = E\bigg[ \frac{1}{|I|} \sum_{i \in I} \Big| W(\pi_i) - \frac{W_F}{|I|} \Big| \bigg]\]

I can now state my main result.

\begin{theorem}\label{thm:asymptotic-balance}
    \textit{(Fairness)} DPM is ex-ante balanced and asymptotically ex-post balanced. In addition: the upper bound for expected imbalance is $O(\sqrt{T/|I|})$.
\end{theorem}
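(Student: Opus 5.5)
Ex-ante balance needs no new argument: DPM draws its allocation from a balanced equilibrium $(p^*,r^*,q^*)$ supplied by Theorem \ref{thm:existence}, and condition 3 of that definition states $E_{F,w}[q^*_i]=W_F/|I|$ for every $i$. The ex-post statements rest on one observation. Under DPM, agent $i$'s realized weight is a sum of independent bounded random variables, $W(\pi_i)=\sum_{t=1}^T Z_{i,t}$, where $Z_{i,t}=w^{x_t}\mathbbm{1}\{i \text{ receives } x_t\}$. The task $x_t$ arrives according to $F_t$, independently across $t$, and given $x_t$ the recipient is drawn with probabilities $q^{*,x_t}_{i,t}$, with fresh randomness each period. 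Hence $E[Z_{i,t}]=\sum_x F_t(x)w^x q^{*,x}_{i,t}$, and summing over $t$ gives $E[W(\pi_i)]=E_{F,w}[q^*_i]=W_F/|I|$. Each $Z_{i,t}$ lies in $[0,\max_x w^x]\subseteq[0,1]$ because the weights sum to one.

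For asymptotic balance, I take the $k$-extension $\mathcal{M}^k$. Its balanced equilibrium gives $E[W(\pi^k_i)]=W_{F^k}/|I|=kW_F/|I|$, and $W(\pi^k_i)$ is a sum of $kT$ independent terms in $[0,1]$. Chebyshev (or Hoeffding) then bounds
\[\Pr\Big\{\big|\tfrac1k W(\pi^k_i)-\tfrac{W_F}{|I|}\big|\ge\epsilon\Big\}\le \frac{\mathrm{Var}(W(\pi^k_i))}{k^2\epsilon^2}\le \frac{kT}{k^2\epsilon^2}\to0 .\]
This holds for any equilibrium selected in $\mathcal{M}^k$; the selection need not be a replica of the equilibrium of $\mathcal{M}$, since only the mean and independence are used.

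For expected imbalance, I apply Jensen: $E|W(\pi_i)-W_F/|I||\le\sqrt{\mathrm{Var}(W(\pi_i))}$. Using $\mathrm{Var}(Z_{i,t})\le E[Z_{i,t}^2]\le \max_x w^x\,E[Z_{i,t}]\le E[Z_{i,t}]$ and independence across $t$,
\[\mathrm{Var}(W(\pi_i))\le \sum_t E[Z_{i,t}]=W_F/|I|\le T/|I|,\]
where the last step uses $W_F\le T$. Averaging over $i$ gives $\pi_W\le\sqrt{T/|I|}$, which is $O(\sqrt{T/|I|})$.

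The main obstacle is the variance step: one must bound the variance by the mean, not by the number of periods. A naive Hoeffding bound gives only $O(\sqrt{T})$. The $|I|$ in the denominator comes from using $Z^2\le (\max_x w^x) Z$ together with ex-ante balance, which forces the expected total weight of each agent to be $W_F/|I|$.
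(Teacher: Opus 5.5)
Your proposal is correct. On ex-ante balance and on the $O(\sqrt{T/|I|})$ bound it matches the paper's proof. The paper also writes $W(\pi_i)-W_F/|I|$ as a sum of independent centered per-period terms and applies Jensen. It then uses $(w^x)^2\le w^x$ together with ex-ante balance to bound the variance by $W_F/|I|\le T/|I|$.

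Where you genuinely differ is asymptotic ex-post balance. The paper first builds a $k$-replica $(p^k,r^k,q^k)$ of the equilibrium of $\mathcal{M}$: prices and allocations are repeated and ratios are scaled to $r_i^*/k$. It verifies that this replica is a market-clearing balanced equilibrium of $\mathcal{M}^k$, and then applies the Strong Law of Large Numbers to the $k$ i.i.d.\ segment sums. You skip the replica entirely. You use only two facts: whatever balanced equilibrium DPM computes in $\mathcal{M}^k$ gives mean $kW_F/|I|$, and the per-period terms are independent. Chebyshev then finishes the argument.

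Your route buys several things. It is shorter. It does not depend on DPM selecting the replica when $\mathcal{M}^k$ has multiple balanced equilibria, whereas the paper's i.i.d.\ segment structure silently requires that selection. It gives an explicit rate $T/(k\epsilon^2)$. And it proves exactly the statement $\lim_k\Pr\{\cdot\}=1$ that the definition asks for. The paper instead phrases its conclusion as $\Pr\{\lim_k\cdot\}$ for random variables that live in different markets for different $k$, which implicitly needs a coupling across $k$.

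What the paper's route buys is the replica-equilibrium lemma itself. The paper reuses it in the manipulability proposition to extend its counterexample to arbitrarily long horizons, and your argument does not supply that lemma.
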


The intuition is that DPM's ex-ante balance confers asymptotic ex-post balance because its period-by-period allocations are independent (unlike the dictatorial mechanism). Thus, asymptotically, its average allocation converges to the expected allocation. Theorem \ref{thm:asymptotic-balance} reveals an efficiency-balance tradeoff. The expected imbalance grows at the rate of the average number of tasks per agent, but the dynamic efficiency of the mechanism also improves with the planning horizon (because the allocation is only ex-post efficient up to the total number of tasks $T$). However, this bound is sub-linear. If $T$ is large, then the average expected imbalance is a small proportion of the total allocation, which is the intuition behind asymptotic ex-post balance. For example, \qcite{avramidis2004modeling} documents a call center averaging approximately 1150 calls per day with around 30 workers. Under DPM, the worst-case expected imbalance for any worker in this call center is $\sqrt{1150/30} \approx \pm 6.2$ weighted-calls per day.

\subsection*{3.3 \hspace{5pt} Connecting Efficiency to Productivity and Welfare}

The connection between efficiency and outcomes intersects one outstanding question regarding the task weights: what, exactly, do they represent? Practically, the firms that I describe estimate \textit{average, observable cost of effort} to use as the task weights. For example, the Alaska state department administering child social services contracted a firm to survey caseworkers at random moments and record the time spent on cases (\qcitenp{hornby2012}). Legal offices that manage public defenders also explicitly define complex procedures for accurately estimating case weights, typically using measures of time required or case complexity (\qcitenp{jacoby1985}). I will denote this estimate of average, observable cost of effort as $\hat{c}^x \in \mathbb{R}_{++}$. Balance means that---even if agents have heterogeneous productivity or effort costs---every agent receives allocations that require an objectively equal average cost of effort. In this case, Rotation ignores both potential productivity gains and heterogenous effort costs, creating scope for DPM to substantially improve outcomes.

Workers' preferences could correlate with productivity or cost of effort. Firms typically tie financial compensation to performance, either implicitly or explicitly, which one might expect to imply a tight relationship between preference for a task and on-the-job task productivity if incentives are sufficiently strong. Effort costs could also be a component of workers' preferences for tasks if they are attentive to both incentives and personal disutility. Pareto efficiency (relative to an inefficient benchmark), even if not a direct Pareto improvement on Rotation, might be expected to improve either productivity or worker welfare if there is a statistical relationship in either case.

On the contrary, it is conceivable that workers might not be attentive to the economic environment when communicating their preferences. One, they might report taste-based heterogeneity that does not correlate with productivity or cost of effort. Two, workers might miscalculate how much effort they will expend in a task, therefore also miscalculating their productivity and cost of effort. Three, DPM requires workers to report abstract, cardinal preferences for tasks, and the mapping from reports to allocations might be difficult for workers to understand. They could be unable to "accurately enough" assess the economic environment (\qcitenp{budish2022}).

\subsubsection*{3.3.1 \hspace{5pt} Preferences, Productivity, and Welfare}

A worker $i$'s productivity and cost of effort for a task $x$ are $Y_i^x$ and $c_i^x$. The productivity of an allocation for $i$ is:
\[\mathcal{P}_i(a_i) = \sum_{t = 1}^T \sum_{x \in X} Y_i^x a_i^x\]
I assume productivity is binary so that a task is either completed successfully or failed. The worker is paid a piece rate $\tau^x$ for each successful task. If workers' preferences correlate with their productivity or cost of effort, then efficient allocations can increase productivity and/or decrease cost of effort, despite balance's equalization of average cost of effort. I illustrate this with a simple example.

\begin{example}\label{example:effort}
    The market $\mathcal{M}$ is:
    \[I = \{1,2\} \quad X = \{x, y\} \quad T \text{ even periods} \quad F(x) = F(y) = \frac{1}{2}\]
    Preferences are:
    \[v_i(x) = \tau^x Y_i^x - c_i^x\]
    and $w^x = w^y = 1/2$. $W_F = 1/2$, and $W_F/|I| = 1/4$. Balance requires $q_i^x + q_i^y = T/2$ for all $i \in I$. The piece rate is $\tau^x = 1$. Workers have equal costs of effort: $c_i^x = c_i^y = 1/2$ for both $i \in I$. Workers have heterogeneous productivity: $Y_1^x = Y_2^y = 2$ and $Y_1^y = Y_2^x = 1$. With these parameters, worker utility is simplified to:
    \[v_i(x) = Y_i^x - \frac{1}{2}\]
    Under Rotation, it is immediate that workers 1 and 2 receive an equal number of each task in expectation. Then:
    \[\E[F,q_i]*{Y_i^x + Y_i^y} = \frac{1}{2} Y_i^x + \frac{1}{2} Y_i^y = \frac{3}{2}\]
    This implies that each worker's average productivity per task is $3/2$.

    Under DPM, every allocation is ex-ante efficient and ex-ante balanced. It is immediate that for any allocation such that $q^{*,x}_1 + q^{*,y}_1= T/2$, $q^{*,x}_1 = T/2$ and $q^{*,y}_1 = 0$ maximizes worker $1$'s utility, and vice versa for worker $2$. This can be the only ex-ante efficient and ex-ante balanced allocation. So:
    \[\E[F,q_1^*]*{Y_1^x + Y_1^y} = \E[F,q_2^*]*{Y_2^x + Y_2^y} = 2\]
    Therefore, average productivity under DPM is $2$ for both workers, an increase of one-third over Rotation's $3/2$.
\end{example}

Here, cost of effort does not change (though, through incentives, worker welfare increases), but the logic is the same for heterogeneous cost of effort. DPM has potential to both increase productivity and worker welfare.

\subsubsection*{3.3.2 \hspace{5pt} Quantitative Effects}

The results so far do not characterize magnitude of DPM's effects except in this special example. Although a Pareto efficient mechanism might be expected to improve productivity versus a Pareto inefficient mechanism, this is not generically true as the former may not be a direct Pareto improvement on the latter. Pareto efficiency has no direct quantitative insights outside of special cases like Example \ref{example:effort}. Moreover, DPM's empirical effects depend on how workers report their preferences (which, as I show below, is also subject to strategic behavior).

Typically, estimating an allocation mechanism's causal effects on \textit{outcomes} requires sophisticated econometrics or costly field experiments\footnote{Estimating causal effects on \textit{outcomes} is an important distinction from estimating effects on \textit{allocations} or \textit{welfare}. The latter two are directly estimable as long as the market designer has access to preference reports, but the former has no clear relationship to the latter. Therefore, causal identification strategies are necessary absent theoretical frameworks.}. Instead, I propose a \textit{reduced-form approach} to theoretically predict DPM's empirical effects using only aggregate statistics. I impose a set of assumptions on the theoretical environment to be able to characterize DPM's expected productivity versus Rotation. The assumptions have a conservative effect: when the theoretical environment does not satisfy them, it is likely that DPM's expected productivity is even larger. This allows me to develop a generalizable estimate from sparse data even in markets that fail to satisfy the assumptions.
\begin{assumption}\label{assumption:reduced-form}
    \textit{(Reduced Form)} The following primitives hold:
    \begin{enumerate}
        \item $|X| = 2$ (Binary Types)
        \item $F_t(x) = F(x)$ for all $t \in [T]$ (Time-Invariant Distributions)
        \item $Y_i^x = \epsilon_i^x$ where $\epsilon_i \in \{0,1\}$ is a random variable with $\Pr\{\epsilon_i^x = 1\} = g^x$ i.i.d across $i$ (Binary Outcomes)
        \item $v_i(x) = 1 + Y_i^x$ (Perfect Correlation)
    \end{enumerate}
\end{assumption}
These assumptions are weak in some ways but strong in others. Note that I implicitly assume productivity is realized before preference reporting, implying that workers do not face correlated firm/task-level shocks or that the market designer can control for them\footnote{Independence in $\epsilon_i^x$ across $i$ implies the same.}. I discuss this and Assumption \ref{assumption:reduced-form} below and show how the equation is still conservative after a minimal correction.

$\mathcal{P}(\pi) = \sum_{i \in I} \E*{\mathcal{P}_i(\pi_i)}$ is a mechanism's expected productivity. Under Assumption \ref{assumption:reduced-form}, I develop an exact expression for $\mathcal{P}(\pi_{DPM}) - \mathcal{P}(\pi_R)$. The expression only depends on the model primitives (excluding agent preferences) and the mean task productivies $\E*{Y_i^x} = g^x$ for each $x \in X$. It is easy to compute. However, its form is complicated and not easily interpretable, so I relegate the discussion to Appendix \ref{appendix:reduced-form}. Instead, the intuition for the equal weight $w^x = w^y$, arrival distribution $F(x) = F(y) = 1/2$, and mean task productivity $g := \E*{Y_i^x} = \E*{Y_i^y}$ case is considerably simpler. $\tilde{g}_{x,y} = P(Y_i^x = 1,Y_i^y = 1)$ is the probability that an agent is productive in both tasks. Then:
\begin{theorem}\label{thm:expected}
    (DPM's Expected Effects) Under Assumption \ref{assumption:reduced-form}, if $w^x = w^y, F(x) = F(y) = 1/2$, and $g^x = g^y$, then:
    \[\lim_{|I| \to \infty} \mathcal{P}(\pi_{DPM}) - \mathcal{P}(\pi_R) = g - \tilde g_{x,y}\]
\end{theorem}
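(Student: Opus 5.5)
The plan is to compute both expected productivities per unit of task mass, using ex-ante allocations, and to let the law of large numbers over agent types handle the $|I| \to \infty$ limit. I read the limit in the statement as a per-task comparison, that is, total expected productivity divided by the $T$ expected arrivals. I will verify at the end that this is the normalization under which the difference equals $g - \tilde g_{x,y}$.

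\textbf{Step 1: Rotation.} Rotation ignores preferences and types, so the agent who receives each task is independent of that agent's realized $(Y_i^x, Y_i^y)$. Since the productivities are i.i.d.\ across $i$ with $\E*{Y_i^x} = \E*{Y_i^y} = g$, each arriving task is completed with probability $g$. This gives $\mathcal{P}(\pi_R) = gT$, or $g$ per task.

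\textbf{Step 2: characterize DPM allocations.} Under Assumption \ref{assumption:reduced-form} each agent has one of four types: $(1,1)$, $(1,0)$, $(0,1)$, $(0,0)$. Their population probabilities are $\tilde g_{x,y}$, $g-\tilde g_{x,y}$, $g-\tilde g_{x,y}$, and $1-2g+\tilde g_{x,y}$. Because $w^x = w^y$ and $F(x)=F(y)=1/2$, ex-ante balance (Theorem \ref{thm:asymptotic-balance}) says each agent receives expected task mass $T/|I|$, split in some way between $x$ and $y$. By Theorem \ref{thm:first-welfare}, $q^*$ is ex-ante efficient.

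\textbf{Step 2a: exchange argument.} Suppose a $(1,0)$ agent holds positive probability mass of $y$ while some agent who is not productive at $x$ holds positive mass of $x$. Swapping equal expected amounts of the two tasks leaves every weight unchanged, since the weights are equal. The $(1,0)$ agent then gains strictly, because $v(x)=2>v(y)=1$. The counterparty is indifferent if its type is $(0,0)$, and gains if its type is $(0,1)$. This is a Pareto improvement, contradicting ex-ante efficiency.

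\textbf{Step 2b: the $(1,1)$ counterparty.} The same swap applies when the counterparty is a $(1,1)$ agent, who is indifferent. So a $(1,0)$ agent can hold $y$ only if all of $x$ is held by $(1,0)$ agents. That would require the $(1,0)$ agents' total capacity to cover the whole $x$ supply: $|\{(1,0)\}|\cdot T/|I| \geq T/2$. By the law of large numbers this fails with probability tending to one whenever $g - \tilde g_{x,y} < 1/2$. The symmetric argument covers $(0,1)$ agents.

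\textbf{Step 2c: conclusion of the characterization.} Hence, in the limit, the task mass of $(1,0)$, $(0,1)$ and $(1,1)$ agents is all productive. The $(0,0)$ agents are unproductive whatever they hold. The indifferent $(1,1)$ and $(0,0)$ agents absorb the residual supply; how they split it does not affect productivity, so equilibrium multiplicity is harmless.

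\textbf{Step 3: DPM productivity and the difference.} Each agent carries $T/|I|$ units of mass. Expected DPM productivity per task therefore equals the fraction of agents who are not of type $(0,0)$. As $|I|\to\infty$ this converges, by the law of large numbers and dominated convergence (the fraction is bounded), to
\[1-(1-2g+\tilde g_{x,y}) = 2g-\tilde g_{x,y}.\]
Subtracting Rotation's $g$ gives $g-\tilde g_{x,y}$.

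\textbf{Main obstacle.} The hardest step is Step 2: showing that \emph{every} balanced equilibrium allocation has this structure, not just some efficient one. Ex-ante efficiency is not directly a statement about swaps; one has to check that the exchange preserves ex-ante balance, time-invariance, and feasibility $\sum_i q^{x}_{i,t}\le 1$, which holds because the swap is done period by period with equal masses. A second issue is the boundary event where one productive type's capacity exceeds half the supply. I would first show this event has vanishing probability when $g - \tilde g_{x,y} < 1/2$, which is automatic since $2(g-\tilde g_{x,y}) \leq 1$. The degenerate case $g - \tilde g_{x,y} = 1/2$, where equality holds, needs a separate check that the capacity-driven loss is $O(1/\sqrt{|I|})$. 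Finally, I would state explicitly the per-task normalization under which the formula holds.
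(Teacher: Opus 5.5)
Your argument is correct, and it reaches $2g-\tilde g_{x,y}$ by a different route from the paper.

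\textbf{The paper's route.} The paper works through prices. Proposition \ref{proposition:threshold} shows that equilibrium prices are constant over time and that there is a cutoff $c$ on $V_i=v_i(x)/v_i(y)$. Hence, for \emph{some} equilibrium and with $\kappa$ integral, the top $\kappa$ agents in the $V$-ranking take only $x$. Theorem \ref{theorem:reduced-form} then gives exact finite-$|I|$ productivity through Binomial CDFs of the order statistics of $V_i$. The symmetric corollaries pass to the limit by a Riemann-sum and bounded-convergence argument. This yields $2G+(1-2G)\tau$ with $G=g-\tilde g_{x,y}$, which equals $2g-\tilde g_{x,y}$ because $(1-2G)\tau=\tilde g_{x,y}$.

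\textbf{Your route.} You bypass prices entirely. You combine the First Welfare Theorem, ex-ante balance, and one equal-mass swap. This shows that every efficient balanced allocation has productivity exactly one minus the share of type-$(0,0)$ workers, on the event that neither comparative-advantage type exceeds half the workforce. The law of large numbers then finishes the argument.

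\textbf{What your approach buys.} It is robust and transparent.
\begin{itemize}
\item It covers every balanced equilibrium, and indeed any ex-ante efficient, ex-ante balanced allocation. So the paper's ``there exists'' qualification, the integrality of $\kappa$, and the implicit tie-breaking among $V_i=1$ agents no longer matter.
\item It explains the limit as the share of workers who are productive at something.
\item You treat the boundary case $g=1/2$, $\tilde g_{x,y}=0$ explicitly. There $G=1/2$ and $\tau$ is $0/0$, and the paper's Corollary passes over this case by asserting $G<1/2$.
\end{itemize}

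\textbf{What the paper's approach buys.} It is more general. The exact formula holds for arbitrary $f$, $w^x\neq w^y$, and $g^x\neq g^y$. In those cases a balance-preserving swap trades unequal masses and is no longer automatically improving for a $(1,0)$ agent. That general formula is what feeds $\underline{\mathcal{P}}$, the $\ell$ correction, and the threshold structure invoked in the manipulability discussion.

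\textbf{Minor points.} The capacity condition in your Step 2b should be strict: more than half the workers must be of type $(1,0)$. Also, $2(g-\tilde g_{x,y})\le 1$ gives only $G\le 1/2$, not $G<1/2$. That is exactly why the boundary case you flag genuinely needs its separate $O(1/\sqrt{|I|})$ check.
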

Assumption \ref{assumption:reduced-form} allows me to fully characterize DPM allocations and expected productivity which is otherwise intractable when $|X| > 2$. Theoretically, I derive a general expression that applies in any market meeting Assumption \ref{assumption:reduced-form}. The generalized equation relies only on the arrival distribution, task weights, number of workers, and expected productivities in the tasks. These typically would appear in firm-level data, making this a useful method to predict DPM's expected effects from available parameters.

Theorem \ref{thm:expected}'s simplified equation shows that, in the limit, DPM's expected productivity gain over DPM depends on comparative advantages. See that $\tilde g_{x,y}$ measures the probability that a worker has a comparative advantage. When it is maximized (equal to $g$), the tasks are perfect complements. When it is minimized (zero), the tasks are perfect substitutes. Intuitively, DPM can improve productivity only when tasks are at least partially substitutable. For example, it is useful when a caseworker that is good at working with older children is not as proficient at working with younger children and vice versa. The balance constraint will bind the total number of cases that a worker can have in their allocation, and it will be better for caseworkers to specialize in their comparative advantages.

\subsubsection*{3.3.3 \hspace{5pt} Discussion}

Assumption \ref{assumption:reduced-form} allows for full generality in the number of workers, task weights, and arrival distributions (though they must be the same over time). The first (binary types) and third (binary outcomes) restrictions make the theoretical prediction more conservative. The second (time-invariant distributions) restriction is for tractability and has minor simulated effects on the theoretical prediction (see below). $|X| = 2$ limits variation in preferences and comparative advantages. Binary outcomes imply that DPM's gain is from sorting productive and non-productive workers rather than also exploiting the intensity of worker productivity. Binary outcomes are also relevant in themselves. In my running foster care example, two pertinent outcomes are child maltreatment and match quality. Caseworkers must identify cases of substantiated child maltreatment, and the binary outcomes are correct or incorrect identification. Beyond the investigation stage, caseworkers must also match children to foster homes. The binary outcome is match quality measured by whether the foster home ends the match prematurely or not.

The strongest assumption is that workers must perfectly report their own productivity (normalized to be strictly positive). This is hard to believe in light of the discussion so far. DPM's empirical effects might change significantly when noisy preference reporting is accounted for. Reassuringly, the proof discussed in Appendix \ref{appendix:reduced-form} suggests that the only economic force of interest is the empirical frequency with which workers accurately report their comparative advantage. This permits the cardinal values to vary (greatly) without harming DPM's effects. 

I denote $\ell$ as fraction of workers with $v_i(x) > v_i(y)$ whom inaccurately report $v_i(x) \leq v_i(y)$ (and vice versa). These workers could report $v_i(x) = v_i(y)$ or $v_i(x) < v_i(y)$. I apply a conservative correction that assumes the latter, that is, these workers' preferences are perfectly negatively correlated with productivity. Then:
\[\lim_{n \to \infty} \mathcal{P}(\pi_{DPM}) - \mathcal{P}(\pi_R) = (g - \tilde{g}_{x,y})(1 - 2\ell)\]
with an analogous correction for the generalized expression. For any $\ell < 0.5$, workers' preferences are, on average, positively correlated with productivity, and theory predicts that DPM will increase productivity. At $\ell > 0.5$, workers' preferences are, on average, assumed to be negatively correlated with productivity.  It follows that DPM would decrease productivity since it would be expected to allocate workers to tasks in which they are less productive.

The advantage of parametrizing $\ell$ is that it is a simple quantity to estimate, and it accounts for unmodeled factors like task-level random shocks that affect the accuracy of preference reporting\footnote{For example, a case (task) in foster care could have randomly missing information. A case with missing information is impossible for every worker, and a case with information is trivial for every worker. Type $x$ cases are more likely to have missing information than type $y$ cases. Then, productivity is perfectly correlated across caseworkers. If the designer can control for missing information, the marginal worker-level effect is zero, and the generalized equation implies that DPM's gain would be zero. If not, intuitively, workers' reports are random noise after controlling for between-worker taste, and, on average, half of workers will report no comparative advantage yet will be observed to have an advantage at $x$ over $y$. Thus, $\ell \approx 1/2$.}. For example, given firm-level productivity data, $\ell$ could be estimated from a survey of workers' ordinal preferences by calculating the average number of workers whose preferences align with their average task productivities. The firm only needs to survey a small, representative sample to estimate $\ell$ and DPM's corrected, expected effects before implementation. Alternatively, $\ell$ could be used as a stress-test for the level of noise that might make DPM ineffectual.

Denote $\underline{\mathcal{P}}$ as the generalized equation's prediction for DPM's productivity gain over Rotation corrected for $\ell$. I simulate $\underline{\mathcal{P}}$ versus DPM's realized gain under various market parameter. I compute $\underline{\mathcal{P}}$ using the empirical parameters estimated in each simulated market and a method that bins tasks to two types then estimates the equation\footnote{The method is described in Appendix \ref{appendix:reduced-form}.}. Results in Figure \ref{figure:theoretical} indicate that $\underline{\mathcal{P}}$ is robust when varying the number of workers, number of objects, preference reporting error, weight/arrival distributions, within-worker correlation, between-worker correlation, outcome distibutions, and time-varying arrivals. DPM's realized gain over Rotation both tracks the theoretical prediction yet is several factors larger. The sole exception is when the between-worker correlation is extreme (all workers have the exact same outcome realization for each task); in this case, the theoretical prediction overestimates by $0.02$.

\begin{figure}[thpb]
    \begin{minipage}{0.8\linewidth}
    \centering
    \includegraphics[width=\linewidth]{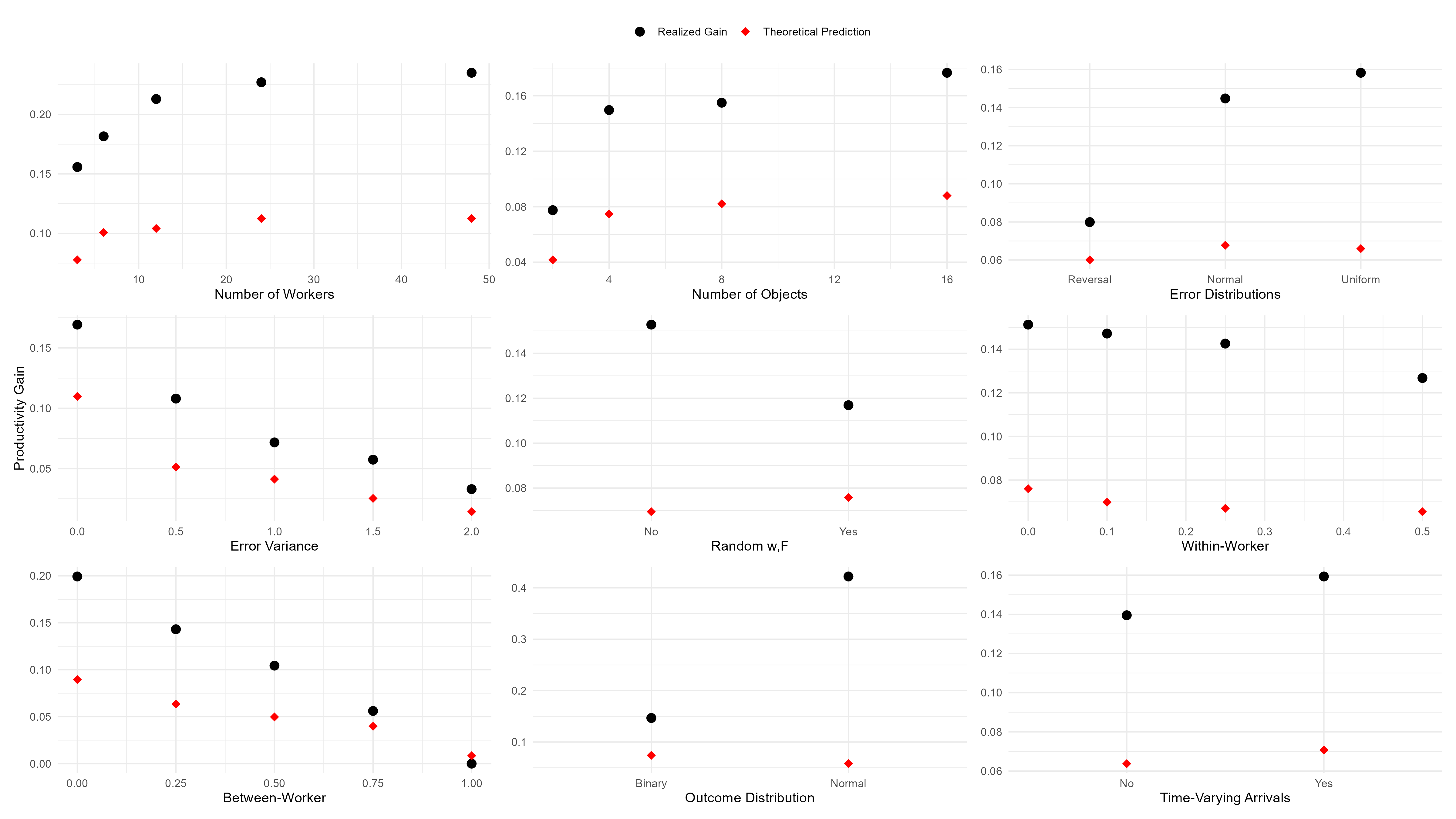}
    \caption{DPM Productivity Gain}
    \label{figure:theoretical}

    \vspace{0.5ex}

    \raggedright
    \footnotesize
    \textbf{Note:} A worker's preference report is $\hat Y_i^x = Y_i^x + \xi_i^x$ where $\xi_i^x$ is (a) normally/uniformly distributed with variance $\sigma$ or (b) correlated to set $\hat Y_i^x$ opposite to its realized value with probability $\sigma$. In between-worker correlation, after preference reporting, every worker's outcome for each task type is shocked to a common, independent coin-flip with probability on the x-axis. Normal outcomes are jointly normal with unit variance. Default fixed parameters (when not varying) are $|I| = 3$, $|X| = 4$, $\xi_i^x \sim Normal(0, \sigma)$, $\sigma = 0.25$, equal weights and arrivals, 0.25 within-worker correlation, 0.25 between-worker correlation time-invariant distributions, and binary outcomes. I run 500 simulations for each scenario.
  \end{minipage}
\end{figure}

I plan to test this in an online experiment to validate the generalized expression as a lower bound for DPM's effects on productivity with workers completing real-effort tasks. When $|X| > 2$, systematic patterns in the empirical preference report distribution might have unexpected effects that cause Theorem \ref{thm:expected}'s prediction to over-estimate or that cause $\ell$ to fail to be a sufficient correction. Seeing as empirical preference reporting may differ from the simulated distributions, and the expression is reduced-form, an experiment is well-positioned to provide evidence that DPM's effects empirically exceed $\underline{\mathcal{P}}$. If this is true, $\underline{\mathcal{P}}$ can conservatively generalize the experiment's findings to field settings. This discussion motivates the following:
\begin{hypothesis}\label{hypothesis:prediction}
    $\mathcal{P}(\pi_{DPM}) - \mathcal{P}(\pi_R) > \underline{\mathcal{P}}$
\end{hypothesis}
I plan to test this hypothesis in a future experiment.

\subsection*{3.4 \hspace{5pt} Strategic Incentives}

The properties of DPM are appealing so far, but are agents incentivized to report their true preferences? Truthful reporting is important to both secure the efficiency properties and to provide a level playing field for all agents. A manipulable mechanism can encourage strategic behavior, for example, sophisticated workers might be able to systematically attain more preferred tasks---even if allocations are balanced---than less sophisticated workers, leading to perceptions of unfairness. Unfortunately, this can be true.

\begin{proposition}\label{proposition:manipulable}
    \textit{(Manipulability)} The DPM mechanism is manipulable. In addition: it is manipulable as $T \to \infty$.
\end{proposition}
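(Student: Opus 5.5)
The plan is to exhibit one small market in which an agent strictly gains from misreporting, and then lift it to $T\to\infty$ through the $k$-extension. I would use $I=\{1,2\}$, $X=\{x,y\}$, time-invariant $F(x)=F(y)=1/2$, and equal weights $w^x=w^y=1/2$. With these primitives, ex-ante balance says each agent receives expected weight $W_F/|I|$. Since weights and arrivals are equal, this reduces to $\sum_t (q_{i,t}^x+q_{i,t}^y)/2 = T/2$ for each $i$.

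The first step is to characterize the balanced equilibria for generic reports. By Theorem \ref{thm:first-welfare}, any balanced market-clearing equilibrium is ex-ante efficient. In this two-good, two-agent economy, efficiency forces the agent with the higher ratio $v_i(x)/v_i(y)$ to receive $x$-shares before any $y$-shares go to the other agent. Balance then pins down a unique point on the Pareto frontier.

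I would choose true valuations so that the balanced efficient point is interior for one agent. A concrete choice is $v_1=(3,1)$ and $v_2=(2,1)$. Both agents rank $x$ above $y$, so balance cannot give everyone their top choice. Because $\sum_i q_i^x=\sum_i q_i^y=1$ and each agent must receive total share $1$ across the two goods, $q_1^x=q_1^y=1/2$ is one candidate. However, efficiency requires that agent 1, who has the higher $x/y$ ratio, holds all of $x$ if agent 2 holds any $x$. Combined with balance, this forces $q_1^x=1$, $q_2^y=1$ in each period. So under truthful reports, agent 2 gets utility $T\cdot\frac12\cdot 1$.

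The manipulation is for agent 2 to report $\hat v_2=(4,1)$. This flips the ratio ranking, so balance together with efficiency now gives agent 2 all of $x$. Agent 2's true utility becomes $T\cdot\frac12\cdot 2$, which is a strict gain. The equilibrium uniqueness needed here follows from the Pareto frontier plus balance. Since DPM allocates by $q^*$, agent 2's expected utility under DPM strictly increases, so truthful reporting is not weakly dominant.

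For the $T\to\infty$ claim, I would apply the same construction to $\mathcal{M}^k$. Time-invariance gives the same per-period equilibrium shares, and efficiency does not care about time. So the gain scales as $kT/2$, and the manipulation persists for every $k$ and in the limit. This rules out the usual large-market restoration of incentive compatibility that holds in pseudomarkets, because the economy here does not grow in agents.

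The main obstacle is uniqueness or selection of the balanced equilibrium. DPM says to "compute a balanced equilibrium," and multiple equilibria could give ambiguous payoffs. I would argue that, with two goods, strict ratio differences, and full balance, the efficient balanced allocation is unique. If ties arise, I would instead state the result for every selection rule.
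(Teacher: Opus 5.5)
Your proof is correct, and it takes a different route from the paper's.

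\textbf{The paper's route.} The paper uses three agents, $v_1=(3,1)$, $v_2=(4,1)$, $v_3=(1,2)$, with $T=1$, and solves for equilibria by hand through prices and budgets. Under truthful reports every balanced equilibrium has $p^{*,x}=3p^{*,y}$, with agent 1 indifferent at the cutoff and receiving $(1/3,1/3)$. Under the report $v_1(x)=10$ the price ratio shifts to $4$, agent 1 takes her whole balanced quota in $x$, and her true utility rises from $2/3$ to $1$. The $T\to\infty$ claim is then delegated to Step 1 of the proof of Theorem \ref{thm:asymptotic-balance}: the $k$-replica of an equilibrium is an equilibrium of $\mathcal{M}^k$.

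\textbf{Your route.} You use two agents whose misreport flips the comparative-advantage ranking, and you never compute prices. By Theorems \ref{thm:existence} and \ref{thm:first-welfare}, DPM's market-clearing balanced equilibrium allocation is ex-ante efficient and balanced, and in your market that set is a singleton.

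\textbf{What each approach buys.} Your uniqueness argument works verbatim at every horizon, so it settles equilibrium selection in $\mathcal{M}^k$ directly. The replica lemma only shows the replicated equilibrium is \emph{an} equilibrium of $\mathcal{M}^k$; it does not by itself rule out DPM computing a different one. The paper's explicit computation instead illustrates the endogenous-budget channel: it shows how a misreport shifts prices and the manipulator's purchasing power. The paper's text cites that channel as the reason balanced pseudomarkets are especially manipulable, and your abstract argument does not display it.

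\textbf{Wording to fix.} Your sentence that efficiency forces agent 1 to ``hold all of $x$ if agent 2 holds any $x$'' is self-contradictory. The correct no-crossing condition is that agent 2 cannot hold some $(x,t)$ while agent 1 holds some $(y,s)$, for any $s,t$. If both held, swapping $\delta$ of $(x,t)$ for $\rho\delta$ of $(y,s)$ with $\rho\in(2,3)$ strictly helps both agents.

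\textbf{Use totals, not stationarity.} Run the argument on totals $Q_i^z=\sum_t q_{i,t}^z$ rather than assuming per-period shares are time-invariant. Balance and market clearing give $Q_1^x+Q_1^y=Q_2^x+Q_2^y=T$ and $Q_1^z+Q_2^z=T$. Then either $Q_1^y=0$ or $Q_2^x=0$ forces $Q_1=(T,0)$. Since $q_{1,t}^x\le 1$, this gives $q_{1,t}^x=1$ for every $t$ with no stationarity assumption.

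\textbf{Market clearing.} Note explicitly that you need market clearing for Theorem \ref{thm:first-welfare}. DPM implicitly requires it anyway, because it allocates with probabilities $q_{i,t}^x$.
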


Most pseudomarket mechanisms are manipulable (\qcitenp{pycia-2023}). Balanced equilibria are perhaps even more vulnerable to manipulation because of endogenous agent budgets. Typically, an agent's price impact is small even in finite markets, and, given a fixed budget, an agent's equilibrium utility will be similar given similar prices. In DPM, a single agent can substantially change market behavior because her manipulation can give her a smaller or larger budget. Consequently, she can increase her own utility by a large margin.

Approaches to bound pseudomarket manipulability exist, but they typically require large markets  (\qcitenp{azevedo2019strategy}). Such approaches are not generally suitable for the within-firm task allocation problems that I imagine as applications. A large market would require the number of employees to increase to infinity which is implausible for all but the largest firms. In the proof, I also show that the large market assumption cannot be replaced by infinite horizon market replicas; DPM is still manipulable in infinite horizons. Despite this negative result, DPM retains promise. The core idea of balance is that it encodes each task's objective difficulty. To the extent that agents perform tasks that require similar effort, even a manipulable and balanced mechanism might still be perceived as fair. Moreover, DPM is not unique in being manipulable.

\begin{proposition}\label{proposition:impossibility}
    \textit{(Impossibility)} An ex-ante efficient, ex-ante balanced, and strategyproof mechanism does not exist.
\end{proposition}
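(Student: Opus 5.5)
The plan is to exhibit one small market in which the ex-ante efficient and ex-ante balanced allocations are pinned down, and then show that some agent gains by misreporting. The natural choice is $I=\{1,2\}$, $X=\{x,y\}$, $T=1$, $F(x)=F(y)=1/2$, and unequal weights, say $w^x=3/4$ and $w^y=1/4$.

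Under these choices, ex-ante balance means $\tfrac12 w^x q_1^x + \tfrac12 w^y q_1^y = W_F/2$, with the same condition for agent 2, and feasibility means $q_1^z+q_2^z\le 1$. Ex-ante efficiency should force full allocation, because $v_i>0$ means any leftover probability is a free Pareto improvement that can be split while keeping balance. Under full allocation, balance reduces to a single linear constraint $w^x q_1^x + w^y q_1^y = \tfrac12(w^x+w^y)$. The feasible set is therefore a segment in $(q_1^x,q_1^y)$.

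The next step is to characterize the efficient points on that segment as a function of reported preferences. Moving along the segment trades $w^y$ units of $x$ against $w^x$ units of $y$ for agent 1, with the opposite trade for agent 2. Utilities are linear along the segment, so exactly one of two cases occurs:
\begin{enumerate}
\item the two agents rank the direction of the trade oppositely, and then the efficient set is a single endpoint or a corner where one agent specializes;
\item they rank it the same way, and then the whole segment is efficient.
\end{enumerate}

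For the impossibility, I would choose true preferences in the second case, so that the efficient balanced set is the entire segment. Consider any mechanism $\varphi$ that always selects an efficient balanced $q$, and let $q=\varphi(v)$.
\begin{itemize}
\item If agent 1 receives less than her best point on the segment, she reports $\hat v_1$ that flips her ranking of the trade direction. The profile then falls into the first case, whose unique efficient point is her favorite endpoint, and she strictly gains under her true $v_1$.
\item Otherwise, agent 1 already gets her best point, which is the worst point for agent 2 along the segment. Agent 2 then deviates symmetrically and reaches his own favorite endpoint.
\end{itemize}
Either way some agent strictly gains, so $\varphi$ is not strategyproof.

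The main obstacle is checking that the first-case profile really has a unique efficient balanced allocation, and that this allocation is the one the deviator prefers under her true preferences. This requires computing the sign of $v_i(x)w^x - v_i(y)w^y$ for each agent carefully. It also requires verifying that no interior point of the segment survives, which holds because linearity makes every interior point dominated once the agents' gradients along the segment have opposite signs.

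I would try to make the endpoints explicit, for example $(q_1^x,q_1^y)=(2/3,0)$ and $(0,1)$ after normalization. I would also double-check the boundary cases where the segment hits $q\in[0,1]$, since those cases determine which endpoint each agent prefers. If the specialization corners fail to be interior-feasible, I would adjust the weights until both endpoints are valid.
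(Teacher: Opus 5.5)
Your market is a workable choice, but your characterization of the efficient set uses the wrong notion of efficiency, and the deviation built on it fails. In the paper, ex-ante efficiency is unconstrained: $q$ must not be Pareto dominated by \emph{any} feasible $q'$, balanced or not, and the paper's own proof uses trades that break balance. You only test movements along the balanced segment, i.e.\ per-unit-weight rankings (the relevant sign for that would be $v_i(x)w^y - v_i(y)w^x$).

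Under the paper's notion, what governs your market is comparative advantage. Suppose $v_1(x)/v_1(y) > v_2(x)/v_2(y)$. Then efficiency rules out $q_1^y>0$ together with $q_2^x>0$. Balance ($3q_1^x+q_1^y=2$, with the full allocation it forces) gives $q_1^x\le 2/3$, so $q_2^x>0$. Hence the unique efficient balanced allocation is $q_1=(2/3,0)$. Under the reverse inequality it is $q_1=(1/3,1)$, not $(0,1)$, which is unbalanced. So your second case, with the whole segment efficient, arises only for proportional $v_1,v_2$. For instance, $v_1=(10,1)$ and $v_2=(5,1)$ both rank $x$ first per unit weight. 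Yet from $q_1=q_2=(1/2,1/2)$, moving $\delta$ of $x$ to agent 1 in exchange for $7\delta$ of $y$ makes both strictly better off.

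The deviation fails under either reading. Suppose both agents prefer $x$ per unit weight and agent 1 flips her report to prefer $y$. Your first case then selects the endpoint both \emph{reported} rankings favor, $q_1=(1/3,1)$, which is her true worst point. The comparative-advantage characterization gives the same outcome.

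Indeed, with efficiency restricted to balanced allocations, your market could not serve as a counterexample at all. Let agent 1 pick her favorite point of the segment, with ties broken by agent 2. This rule is balanced, efficient within the segment, and strategyproof, because utilities are linear on a one-dimensional set.

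The missing idea is that unconstrained efficiency lets an agent choose the endpoint by \emph{exaggerating} her comparative advantage relative to the other report. With $v_1=(10,1)$ and $v_2=(5,1)$, truth-telling gives agent 2 the bundle $(1/3,1)$ and utility $4/3$. Reporting $\hat v_2=(20,1)$ instead gives him $(2/3,0)$ and utility $5/3$.

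With that repair, your two-good example works. It is simpler than the paper's three-good, $\epsilon$-perturbed market, which runs on the same logic: efficiency orders goods by comparative advantage, balance caps the heavy good, and the misreport reorders them.
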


The proof is by counterexample. This induces some caution in using efficient and balanced mechanisms. Manipulability poses a threat to gains from Pareto efficiency because manipulations by many agents may render a mechanism inefficient with respect to true preferences. The reversal rate $\ell$ parametrically captures any noise in preference reports, including manipulations. Therefore, one might expect manipulations to harm firm-level productivity only to the extent that manipulations affect $\ell$. Of note: Theorem \ref{thm:expected}'s characterization of DPM allocations in the $|X| = 2$ case shows that allocations have a threshold structure where agents that report the \textit{largest} comparative advantage for a task receive it. This immediately implies that $\ell = 0$ in equilibrium (under Assumption \ref{assumption:reduced-form}) since workers would be incentivized to exaggerate, rather than reverse, their reports. This suggests that similar forces might result in little to no increases in $\ell$ due to manipulation even if $|X| > 2$.

\subsection*{3.5 \hspace{5pt} Extensions}

I discuss modifications to the model to capture more features that may be present in task allocation.

\subsubsection*{3.5.1 \hspace{5pt} Uncertainty}

Task weights indicate firm estimates of required effort, but, in some cases, organizations acknowledge that these estimates are not perfect; each task has uncertainty. Uncertainty is one motivation for relaxing ex-post balance. If task weights idiosyncratically vary from the firm's estimated effort costs, then an ex-post balanced allocation is generally only approximately balanced. If the noise in weight estimation is large enough, this can collapse the difference between ex-post balance and DPM's expected imbalance bound.

Formally, let $v_i(x) \sim U_i^x$ be a random variable representing an agent $i$'s stochastic cost of effort to complete the task $x$ (normalized to a positive number by a constant $c$ for each $x$; the highest utility task $x$ then requires the lowest effort cost). The weight $w^x$ is an estimate of the \textit{true weight} $\tilde{w}^x$:
\[w^x = N \sum_{i \in I} \E*{v_i(x)} \quad \text{ and } \quad \tilde{w}^x = N \sum_{i \in I} v_i(x)\]
where $N$ is a normalizing constant so that $w^x \in (0,1)$ for all $x \in X$. One can rewrite $w^x = \tilde{w}^x - \xi^x$ for a mean-zero, random error term $\xi^x$. For simplicity, assume that $\xi^x \sim G$ is i.i.d for each $x \in X$ at each $t \in [T]$\footnote{If the firm estimates the average effort cost using verifiable statistics under rotation, their estimates are unbiased. See \qcite{baron2024mechanism}. Assuming independent arrivals, weights will be independent. Results are unchanged for non-identical distributions; the assumption is for cleaner exposition.}.

\begin{proposition}\label{proposition:convergence}
    Under stochastic weights, Rotation and DPM's upper bound for expected imbalance is $O(\sqrt{T/|I|})$.
\end{proposition}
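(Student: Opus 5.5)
The plan is to measure imbalance under stochastic weights by the realized true weight $\tilde W(a_i)=\sum_{t}\sum_x \tilde w^x_t a^x_{i,t}$, where $\tilde w^x_t = w^x+\xi^x_t$ with $\xi^x_t\sim G$ i.i.d. mean zero and finite variance $\sigma_G^2$. For each mechanism I split the deviation of agent $i$ into two parts:
\[\Big|\tilde W(\pi_i)-\tfrac{W_F}{|I|}\Big|\le \Big|W(\pi_i)-\tfrac{W_F}{|I|}\Big| + \Big|\sum_{t=1}^T\sum_{x\in X}\xi^x_t\,\pi^x_{i,t}\Big|.\]
The first term is the imbalance with respect to estimated weights. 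The second is the accumulated estimation noise on the tasks that $i$ actually receives. Averaging over $i$ and taking expectations, it suffices to show that each term is $O(\sqrt{T/|I|})$ on average.

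\textbf{First term.}
\begin{itemize}
\item For DPM, this is exactly the expected imbalance bound already established in Theorem \ref{thm:asymptotic-balance}.
\item For Rotation, Proposition \ref{proposition:rotation} gives $|W(a_i)-W(a)/|I||\le k^*$ deterministically. The remaining gap $|W(a)/|I|-W_F/|I||$ is $(1/|I|)$ times the deviation of a sum of $T$ independent bounded weights, which is $O(\sqrt T/|I|)$ by Jensen/Cauchy--Schwarz. Both pieces are within $O(\sqrt{T/|I|})$.
\end{itemize}

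\textbf{Second term.} Let $N_i=\sum_{t,x}\pi^x_{i,t}$ be the number of tasks agent $i$ receives. For both mechanisms the allocation decision at $t$ depends only on arrivals and reports, never on the realized noise $\xi_t$. The noise on assigned tasks is therefore a martingale difference sequence with respect to the history, so conditional on the allocation path
\[\mathbb E\Big[\big(\textstyle\sum_{t,x}\xi^x_t\pi^x_{i,t}\big)^2\Big]=\sigma_G^2\,\mathbb E[N_i].\]
Jensen then bounds the expected absolute value by $\sigma_G\sqrt{\mathbb E[N_i]}$. Averaging over $i$ and applying concavity of the square root gives
\[\frac1{|I|}\sum_i\sigma_G\sqrt{\mathbb E N_i}\le\sigma_G\sqrt{\frac{1}{|I|}\sum_i\mathbb E N_i}=\sigma_G\sqrt{T/|I|},\]
since $\sum_i N_i=T$.

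\textbf{Main obstacle.} The delicate step is justifying the independence or martingale structure. For Rotation, the argmin rule uses the accumulated \emph{estimated} weights $W(a_i)$, which are not the noisy realized ones. I will make this explicit: the firm observes only $w^x$, so $\pi_t$ is measurable with respect to the arrivals and $\xi_1,\dots,\xi_{t-1}$ at most, while $\xi_t$ is independent of that information. If Rotation were instead run on realized weights, the same martingale argument still applies, because the decision at $t$ is made before $\xi_t$ is revealed.

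I will also state explicitly the assumption that $G$ has finite variance, which is implicit in the proposition. The same argument extends to non-identical noise distributions, with $\sigma_G^2$ replaced by a uniform variance bound.
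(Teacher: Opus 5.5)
Your proposal is correct and follows essentially the paper's proof: you split the true-weight deviation by the triangle inequality into the estimated-weight imbalance, bounded by Proposition \ref{proposition:rotation} for Rotation and Theorem \ref{thm:asymptotic-balance} for DPM, plus the accumulated noise $\Xi(a_i)$, whose variance is $\sigma^2$ times the number of tasks received, and then finish with Jensen and $\sum_i N_i = T$. Your version is slightly more careful than the paper's on three points: you handle the gap between the realized average $W(a)/|I|$ and $W_F/|I|$ under Rotation (Proposition \ref{proposition:rotation} only bounds pairwise differences), you average $\sqrt{\mathbb{E} N_i}$ over agents by concavity rather than asserting a per-agent $\sigma\sqrt{T/|I|}$ bound, and you justify why $\xi_t$ is independent of the period-$t$ allocation decision.
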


In other words, both mechanisms' worst-case expected imbalances grow at the same rate. The intuition is that the noise arising from the weight estimation substantially dominates that from DPM's relaxation of balance, suggesting that the difference between the two mechanisms' expected imbalance might not be large in practice.

\subsubsection*{3.5.2 \hspace{5pt} Ex-Post Balance by Reallocation}

While the previous discussion highlights that balance is imprecise even with rotation, some amounts of imbalance---even if unlikely---may be intolerable. In child welfare agencies, the average number of active cases per caseworker can be as few as fifteen, giving an expected imbalance bound of $\sqrt{15} \approx 3.9$. Depending on the context, a difference in (approximately) four weighted cases from the standard may be perceived as large and could be even larger in rare events. Despite the necessity of allocating tasks immediately, firms might want to re-allocate tasks observing an allocation at a period $t$. Post-hoc reallocation is recommended practice in contexts such as child welfare agencies (\qcitenp{salus2004supervising}). Without this flexibility, DPM could be too costly to implement.

The natural question is: how does this impact DPM's efficiency? Concretely, a \textit{reallocation} is a change from an ex-post allocation $a \in \mathcal{A}$ to an ex-post allocation $a' \in \mathcal{A}$. The \textit{reallocation constant} $K(a, a') = 1/2 \sum_{i \in I} \sum_{t = 1}^T \sum_{x \in X} | a_{i,t}^x - a'^{,x}_{i,t} |$ measures the number of task transfers required to reach $a'$ from $a$ or vice versa. (For simplicity, I will assume that $w^x = 1$ for all $x \in X$ throughout this section.) One can think of this as the number of task reallocations that the firm has to perform to make DPM ex-post balanced.

Ad-hoc deviations from the DPM allocation will increase inefficiency, and the efficiency-balance tradeoff is fundamentally insurmountable as proven in Proposition \ref{proposition:expost}. However, the following results demonstrate that the extent of inefficiency can be bounded. I write the allocations that follow from reallocating from an ex-post realization $a_{DPM}$ of DPM as: 
\[\pi_{DPM}^K = \{a \in \mathcal{A} : K(a, a_{DPM}) \leq K \text{ for some } a_{DPM} \in \pi_{DPM}\}\]
Combining this with Theorem \ref{thm:asymptotic-balance} gives a quantity for the amount of reallocation necessary to preserve zero imbalance in expectation, and it shows that reallocation is asymptotically without loss.

\begin{remark}\label{remark:asymptotic-efficiency}
    If $K \leq \sqrt{T\cdot|I|}$, then for any $a \in \pi_{DPM}^K$, $\frac{1}{T} \sum_{i \in I} \sum_{t = 1}^T \sum_{x \in X} |a_{i,t}^x - \pi_{DPM,i,t}^x| \to 0$ as $T \to \infty$.
\end{remark}

This is a simple algebraic manipulation:
\[\frac{1}{T} \sum_{i \in I} \sum_{t = 1}^T \sum_{x \in X} |\pi_{DPM,i,t}^{K,x} - \pi_{DPM,i,t}^x| \leq \frac{2 \sqrt{T} \sqrt{|I|}}{T} = \frac{2\sqrt{|I|}}{\sqrt{T}} \to 0 \text{ as } T \to \infty\]
which, notably, applies for \textit{any} reallocation that involves $\sqrt{T}$ or less tasks. The reallocated DPM converges (on average) to the efficient DPM allocation while maintaining balance in expectation. I remain agnostic to the particular impact of reallocation on ex-post imbalance because this depends on the firm's reallocation mechanism. However, the main point of interest is that DPM's total expected imbalance $\sum_{i \in I} \sqrt{T/|I|} = \sqrt{T \cdot |I|}$ is a sufficient number of task reallocation to preserve asymptotic efficiency. This suggests that most methods of ad-hoc reallocation could reduce imbalance to nearly zero.

\begin{figure}
    \centering
    \includegraphics[width=0.8\textwidth]{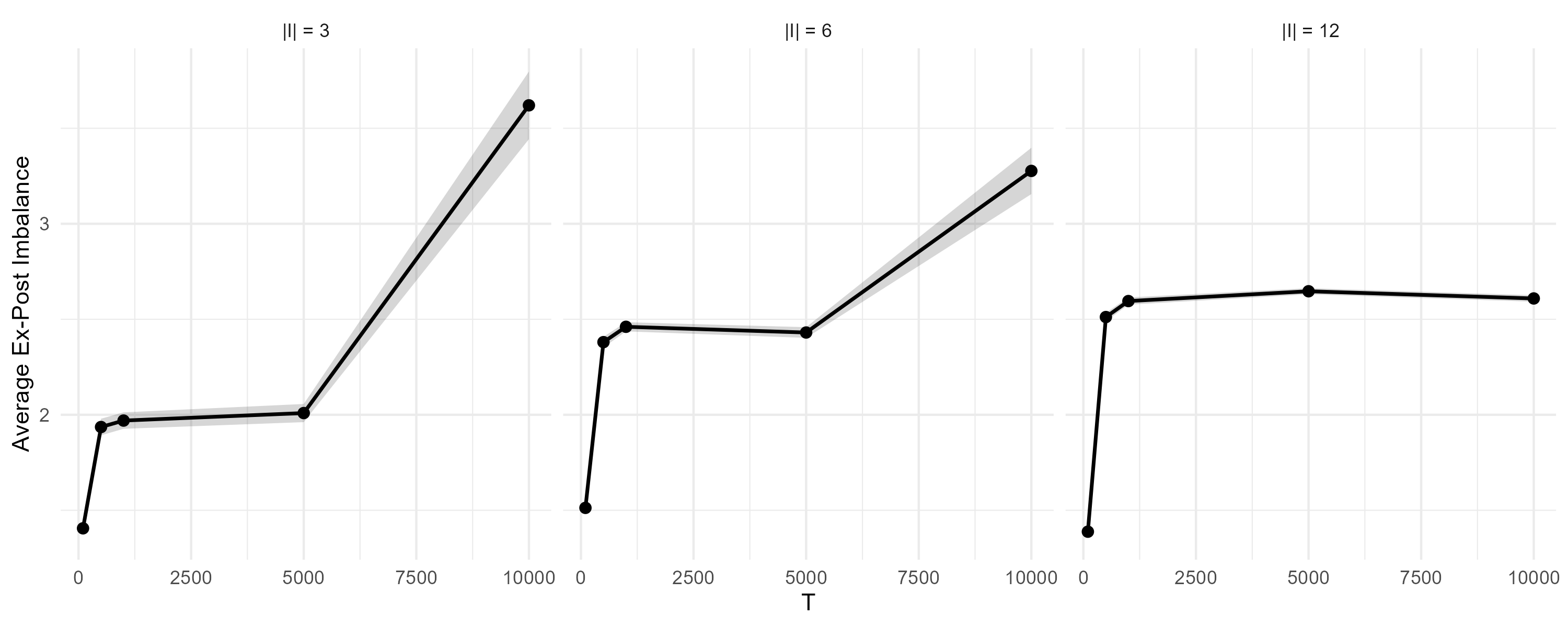}
    \caption{Ex-Post Imbalance for BDPM, $K = 3$}
    \label{figure:bdpm}
\end{figure}

One simple implementation would be to use DPM, and revert to Rotation whenever the ex-post imbalance in a period $t$ would exceed some constant $K$. That is, allocate a task $x$ arriving at period $t$ to $i$ with probability $q^{*,x}_{i,t}$ if $i$'s ex-post weight would not deviate from the average ex-post weight by more than $K$, otherwise, allocate the object to the agent with the least weighted allocation. I denote the weight of the allocation $a$ up to $t$ as $W_t(a_i) = \sum_{k = 1}^t \sum_{x \in X} w^x a_{i,t}^x$. The \textit{Bounded Dynamic Pseudomarket (BDPM)} mechanism is the following algorithm (I write $\pi_{BDPM} := \pi$ in the following description): \\~\

\textit{Algorithm: Bounded Dynamic Pseudomarket} (Input: $\mathcal{M}$ and bound $K$)
\begin{enumerate}
    \item Compute a balanced equilibrium $(p^*, r^*, q^*)$ for the market $\mathcal{M}$.
    \item For the item $x$ arriving at time $t$, draw an agent $i$ with probability $q_{i,t}^x$.
    \begin{itemize}
        \item If $W_{t-1}(\pi_i) + 1 - t/|I| \leq K$, then allocate $x$ to $i$.
        \item Otherwise, allocate $x$ to $j^* = \argmin_{j \in I} W_{t-1}(\pi_j)$.
    \end{itemize} 
\end{enumerate}

\begin{figure}
    \centering
    \includegraphics[width=0.8\textwidth]{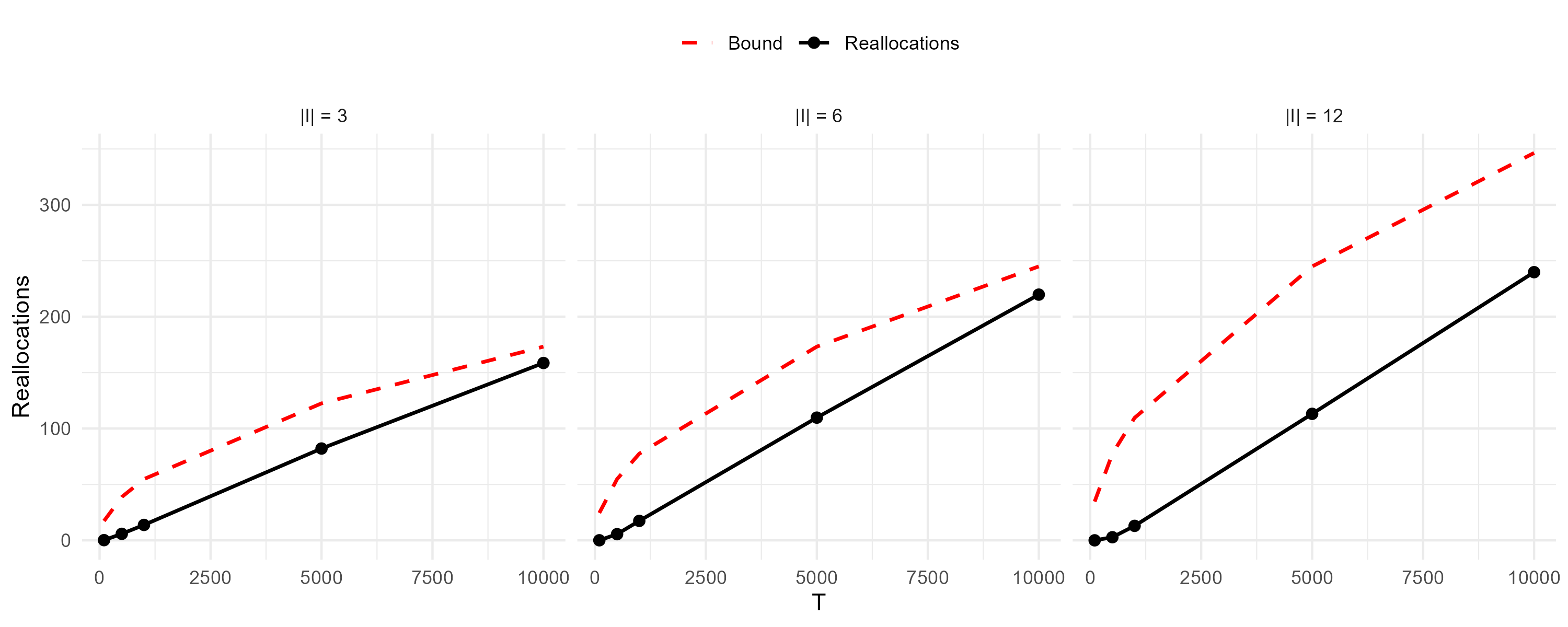}
    \caption{Number of Reallocations under BDPM}
    \label{figure:reallocations}
\end{figure}

I simulate 250 markets for each time horizon in:
\[\{100, 500, 1000, 5000, 10000\}\]
fixing the number of workers in:
\[\{3, 6, 12\}\]
to assess BDPM's balance performance. Figure \ref{figure:bdpm} demonstrates the mechanism. I enforce the theoretical constraint so that it does not perform reallocation if there have already been $\sqrt{T\cdot|I|}$ reallocations. The figure shows the average ex-post imbalance over the simulations for each size of random markets with equal weights, equal arrival probabilities, and four task types. I set $K = 3$.

Several interesting patterns emerge. First, up to a very large number of tasks ($T = 5000$), every market remains below $K = 3$ imbalance. However, this pattern breaks at $T = 10000$ for markets with fewer workers ($|I| \in \{3,6\}$). In contrast, for markets with $|I| = 12$, the growth in imbalance remains flat past $T = 1000$ and converges at approximately $2.5$. In Figure \ref{figure:reallocations}, I plot the number of occurrences of reallocation in the mechanism (where it deviates to Rotation) for the same simulation parameters. The results indicate that the theorized number of reallocations is sufficient to guarantee ex-post balance up to $K = 3$ as the number of workers grow (or the number of tasks remains below $T = 5000$), and Remark \ref{remark:asymptotic-efficiency} ensures that these reallocations do not jeopardize efficiency.

\subsection*{3.6 \hspace{5pt} Relationship to Existing Mechanisms}

DPM shares most with the classic Hylland-Zeckhauser pseudomarket (\qcitenp{hylland1979}). The main difference is that DPM is balanced. The HZ pseudomarket computes competitive equilibria from equal budgets rather than allowing budgets to vary; consequently, it is envy-free rather than balanced. Balance is a key constraint in the applications that I describe, and, rather than being a version of envy-freeness, it is a fundamentally separate normative concept. Moreover, Pareto efficiency, envy-freeness, and balance are at odds. 

Consider this example: there are two agents, Alice and Bob, and two tasks, $x$ and $y$. Each task has one copy. Task $x$'s weight is three-times that of $y$. The allocation must then satisfy $3q_a^x + q_a^y = 3q_b^x + q_b^y = 2$, $q_a^x + q_b^x = 1$, and $q_a^y + q_b^y = 1$. Alice prefers $x$ over $y$, while Bob prefers $y$ over $x$. A first pass at an efficient and balanced allocation might be assigning $2/3$ of $x$ to Alice and $1/3$ to Bob, while Bob receives $1$ of $y$. Conjecture the utilities $v_a(x) = 1$ and $v_a(y) = \epsilon$ for Alice and $v_b(x) = \epsilon$ and $v_b(y) = 1$ for Bob for an arbitrary $\epsilon < 1$. The constraints imply that:
\[3q_a^x + q_a^y = 2 \implies q_a^x \leq \frac{2}{3}\]
and
\[q_a^x + q_b^x = 1 \implies q_b^x \geq \frac{1}{3}\]
However, if $q_b^x > 1/3$, then:
\[3q_b^x + q_b^y = 2 \implies q_b^y < 1\] 
Furthermore:
\[q_a^y + q_b^y = 1 \implies q_a^y > 0\]
so that the first constraint implies that $q_a^x < 2/3$. Yet, this implies that some of $x$ could be reallocated from Bob to Alice and some of $y$ from Alice to Bob to make both strictly better off, contradicting efficiency. Therefore, it must be that $q_b^x = 1/3$. The previous constraints then imply that the hypothesized allocation is the unique efficient and balanced allocation. However, notice that $u_a(q_b) = 1/3 + \epsilon > 2/3 = u_a(q_a)$ for any $\epsilon \in (1/3, 1)$, implying that this allocation cannot be envy-free.

The key in this counterexample is that---if weights represent task costs---then $x$ is too costly to fully allocate to Alice, who might prefer it because she is relatively more productive at it than $y$. Bob must help Alice with $x$ even though it would be efficient and envy-free to allocate all of $x$ to Alice and all of $y$ to Bob\footnote{In the above example, there is a quirk in the interpretation because I normalize utilities to be positive. It should seem that the efficient and envy-free allocation is \textit{less} preferable to Alice than the unique efficient and balanced allocation. Re-normalizing utilities to be negative implies that Bob envies Alice, and then the efficient and envy-free allocation is less preferable for Alice.}. Alice's envy is a reflection of the normative judgment of costs that the firm makes, implying that, in the workplace context, such envy is unlikely to be viewed as justified. It follows that an envy-free allocation will not necessarily be \textit{more fair} than a balanced allocation.

%% file: Paper/Conclusion.tex
\section*{4 \hspace{5pt} Conclusion}

While within-firm task allocation aimed at balance appears to be ubiquitous in practice, it has not been thoroughly examined through a market design lens. I provide a theoretical framework to study this problem, and I design an allocation mechanism that is efficient---unlike the status quo---and asymptotically balanced. My theoretical results show that DPM's predicted effects are stronger when workers can accurately report their preferences and when workers have real comparative advantages, two sensical prerequisites. I provide an exact, reduced-form expression that allows the market designer to estimate DPM's effect on worker productivity under Bernoulli outcomes. Further, under restrictions on the number of task types, I characterize DPM's allocations and show that equilibrium play does not affect the reduced-form approach to estimate DPM's effects.

In future work, I plan to run an experiment with workers completing real-effort tasks to assess the validity of the theoretical predictions in empirical markets that do not satisfy the theoretical assumptions. The experiment design will feature natural variation in the empirical preference reporting distribution, uncontrolled in the experiment, to allow for a robust, falsifiable test. Combined with simulations that validate the theoretical prediction across many different markets, affirmative evidence that DPM's effect exceeds the theoretical prediction could provide evidence that $\underline{\mathcal{P}}$ is a conservative and generalizable estimate.

This framework is relevant for high-stakes fields like child welfare. Caseworkers may be unable to foresee the nuances of particular cases, and their preferences may be only loosely correlated with child welfare or effort. The reduced-form approach provides market designers a way to estimate DPM's effects \textit{before} implementation using firm-level data and (optional) low-cost data collection.

Though, one caveat is warranted: there may be more unforeseeable roadblocks to implementing DPM. Rotation is a simple mechanism that does not require preference reporting; DPM requires formulating a preference reporting language and explaining its mechanics to workers. Additionally, while it is efficient and balanced, its lack of strategyproofness means that, even if fair according to balance, it may or may not be perceived as such if workers observe that strategizing can lead to subjectively better allocations. Last, the tightness of the balance constraint might lead to different choices in implementation; if near-exact, finite-time balance is necessary, then bounded DPM should be chosen over DPM. As such, introducing DPM into the field requires care.

Notwithstanding, the scope for using preferences to improve task allocation seems large. The present work limits itself to environments with additive utility and a stable, dynamic environment. What might occur if worker preferences exhibit complementarities? Do the results withstand demand constraints? How might results carry over, or not, if the arrival distribution drifts over time? More fundamentally, what economic forces might incentivize a firm to allocate tasks to satisfy balance, and can balance be rationalized as an endogeneous phenomena within profit-maximizing firms? Many of these issues are ripe for exploration.

%% file: Paper/Appendix.tex
\section{Theory Appendix}

\begin{center}
    \textbf{Proof of Proposition \ref{proposition:rotation}}
\end{center}
I will write:
\[W_t(a_i) = \sum_{k = 1}^t \sum_{x \in X} w^x a_{i,k}^x\]
Toward a contradiction, suppose that Rotation is not ex-post balanced up to $k^*$. Let $t^*$ be the first $t$ such that for some $i,j \in I$:
\[W_{t^*}(a_i) - W_{t^*}(a_j) > k^* = \max_{x \in X} w^x\]
By definition of Rotation:
\[i = \argmin_{j \in I} W_{t^*-1}(a_j)\]
Then, for $j$: $W_{t^*-1}(a_j) \geq W_{t^*-1}(a_i)$. Therefore:
\begin{align*}
    W_{t^*}(a_i) &= W_{t^*-1}(a_i) + \sum_{x \in X} w^x a_{i,t^*}^x \\
    &\leq W_{t^*-1}(a_j) + \sum_{x \in X} w^x a_{i,t^*}^x \\
    &\leq W_{t^*-1}(a_j) + k^* \\
    &\leq W_{t^*}(a_j) + k^*
\end{align*}
a contradiction. This proves the Proposition.

\begin{center}
    \textbf{Proof of Proposition \ref{proposition:inefficient}}
\end{center}

I construct an example. Consider an environment with $I = \{1, 2\}$, $X = \{x, y\}$, $F_t(x) = F_t(y) = 1/2$ for all $t \in [T]$, $v_1(x) > v_1(y)$, and $v_2(y) > v_2(x)$. Without loss of generality, the order is $o = (1,2)$.

Consider a fixed, even time horizon $T$. Additionally, I will let $[t]_+$ denote the set of even natural numbers between $1$ and $t$ (inclusive), and $[t]_-$ the set of odd natural numbers between $1$ and $t$ (inclusive). Then, the probability that $1$ has positive quantities of $x$ at the end of an even time $t \geq 2n$ is:
\[p_1^{x,t} = Pr\{q_{1,k}^x = 1 \text{ for some } k \in [t]_-\} = 1 - Pr\{q_{1,k}^x = 0 \text{ for all } k \in [t]_-\} = 1 - \frac{1}{2^{t/2}}\]
I define this analogously for $y$ and $i = 2$. Therefore:
\[\lim_{T \rightarrow \infty} p_i^{x,T} = \lim_{T \rightarrow \infty} p_i^{y,T} = 1\]
which implies
\[\lim_{T \rightarrow \infty} Pr\bigg\{\sum_{t = 1}^T q_{1,t}^y > 0 \cup \sum_{t = 1}^T q_{2,t}^x > 0\bigg\} = 1\]
but then the allocation cannot be efficient, because $1$ and $2$ could exchange equal amounts of $x$ and $y$ to mutually benefit.

\begin{center}
    \textbf{Proof of Proposition \ref{proposition:expost}}
\end{center}

The market $\mathcal{M}$ is:
\[I = \{1,2\} \quad X = \{x, y, z\} \quad T = 2\]
The weights are: $w^x = 1/5$, $w^y = 1/5$, and $w^z = 3/5$ (in the ratio $1:1:3$). Preferences are:
\begin{center}
    \begin{tikzpicture}
        \matrix[payoffmatrix] (m) {
            & $x$ & $y$ & $z$ \\
            $v_1(\cdot)$ & $10$ & $1$ & $5$ \\
            $v_2(\cdot)$ & $1$ & $10$ & $5$ \\
        };
    \end{tikzpicture}
\end{center}
Let $a := \pi(\mathcal{M})$ for some arbitrary random mechanism $\pi$ that satisfies ex-post efficiency and ex-post balance. I will write $a_i = (x,y,o)$ if $a_{i,1}^x = 1$, $a_{i,2}^y = 1$, $a_{i,3}^x = a_{i,3}^y = 0$, and so on. I denote the allocation restricted to periods until $t$ as $[a]^t = ([a_i]^t)_{i \in I}$ where $[a_i]^t = (a_{i,1}, a_{i,2}, ..., a_{i,t})$.

Any mechanism necessarily cannot distinguish between realizations of arrivals at $t = 1$. Specifically, $\pi_1$ cannot condition on the realized arrival at $t = 2$. Suppose that $a_{1,1}^z = 1$ and that the realization is $(z,x)$. Ex-post balance up to $k^*$ implies that $a_1 = (z, o)$ and $a_2 = (o,x)$. This is ex-post inefficient. Instead, suppose that $a_{2,1}^z = 1$ and that the realization is $(z,y)$. Symmetrically, ex-post balance up to $k^*$ gives $a_1 = (o,y)$ and $a_2 = (z,o)$ which is also ex-post inefficient. 

Therefore, if the first arrival is $z$, then there is always some realization such that $\pi$ is ex-post inefficient if it is ex-post balanced up to $k^*$, a contradiction.

\begin{center}
    \textbf{Proof of Theorem \ref{thm:existence}}
\end{center} 

For prices $p \in R_+^{|X| \times T}$ and a ratio $r_i \in R_+$, $i$'s budget correspondence is
\[B_i(p,r_i) = \{ q_i \in \mathcal{Q} : r_i p \cdot q_i \leq 1 \}\]
I will truncate the budget correspondence to a compact set. Denote the truncated consumption set as $T = \{ q_i \in \mathcal{Q} : q_{i,t}^x \leq K \hspace{5pt} \forall x \in X, t \in [T] \}$ for some $K > 1$, then the truncated budget set is
\[T_i(p, r_i) = B_i(p, r_i) \cap T\]
$i$'s demand correspondence is:
\[d_i(p, r_i) = \argmax_{q_i \in T_i(p, r_i)} u_i(q_i)\]

\textbf{Step 1.} $d_i(p, r_i)$ is upper hemicontinuous.

\begin{itemize}
    \item $T_i(p, r_i)$ is compact. This follows from the fact that it is bounded and closed.
    \item $T_i(p, r_i)$ is non-empty. The zero consumption vector is always in it.
    \item $d_i(p, r_i)$ is convex-valued. This follows from the fact that expected utility $u_i$ is continuous in $q_i$, and the set of maximizers for a quasiconcave (and therefore linear) function over a convex domain is convex.
    \item $d_i(p, r_i)$ is non-empty and compact by the fact that the set of maximizers of a continuous function on a compact domain is non-empty and compact.
\end{itemize}

To apply Berge's maximum theorem, it remains to show that the truncated budget set is upper and lower hemicontinuous. I begin with the former.

\begin{itemize}
    \item $T_i(p, r_i)$ is upper hemicontinuous.
\end{itemize}

Because $T_i(p, r_i)$ is compact-valued, it is upper hemicontinuous if it has a closed graph. Consider a sequence $(p^n, r_i^n, q_i^n) \rightarrow (p, r_i, q_i)$ such that $q^n_i \in T_i(p^n, r_i^n)$. This implies that $r_i^n p^n \cdot q_i^n \leq 1$. By limit properties, $r_i^n p_i^n \cdot q_i^n = \sum_{t = 1}^T \sum_{x \in X} r_i^n p_t^{n,x} q_{i,t}^{n,x} \rightarrow \sum_{t = 1}^T \sum_{x \in X} r_i p_t^x q_{i,t}^x = r_i p \cdot q_i$. Moreover:
\[r_i^n p_i^n \cdot q_i^n \leq 1 \hspace{5pt} \forall n\]
implies that $r_i p \cdot q_i \leq 1$, i.e., $q_i \in B_i(p, r_i)$. Clearly, $q_i^n \in T_i(p^n, r_i^n) \implies q^n_i \in T$. If $q_i \notin T$, then I can take a sufficiently large $n$ such that $q_i^n \notin T$, a contradiction. Therefore, $q_i \in T$ so that $q_i \in T_i(p, r_i)$.

\begin{itemize}
    \item $T_i(p, r_i)$ is lower hemicontinuous.
\end{itemize}

$T_i(p, r_i)$ is lower hemicontinuous at $(p, r_i)$ if, for every open set $V$ such that $V \cap T_i(p, r_i)$, there exists an open ball $U$ around $(p, r_i)$ such that $T_i(p', r'_i) \cap V$ for all $(p', r'_i) \in U$.

Since wealth is strictly positive, I can take some $q_i \in V \cap T_i(p, r_i)$ satisfying $r_i p \cdot q_i < 1$. Let
\[c = 1 - r_i p \cdot q_i \]
$c$ is strictly positive. Consider some stacked vector $(p', r'_i) \in U$ where $U = \{ (p', r'_i) : ||(p', r'_i) - (p, r_i)||_\infty < \epsilon \}$ for some $\epsilon > 0$. Denote $M \equiv ||(p', r'_i) - (p, r_i)||_\infty$ as the largest magnitude component of the vector difference. I have that:
\begin{align*}
    r'_i p' \cdot q_i = \sum_{t = 1}^T \sum_{x \in X} r'_i p'^{,x}_t q_{i,t}^x &\leq \sum_{t = 1}^T \sum_{x \in X} (r_i + M)(p^x_t + M) q_{i,t}^x \\
    &\leq \sum_{t = 1}^T \sum_{x \in X} r_i p^x_t q_{i,t}^x + r_i M q_{i,t}^x + p_t^x M q_{i,t}^x + M^2 q_{i,t}^x \\
    &< \sum_{t = 1}^T \sum_{x \in X} r_i p^x_t q_{i,t}^x + r_i M q_{i,t}^x + p_t^x M q_{i,t}^x + M q_{i,t}^x \\
    &< r_i p \cdot q_i + \epsilon D
\end{align*}
where $\sum_{t = 1}^T \sum_{x \in X} (r_i + p_t^x + 1) q_{i,t}^x \leq (r_i + 1)KT|X| + 1/r_i \equiv D$ follows by the budget constraint and algebra. The first inequality above follows by definition of the $\infty$-norm since $r'_i - r_i \leq M$ and $p'^{,x}_t - p^x_t \leq M$. The third inequality holds for any $\epsilon < 1$ since this will imply that $M^2 \leq M < \epsilon$. Then, I can choose $\epsilon = c/D$, and the above requirement is trivially satisfied since $r_i \geq 0$, $c \leq 1$, $K > 1$, $T \geq 1$, and $|X| \geq 1$. Then, I have:
\[r'_i p'\cdot q_i < 1\]
Therefore, $q_i \in B_i(p', r'_i)$. Of course, $q_i \in T$ by assumption that $q_i \in T_i(p, r_i)$ so $q_i \in T_i(p', r_i')$. This implies $T_i(p', r'_i) \cap V$ for all $(p', r'_i) \in U$.


\textbf{Step 2.} The excess demand correspondence has a fixed point. 

Let $\bar{p} = 2|I|$, $\bar{r} > 1$, and $\underline{r} = 1$. Market excess demand for $(x,t) \in X \times [T]$ is
\[Z^x_t(p, r) = \sum_{i \in I} d_{i,t}^x(p, r_i) - 1\]
Excess demand for $i$ is
\[Z_i(p, r_i) = E_{F,w}[d_i(p, r_i)] - \frac{W_F}{|I|}\]
Similarly, I can then define the price- and ratio-adjustment maps:
\[A^x_t(p, r) = \min\Bigg\{ \max\{ p_t^x + Z_t^x(p, r), 0 \}, \bar{p} \Bigg\}\]
and
\[A_i(p, r_i) = \min\Bigg\{ \max\{ r_i + Z_i(p, r_i), \underline{r} \}, \bar{r} \Bigg\}\]
I can define the stacked adjustment map as
\[A(p,r) = (\{A^x_t(p, r)\}_{(x,t) \in X \times [T]}, \{A_i(p, r_i)\}_{i \in I})\]
The values of the excess demands and adjustments are non-empty, convex, and compact because they are linear transformations of a non-empty, convex, and compact valued correspondence. Excess demand is upper hemicontinuous since it is a linear combinations of compact, upper hemicontinuous correspondences. The adjustments are therefore upper hemicontinuous because $\min\{p, c\}$ and $\max\{p, c\}$ are continuous in $p$ and monotonic, therefore the compositions with upper hemicontinuous functions are upper hemicontinuous (\qcitenp{aliprantis1994correspondences}). Finally, the stacked adjustment map inherits these properties.

Since prices and ratios have both minimums and maximums, the domain and range for $A(p,r)$ is non-empty and compact. It is easy to see that they are also convex.

By Kakutani's fixed-point theorem, there is a fixed point $(p^*, r^*) \in A(p^*, r^*)$ such that, for all $(x,t) \in X \times [T]$:
\[p^{*,x}_t \in \min\Bigg\{\max\{ p_t^{*,x} + Z_t^x(p^*, r^*), 0\}, \bar{p} \Bigg\}\]
and for all $i \in I$:
\[r_i^* \in \min\Bigg\{\max\{ r_i^* + Z_i(p^*, r^*_i), \underline{r} \}, \bar{r} \Bigg\}\]

\textbf{Step 3.} The fixed point $(p^*, r^*)$ and $q^* \in d(p^*, r^*)$ form a competitive equilibrium from taxation.

\textit{Claim 1:} $p^{*,x}_t > 0$ for all $(x,t) \in X \times [T]$. Toward a contradiction, suppose that $p^{*,x}_t = 0$ for some $(x,t)$. Since demand is strictly increasing in all $(x,t)$ for all $i$, $q^{*,x}_{i,t} = K > 1$ which implies that $A_t^x(p^*, r^*) > p^{*,x}_t = 0$, contradicting that this is a fixed point. 

\textit{Claim 2:} $p^{*,x}_t < \bar{p}$ for all $(x,t) \in X \times [T]$. Suppose not, then $p^{*,x}_t = \bar{p}$ for some $(x,t)$. By the budget constraint:
\[q_{i,t}^{*,x} \leq \frac{1}{r_i \bar{p}} \implies \sum_{i \in I} q_{i,t}^{*,x} \leq \sum_{i \in I} \frac{1}{r_i \bar{p}} \leq \frac{|I|}{\bar{p}} < 1\]
where the second equality on the right-hand side follows because $r_i \geq 1$ for all $i$. This implies that $A_t^x(p^*, r^*) < p^{*,x}_t = \bar{p}$, contradicting that this is a fixed point.

\textit{Claim 3:} $\sum_{i \in I} q_{i,t}^{*,x} = 1$ for all $(x,t) \in X \times [T]$. By the above two claims, I have:
\[p_t^{*,x} = p_t^{*,x} + Z_t^x(p^*, r^*) \implies Z_t^x(p^*, r^*) = 0 \implies \sum_{i \in I} q_{i,t}^{*,x} = 1\]
Claim 3 implies
\[\sum_{i \in I} E_{F,w}[q_i^*] = \sum_{i \in I} \sum_{t = 1}^T \sum_{x \in X} F_t(x) w^x q_{i,t}^{*,x} = \sum_{t = 1}^T \sum_{x \in X} F_t(x) w^x = W_F\]
so that there is at least one $i$ such that $E_{F,w}[q_i^*] \geq W_F/|I|$.

\textit{Claim 4:} $Z_i(p^*, r_i^*) \leq 0$ for all $i \in I$. Suppose not, then $Z_i(p^*, r_i^*) > 0$ for some $i$. By assumption that $r^*$ is a fixed point, I must have that $r_i^* = \bar{r}$. Moreover, $Z_i(p^*, r_i^*) > 0$ implies that $E_{F,w}[q_i^*] > W_F/|I|$. This also implies that there exists some $j$ such that $E_{F,w}[q_j^*] < W_F/|I|$. Otherwise:
\[\sum_{k \in I} E_{F,w}[q_k^*] > W_F\]
contradicting Claim 3. In addition, by assumption that $r$ is a fixed point, I must have that $r_j^* = 1$. Let $(y,n) = \argmax_{(x,t) \in X \times [T]} q_{i,t}^{*,x}$. By the budget constraint, I have:
\[p^* \cdot q_i^* \leq \frac{1}{\bar{r}} \implies 0 < p^{*,y}_n  \leq \frac{1}{\bar{r} q_{i,n}^{*,y}} \leq \frac{1}{\bar{r}|I|}\]
where the last inequality on the right-hand side follows because $E_{F,w}[q_i^*] > W_F/|I|$ implies that $q_{i,t}^{*,y} > 1/|I|$, otherwise:
\[E_{F,w}[q_i^*] \leq \sum_{t = 1}^T \sum_{x \in X} \frac{F_t(x) w^x}{|I|} = \frac{W_F}{|I|}\]
a contradiction. Hence, I can make $p^{*,y}_n$ arbitrarily small by increasing $\bar{r}$. This implies that I can make $p^{*,y}_n$ small enough such that $q_{j,n}^{*,y} = K > 1$ since $r_j^* = 1$ and utility is strictly positive for all goods (or else this holds for $j$ and some other good $(z,k)$ with $p_k^{*,z}$ arbitrarily small that maximizes bang-per-buck). This contradicts Claim 3. Therefore, $Z_i(p^*, r_i^*) \leq 0$ for all $i \in I$.

\textit{Claim 5:} $E_{F,w}[q_i^*] = W_F/|I|$ for all $i \in I$. By Claim 4, I have that $E_{F,w}[q_i^*] \leq W_F/|I|$ for all $i \in I$. By Claim 3, I have that $\sum_{i \in I} E_{F,w}[q_i^*] = W_F$ which is only possible if $E_{F,w}[q_i^*] = W_F/|I|$ for all $i \in I$.

\textit{Claim 6:} If $u_i(q_i) > u_i(q_i^*)$, then $r_i^* p^* \cdot q_i > 1$ for each $q_i \in \mathcal{Q}$. If $q_i \in T$, this follows immediately. Suppose that $q_i \notin T$ and $r_i^* p^* \cdot q_i \leq 1$. There exists some $\alpha \in (0,1)$ such that $q'_i = \alpha q_i + (1 - \alpha) q_i^* \in T$ because $q_{i,t}^{*,x} \leq 1$ for all $(x,t) \in X \times [T]$. Clearly $q'_i \in B_i(p^*, r_i^*)$, and by linearity of expected utility, $u_i(q'_i) > u_i(q_i^*)$, contradicting that $q_i^* \in d_i(p^*, r_i^*)$. Therefore, $r_i^* p^* \cdot q_i > 1$.

By Claims 1-6, I have that $(p^*, r^*, q^*)$ is a competitive equilibrium from taxation satisfying market clearing and balance.

\begin{center}
    \textbf{Proof of Theorem \ref{thm:first-welfare}}
\end{center} 

Toward a contradiction, suppose that $(p^*, r^*, q^*)$ is a balanced equilibrium and that there is some feasible allocation $q$ such that, for all $i \in I$, $u_i(q_i) \geq u_i(q_i^*)$ and $u_j(q_j) > u_j(q_j^*)$ for some $j \in I$. 

First, note that for any $q'_i \in \mathcal{Q}$, if $u_i(q'_i) \geq u_i(q_i^*)$, then $p^* \cdot q'_i \geq p^* \cdot q_i^*$. Otherwise, I can find a $\alpha \in (0,1)$ such that $p^* \cdot (q'_i + \alpha q'_i) \leq 1/r_i$, and, trivially, $u_i(q'_i + \alpha q'_i) > u_i(q_i^*)$, contradicting that $(p^*, r^*, q^*)$ is an equilibrium.

Second, by definition of a balanced equilibrium, if $u_i(q_i) > u_i(q_i^*)$, then $p^* \cdot q_i > 1/r_i^* \geq p^* \cdot q_i^*$. Therefore, summing over all agents:
\[\sum_{i \in I} p^* \cdot q_i > \sum_{i \in I} p^* \cdot q_i^*\]
so that there must exist some $(x,t) \in X \times [T]$ such that:
\[\sum_{i \in I} q_{i,t}^{x} > \sum_{i \in I} q_{i,t}^{*,x} = 1\]
where the equality holds by market clearing, contradicting the assumption that $q$ is feasible. Therefore, no such allocation $q$ can exist, and the allocation $q^*$ is Pareto efficient.


\begin{center}
    \textbf{Proof of Theorem \ref{thm:efficient}}
\end{center} 

For the first part of the Theorem (steps 1 and 2), I generalize the proof in \qcite{benade2024fair} to accommodate non-identical distributions over time. Along the way, I exploit my notation to simplify the proof at some steps.

\textbf{Step 1:} DPM is ex-ante efficient under ex-post utility. 

I claim that there does not exist any $q \in \mathcal{Q}$ such that $U_i(q_i) \geq U_i(q_i^*)$ for all $i \in I$ and $U_j(q_j) > U_j(q_j^*)$ for some $j \in I$. 

Toward a contradiction, suppose that such a $q$ exists. Let:
\[\delta_{i,t}^x = \frac{q_{i,t}^x - q_{i,t}^{*,x}}{F_t(x)}\]
This implies:
\[u_i(q_i^* + c\delta_i) = u_i(q_i^*) + c(U_i(q_i) - U_i(q_i^*)) \geq 0\]
and strictly for some $j$. I show that there exists some allocation $q'_i := q_i^* + c\delta_i$ that is feasible
for some $c > 0$. I conjecture $c < \min_{x,t} F_t(x)$. Then, for all $(x,t) \in X \times [T]$:
\[q'^{,x}_{i,t} = q_{i,t}^{*,x} + c\bigg( \frac{q_{i,t}^x - q_{i,t}^{*,x}}{F_t(x)} \bigg) \leq q_{i,t}^{*,x} + \bigg( q_{i,t}^x - q_{i,t}^{*,x} \bigg)\] 
By assumption of feasibility, $\sum_{i \in I} q_{i,t}^x \leq 1$. Since an ex-ante efficient allocation assigns all objects, $\sum_{i \in I} q_{i,t}^{*,x} = 1$ for all $(x,t) \in X \times [T]$. Then:
\[\sum_{i \in I} q'^{,x}_{i,t} = \sum_{i \in I} q_{i,t}^{*,x} + c\bigg( \frac{q_{i,t}^x - q_{i,t}^{*,x}}{F_t(x)} \bigg) \leq \sum_{i \in I} q_{i,t}^{*,x}\]
because $\sum_{i \in I} q_{i,t}^x - q_{i,t}^{*,x} \leq 0$. Finally, $q'^{,x}_{i,t} \in [0,1]$ because if $1 > q^{*,x}_{i,t} > 0$, one can set $c$ such that $q_{i,t}^{*,x} + c\delta_{i,t}^x \in [0,1]$. Otherwise, if $q^{*,x}_{i,t} = 1$, then $\delta_{i,t}^x \leq 0$, and if $q^{*,x}_{i,t} = 0$, then $\delta_{i,t}^x \geq 0$. In both cases, one can still set $c$ small enough such that $q_{i,t}^{*,x} + c\delta_{i,t}^x \in [0,1]$.

However, this implies that the feasible allocation $q'$ Pareto dominates $q^*$ under ex-ante utilities $u$, contradicting DPM's ex-ante efficiency.

\textbf{Step 2:} DPM is ex-post efficient. 

DPM outputs an ex-post allocation $a^* \in \mathcal{A}$. I claim that there does not exist any $a \in \mathcal{A}$ such that (a) $U_i(a_i) \geq U_i(a^*_i)$ for all $i \in I$ and $U_j(a_j) > U_j(a^*_j)$ for some $j \in I$ and (b) $\sum_{i \in I} a_{i,t}^x \leq |(x,t)|$, where $|(x,t)| = \sum_{i \in I} a^{*,x}_{i,t}$. The latter condition must hold because DPM allocates all tasks ex-post, and a Pareto dominating allocation cannot allocate a task $x$ that does not arrive at time $t$. Toward a contradiction, suppose that such an $a$ exists. 

Let $\Delta = a - a^*$. By assumption, for any $\lambda > 0$, $U_i(\lambda\Delta_i) \geq 0$ for all $i \in I$, and $U_j(\lambda\Delta_j) > 0$ for some $j \in I$. I will prove that $q^* + \lambda\Delta$ is a feasible ex-ante allocation for some $\lambda > 0$. 

By the ex-post feasibility constraint (b):
\[\sum_{i \in I} \Delta_{i,t}^x = \sum_{i \in I} a_{i,t}^x - a_{i,t}^{*,x} \leq 0 \]
which implies that $\sum_{i \in I} q^{*,x}_{i,t} + \lambda \Delta_{i,t}^x \leq \sum_{i \in I} q^{*,x}_{i,t} = 1$ for all $(x,t) \in X \times [T]$ and $\lambda > 0$. Therefore, it only remains to show that $q^{*,x}_{i,t} + \lambda \Delta_{i,t}^x \in [0,1]$ for all $i \in I$ and $(x,t) \in X \times [T]$.

Immediately, $q^{*,x}_{i,t} + \lambda \Delta_{i,t}^x \in [0,1]$ if $\Delta_{i,t}^x = 0$. So, I will consider two cases where $\Delta_{i,t}^x \neq 0$: (i) $a^{*,x}_{i,t} = 1$ and $a^{x}_{i,t} = 0$ or (ii) $a^{*,x}_{i,t} = 0$ and $a^{x}_{i,t} = 1$. 

In case (i):
\[a^{*,x}_{i,t} = 1 \implies q^{*,x}_{i,t} > 0\]
Therefore, I can find a small enough $\lambda > 0$ such that:
\[q^{*,x}_{i,t} + \lambda (a^x_{i,t} - a^{*,x}_{i,t}) = q^{*,x}_{i,t} - \lambda > 0\]
In case (ii):
\[a^{*,x}_{i,t} = 0 \implies q^{*,x}_{i,t} < 1\]
so that, again, I can find a small enough $\lambda > 0$ such that:
\[q^{*,x}_{i,t} + \lambda (a^x_{i,t} - a^{*,x}_{i,t}) = q^{*,x}_{i,t} + \lambda < 1\]
guaranteeing that $q^* + \lambda \Delta$ is feasible. Finally, I have that $U_i(q_i^* + \Delta_i^*) \geq U_i(q_i^*)$ for all $i \in I$, and $U_j(q_j^* + \Delta_j^*) > U_j(q_j^*)$, contradicting the fact that $q^*$ is ex-ante efficient under $U$. Therefore, $a^*$ must be ex-post efficient. 

\begin{center}
    \textbf{Proof of Theorem \ref{thm:asymptotic-balance}}
\end{center} 

Denote $(p^*, r^*, q^*)$ as a market clearing and balanced equilibrium for $\mathcal{M}$. I construct prices, tax rates, and an allocation which are: $p^k$ satisfying $p_{mT + j}^k = p_j^*$ for $m \in [k-1]$ and $j \in [T]$, $r^k$ satisfying $r_i^k = r_i^*/k$, and $q^k$ satisfying $q_{i,mT + j}^k = q_{i,j}^*$ for $m \in [k-1]$ and $j \in [T]$. Thus, the tuple $(p^k, r^k, q^k)$ is a $k$-replica of $p^*$ and $q^*$ and scales all agent budgets by $k$.

\textbf{Step 1:} $(p^k, r^k, q^k)$ is a market clearing and balanced equilibrium for $\mathcal{M}^k$.

First, notice that $r_i^* p^* \cdot q_i^* = 1$, otherwise $i$ could expend more money and strictly increase utility so that $q^*$ cannot be a balanced equilibrium. I have:
\[p^k \cdot q_i^k = \sum_{t = 1}^{kT} \sum_{x \in X} p_t^{k,x} q_{i,t}^{k,x} = \sum_{m = 1}^k \sum_{t = 1}^T \sum_{x \in X} p_t^{*,x} q_{i,t}^{*,x} = \frac{k}{r_i^*}\] 
which implies:
\[\frac{r_i^*}{k} p^k \cdot q_i^k = r_i^k p^k \cdot q_i^k = 1\]
In addition, I claim that $q_i^k \in d_i(p^k, r_i^k)$. Toward a contradiction, suppose that this did not hold, implying that there exists some $q_i \in d_i(p^k, r_i^k)$ such that $u_i(q_i) > u_i(q_i^k)$. Let $[q_i^k]_m = (q_{i,mT+1}, q_{i,mT + 2}, ..., q_{i,mT + T})$ denote the $m$th segment of $q_i^k$ so that $[q_i^k]_m = q_i^*$ for any $m \in [k]$. I have:
\[u_i(q_i) = \sum_{m = 1}^k u_i([q_i]_m) > \sum_{m = 1}^k u_i([q_i^k]_m) = u_i(q_i^k)\]
which implies:
\[u_i\bigg(\frac{1}{k}\sum_{m = 1}^k [q_i]_m\bigg) > u_i\bigg(\frac{1}{k} \sum_{m = 1}^k [q_i^k]_m\bigg) = u_i(q_i^*)\]
However, the price vector $p^k$ is simply a $k$-replica of $p^*$, implying that:
\[p^k \cdot q_i = \sum_{m = 1}^k p^* \cdot [q_i]_m = \frac{k}{r_i^*} \implies p^* \cdot \frac{1}{k} \sum_{m = 1}^k [q_i]_m = \frac{1}{r_i^*}\]
so that $1/k \sum_{m = 1}^k [q_i]_m$ is affordable and strictly preferred to $q_i^*$, contradicting $q^*$ being a balanced equilibrium. Therefore, it must be that $q_i^k \in d_i(p^k, r_i^k)$ for all $i \in I$. 

Market clearing is satisfied because $\sum_{i \in I} q_{i,mT + j}^{k,x} = \sum_{i \in I} q_j^{*,x} = 1$ for all $(x,m) \in X \times [kT]$ satisfying $m \in [k-1]$ and $j \in [T]$, which covers all $m \in [kT] - [T]$. Immediately, this holds for all $(x,t) \in X \times [T]$, which proves the claim. Balance holds because:
\[E_{F,w}[q_i^k] = \sum_{t = 1}^{kT} \sum_{x \in X} F_t(x) w^x q_{i,t}^{k,x} = \sum_{m = 1}^k \sum_{t = 1}^T \sum_{x \in X} F_t(x) w^x q_{i,t}^{*,x} = k\frac{W_F}{|I|}\]
This proves the step.

\textbf{Step 2:} DPM is asymptotically ex-post balanced.

It is immediate that if $q^k$ is the balanced equilibrium allocation computed by DPM, $\pi := \pi_{DPM}$, and $\bar{\pi}_i^k = \sum_{t = 1}^{kT} \sum_{x \in X} w^x \pi_{i,t}^x(\mathcal{M}^k)$, then:
\[E_F[\bar{\pi}_i^k] = \sum_{t = 1}^{kT} \sum_{x \in X} w^x \Pr\{\pi_{i,t}^x(\mathcal{M}^k) = 1\} = \sum_{m = 1}^k \sum_{t = 1}^{T} \sum_{x \in X} F_t(x) w^x q_{i,t}^x = E_{F,w}[q_i^k]\]
In addition, each segment sum $\sum_{t = 1}^T \sum_{x \in X} w^x [\pi_{i,t}^x(\mathcal{M}^k)]_m$ is a random variable with finite mean $E_{F}[q_i^*] = W_F/|I|$. By the Strong Law of Large Numbers:
\begin{align*}
    Pr\bigg\{\lim_{k \rightarrow \infty} |\frac{1}{k} \bar{\pi}_i^k - E_{F,w}[q_i^*]| < \epsilon\bigg\} &= Pr\bigg\{\lim_{k \rightarrow \infty} |\frac{1}{k} \sum_{t = 1}^{kT} \sum_{x \in X} w^x \pi_{i,t}^x(\mathcal{M}^k) - \frac{W_F}{|I|}| < \epsilon\bigg\} \\
    &= Pr\bigg\{\lim_{k \rightarrow \infty} |\frac{1}{k} \sum_{m = 1}^k \sum_{t = 1}^{T} \sum_{x \in X} w^x [\pi_{i,t}^x(\mathcal{M}^k)]_m - \frac{W_F}{|I|}| < \epsilon\bigg\}\\
    &= 1
\end{align*}
for any $\epsilon > 0$ for each $i \in I$, establishing the result.

\textbf{Step 3.} DPM's expected imbalance is $O(\sqrt{T/|I|})$.

Let $\pi$ be a realization (ex-post allocation) of $\pi_{DPM}$. Define:
\[Z_{i,t} = \sum_{x \in X} w^x \pi_{i,t}^x\]
be $i$'s total weighted ex-post allocation at $t$. Necessarily, $Z_{i,t} \in \{0\} \cup \{w^x : x \in X\}$. Then:
\[\sum_{t = 1}^T \distexpectation[F]{Z_{i,t}} = \sum_{t = 1}^T \sum_{x \in X} F_t(x) w^x q_{i,t}^{*,x} = \distexpectation[F,w]{q_i^*} = \frac{W_F}{|I|}\]
by ex-ante balance. Define:
\[D_{i,t} = Z_{i,t} - \distexpectation[F]{Z_{i,t}} \implies W(a_i) - \frac{W_F}{|I|} = \sum_{t = 1}^T \Big[ Z_{i,t} - \distexpectation[F]{Z_{i,t}} \Big] = \sum_{t = 1}^T D_{i,t}\]
This implies:
\begin{align}
    \E[F]*{\Big| W(\pi_i) - \frac{W_F}{|I|} \Big|} = \E[F]*{\Big| \sum_{t = 1}^T D_{i,t} \Big|} &\leq \sqrt{ \E[F]*{\Big( \sum_{t = 1}^T D_{i,t} \Big)^2} }\\
    &= \sqrt{Var\bigg( \sum_{t = 1}^T D_{i,t} \bigg)}\\
    &= \sqrt{ \sum_{t = 1}^T Var(Z_{i,t}) }\\
    &\leq \sqrt{ \sum_{t = 1}^T \expectation{Z_{i,t}^2} }
\end{align}
(1) follows from Jensen's (let $g(x) = x^2$. $g(\expectation{|D_{i,t}|}) \leq \expectation{g(|D_{i,t}|)} = \expectation{g(D_{i,t})}$. Taking the square root: $\expectation{|D_{i,t}|} \leq \sqrt{\expectation{D_{i,t}^2}}$). (2) follows by construction. $\distexpectation[F]{D_{i,t}} = 0$, and $\distexpectation[F]{\sum_{t = 1}^T D_{i,t}} = \sum_{t = 1}^T \distexpectation[F]{D_{i,t}} = 0$. Further:
\[\sum_{t = 1}^T \distexpectation[F]{Z_{i,t}^2} = \sum_{t = 1}^T \sum_{x \in X} F_t(x) (w^x)^2 q_{i,t}^{*,x} \leq \sum_{t = 1}^T \sum_{x \in X} F_t(x) w^x q_{i,t}^{*,x} = \sum_{t = 1}^T \distexpectation[F]{Z_{i,t}} = \frac{W_F}{|I|}\]
because at most one $\pi_{i,t}^x$ is non-zero and this one term is exactly one. Moreover, since $\sum_{x \in X} w^x = 1$ and $|X| \geq 2$, I have $w^x \leq 1$ for all $x$, so $(w^x)^2 \leq w^x$, and all terms are non-zero. Therefore:
\[\frac{1}{|I|} \sum_{i \in I} \E[F]*{\Big| W(\pi_i) - \frac{W_F}{|I|} \Big|} \leq \sqrt{\frac{W_F}{|I|}} \leq \sqrt{\frac{T}{|I|}}\]
where the last inequality follows since $\sum_{x \in X} F_t(x) w^x \leq 1$ for all $x,t$.

\begin{center}
    \textbf{Proof of Proposition \ref{proposition:manipulable}}
\end{center}

The market $\mathcal{M}$ is:
\[I = \{1,2,3\} \quad X = \{x, y\} \quad T = 1 \quad F(x) = F(y) = \frac{1}{2}\]
The weights are: $w^x = w^y = 1/2$. Preferences are:
\begin{center}
    \begin{tikzpicture}
        \matrix[payoffmatrix] (m) {
            & $x$ & $y$ \\
            $v_1(\cdot)$ & $3$ & $1$ \\
            $v_2(\cdot)$ & $4$ & $1$ \\
            $v_3(\cdot)$ & $1$ & $2$ \\
        };
    \end{tikzpicture}
\end{center}
$W_F = 1/2$, and $W_F/|I| = 1/6$. Balance requires $q_i^x + q_i^y = 2/3$ for all $i \in I$.

\textbf{Step 1:} Any truthful, balanced equilibrium has $p^{*,x} = 3p^{*,y}$.

I first demonstrate that there exists a balanced equilibrium satisfying this condition. All agents have locally non-satiated demand, so they will consume their entire budgets (let $b_i := 1/r_i$). The utilities for consuming all $x$ or all $y$ for each agent are:
\begin{center}
    \begin{tikzpicture}
        \matrix[payoffmatrix] (m) {
            & $x$ & $y$ \\
            $i = 1$ & $b_i/p^{*,y}$ & $b_i/p^{*,y}$ \\
            $i = 2$ & $4b_i/3p^{*,y}$ & $b_i/p^{*,y}$ \\
            $i = 3$ & $b_i/3p^{*,y}$ & $2b_i/p^{*,y}$ \\
        };
    \end{tikzpicture}
\end{center}
Agent 1 is indifferent. Agent 2 prefers $x$, and agent $3$ prefers $y$. Allocations must satisfy:
\begin{align*}
    &q_2^x = \frac{b_2}{3p^{*,y}} = \frac{2}{3}\\
    &q_3^y = \frac{b_3}{p^{*,y}} = \frac{2}{3}
\end{align*}
Therefore:
\[\frac{4}{3} p^{*,y} = b_1\]
Market clearing directly implies that $q_1^x = q_1^y = 1/3$. Budgets can be found to satisfy these inequalities. For example: $b_1 = 4/3$, $b_2 = 2$, and $b_3 = 2/3$. The prices that ensure a market-clearing balanced equilibrium solve:
\begin{align*}
    &\frac{1}{3} p^{*,x} + \frac{1}{3} p^{*,y} = \frac{4}{3}\\
    &\frac{2}{3} p^{*,x} = 2\\
    &\frac{2}{3} p^{*,y} = \frac{2}{3}
\end{align*}
implying that $p^{*,x} = 3$ and $p^{*,y} = 1$. Next, it remains to prove that any balanced equilibrium necessarily satisfies this condition. From the utilities above, it is clear that if $p^{*,x} < 3p^{*,y}$, then $q_1^x = b_1 / p^{*,x}$, $q_1^y = 0$, $q_2^x = b_2 / p^{*,x}$, and $q_2^y = 0$. So:
\[\sum_{i \in I} q_i^x = \frac{b_1 + b_2}{p^{*,x}}\]
which is market clearing only if $b_1 + b_2 = p^{*,x}$. However, balance requires that $q_1^x + q_1^y = q_1^x = 2/3 \implies b_1 = (2p^{*,x})/3$; therefore, $q_2^x + q_2^y = q_2^x = 1/3$. This contradicts balance. 

Alternatively, if $p^{*,x} > 3p^{*,y}$, then $q_1^x = 0$, $q_1^y = b_1 / p^{*,y}$, $q_3^x = 0$, and $q_3^y = b_2 / p^{*,y}$. Market clearing requires $q_2^x = 1$. This contradicts balance.

\textbf{Step 2:} If $i = 1$ misreports $v_i(x) = 10$, any balanced equilibrium has $p^{*,x} = 4p^{*,y}$.

I demonstrate that there exists a balanced equilibrium satisfying this condition. The utilities for consuming all $x$ or all $y$ for each agent are:
\begin{center}
    \begin{tikzpicture}
        \matrix[payoffmatrix] (m) {
            & $x$ & $y$ \\
            $i = 1$ & $5b_i/2p^{*,y}$ & $b_i/p^{*,y}$ \\
            $i = 2$ & $b_i/p^{*,y}$ & $b_i/p^{*,y}$ \\
            $i = 3$ & $b_i/4p^{*,y}$ & $2b_i/p^{*,y}$ \\
        };
    \end{tikzpicture}
\end{center}

Agent 1 prefers $x$. Agent 2 is indifferent. Agent 3 prefers $y$. Thus, allocations satisfy:
\begin{align*}
    &q_1^x = \frac{b_1}{4p^{*,y}} = \frac{2}{3}\\
    &q_3^y = \frac{b_3}{p^{*,y}} = \frac{2}{3}
\end{align*}
Again, market clearing implies $q_2^x = q_2^y = 1/3$. Additionally, $(5/3)p^{*,y} = b_2$. Budgets and prices that satisfy balance and market clearing are: $p^{*,x} = 12$, $p^{*,y} = 3$, $b_1 = 8$, $b_2 = 5$, and $b_3 = 2$.

If $p^{*,x} < 4p^{*,y}$, then $q_1^x = b_1/p^{*,x}$, $q_1^y = 0$, $q_2^x = b_2/p^{*,x}$, and $q_2^y = 0$. Market clearing requires $q_3^y = 1$, violating balance. Finally, if $p^{*,x} > 4p^{*,y}$, then $q_2^x = 0$, $q_2^y = b_2/p^{*,y}$, $q_3^x = 0$, and $q_3^y = b_3/p^{*,y}$. Market clearing requires $q_1^x = 1$, violating balance. 

Truthful reporting guarantees that $u_1 = 2/3$. Manipulating guarantees that $u_1 = 1$. Theorem \ref{thm:asymptotic-balance} proves that the equilibria are equivalent for replica markets, and this proves the Proposition.

\begin{center}
    \textbf{Proof of Proposition \ref{proposition:impossibility}}
\end{center}

The market $\mathcal{M}$ is:
\[I = \{1,2\} \quad X = \{x, y, z\} \quad T = 1 \quad F(x) = F(y) = F(z) = \frac{1}{3}\]
The weights are: $w^x = 3/4$, $w^y = 1/6$, and $w^z = 1/12$ (in the ratio $9:2:1$). Preferences are:
\begin{center}
    \begin{tikzpicture}
        \matrix[payoffmatrix] (m) {
            & $x$ & $y$ & $z$ \\
            $v_1(\cdot)$ & $2 + \epsilon$ & $2$ & $2 - \epsilon$ \\
            $v_2(\cdot)$ & $1$ & $1$ & $3$ \\
        };
    \end{tikzpicture}
\end{center}
$W_F = 1/3$, and $W_F/|I| = 1/6$. Ex-ante balance requires $(3/4) q_i^x + (1/6) q_i^y + (1/12) q_i^z = 1/2$ for all $i \in I$, i.e., $3 q_i^x + 2/3 q_i^y + 1/3 q_i^z = 2$. Notice then that $q_i^x \leq 2/3$ and if $q_i^x = 2/3$, then $q_i^y = q_i^z = 0$. On the other hand, if $q_i^x < 2/3$, then either $q_i^y > 0$ or $q_i^z > 0$. 

Suppose that $q_1^x < 2/3$. The above implies that $q_1^y > 0$ or $q_1^z > 0$. Ex-ante efficiency implies that $q_2^x = 1 - q_i^x > 0$. However, this contradicts ex-ante efficiency because agent 1 would trade $y$ or $z$ for $x$, and agent 2 would trade $x$ for $y$ or $z$. Therefore, $q_1 = (2/3, 0, 0)$ and $q_2 = (1/3, 1, 1)$ is the only ex-ante efficient and ex-ante balanced allocation. 

Now, consider a manipulation by agent 1:
\begin{center}
    \begin{tikzpicture}
        \matrix[payoffmatrix] (m) {
            & $x$ & $y$ & $z$ \\
            $\hat{v}_1(\cdot)$ & $2$ & $3$ & $1$ \\
            $v_2(\cdot)$ & $1$ & $1$ & $3$ \\
        };
    \end{tikzpicture}
\end{center}

The same weight constraints hold. In particular, either $\hat{q}_1^x > 0$ or $\hat q_1^z > 0$. In either case, suppose that $\hat q_1^y < 1$. Then, ex-ante efficiency implies that $\hat q_2^y = 1 - \hat q_1^y > 0$. However, under the manipulation, agent 1 would trade $x$ or $z$ for $y$, and agent 2 would trade $y$ for $x$ or $z$. Therefore, $\hat q_1^y = 1$ in any ex-ante efficient and ex-ante balanced allocation.

Notice then that $u_1(\hat q_1) \geq 2/3$, whereas $u_1(q_1) = (4 + 2\epsilon)/9$. For small enough $\epsilon$, $u_1(\hat q_1) > u_1(q_1)$ for any ex-ante efficient and ex-ante balanced allocations under the manipulation and truthfulness, respectively. Therefore, if a randomized mechanism is ex-ante efficient and ex-ante balanced, then it cannot be strategyproof.

\begin{center}
    \textbf{Proof of Proposition \ref{proposition:convergence}}
\end{center}

Denote $a = \pi_{ROT}$. Note:
\[W(a_i) = \sum_{t = 1}^T \sum_{x \in X} w^x a_{i,t}^x = \sum_{t = 1}^T \sum_{x \in X} \tilde w^x a_{i,t}^x + \sum_{t = 1}^T \sum_{x \in X} \xi^x a_{i,t}^x\]
By Proposition \ref{proposition:rotation}: 
\[W(a_i) - \frac{W_F}{|I|} \leq k^* \quad \text{ for all } \quad i \in I\]
letting $\tilde W(a_i) = \sum_{t = 1}^T \sum_{x \in X} \tilde w^x a_{i,t}^x$ be the true expected imbalance and $\Xi(a_i) = \sum_{t = 1}^T \sum_{x \in X} \xi^x a_{i,t}^x$ be the error, this implies:
\[\tilde{W}(a_i) - \Xi(a_i) - \frac{W_F}{|I|} \leq k^*\]
Denote $Z_i = \sum_{t = 1}^T \sum_{x \in X} a_{i,t}^x$ be the total number of tasks assigned to $i$. Then, since each error term is iid:
\[\Vari*{\Xi(a_i)} = \sigma^2 Z_i\]
Moreover:
\[\frac{1}{|I|} \sum_{i \in I} \Vari*{\Xi(a_i)} = \sigma^2 \frac{1}{|I|}\sum_{i \in I} Z_i = \sigma^2 \frac{T}{|I|}\]
because all tasks are assigned under Rotation, so $\sum_{i \in I} Z_i = T$. By the same proof as Theorem \ref{thm:asymptotic-balance}:
\[\E*{\Big| \Xi(a_i) \Big|} \leq \sqrt{ \Vari*{\Xi(a_i)} } = \sigma \sqrt{\frac{T}{|I|}}\]
then, by the triangle inequality:
\[\E*{\Big| \tilde W(a_i) - \frac{W_F}{|I|} \Big|} \leq \E*{\Big|k^* + \Xi(a_i) \Big|} \leq \E*{\Big| k^* \Big|} + \E*{ \Big| \Xi(a_i) \Big|} \leq k^* + \sigma \sqrt{\frac{T}{|I|}}\]
The bound for $\Xi(a_i)$ holds for any mechanism. Let $a' = \pi_{DPM}$. By Theorem \ref{thm:asymptotic-balance}:
\begin{align*}
    \E*{\Big| \tilde W(a'_i) - \frac{W_F}{|I|} \Big|} &= \E*{\Big| W(a'_i) - \Xi(a'_i) - \frac{W_F}{|I|} \Big|} \\
    &\leq \E*{\Big| W(a'_i) - \frac{W_F}{|I|} \Big| } + \E*{ \Big| \Xi(a'_i) \Big|} \\
    &\leq \sqrt{\frac{T}{|I|}} + \sigma \sqrt{\frac{T}{|I|}}\\
    &= (1 + \sigma) \sqrt{\frac{T}{|I|}}
\end{align*}
Example \ref{example:effort} demonstrates that the growth rates are binding for both mechanisms. This proves the Proposition.

%% file: Paper/ReducedForm.tex
\section{Reduced-Form Productivity Characterization}\label{appendix:reduced-form}

I derive a reduced-form theoretical prediction for DPM's productivity effects in this Appendix.

\subsection{Setup}

\begin{assumption}\label{assumption:two-type}
    (Two-Type Market) $X = \{x, y\}$, $n := |I| \geq 2$, and $T$ is arbitrary. The arrival distribution is static over time: $F_t(x) = f \in (0,1)$ and $F_t(y) = 1 - f$ for all $t \in [T]$. Weights satisfy $w^x, w^y > 0$.
\end{assumption}

The result I prove is a theoretical prediction for DPM's productivity gain over Rotation. For this, I specify a market where agents have randomly distributed preferences and productivity.

\begin{assumption}\label{assumption:statistical}
    (Statistical Model) Agent $i$'s productivity at a task $t$ of type $x$ is $Y_i^x = \varepsilon_i^x$ where $\varepsilon_i^x$ is Bernoulli with success probability $g^x$, i.i.d. across $i$. Within a worker, $Y_i^x$ and $Y_i^y$ may be correlated; let $\tilde{g}^{x,y} := \Pr(Y_i^x = 1, Y_i^y = 1)$.
\end{assumption}

The expected productivity of an mechanism $\pi$ is:
\[\mathcal{P}(\pi) = \frac{1}{T} \sum_{i \in I}\sum_{t = 1}^{T}\sum_{x \in X} F_t(x) \E*{\pi_{i,t}^{x} Y_i^x}\]
I assume that agents truthfully report their productivity (positively normalized). I discuss adjustments for this at the end.

\begin{assumption}\label{assumption:normalized}
    $v_i(x) = 1 + Y_i^x$.
\end{assumption}

\subsection{Characterization of the DPM Allocation}

I show that the structure of equilibria in the $|X| = 2$ case has a tractable characterization. I introduce a few pieces of notation for this. Define:
\[N^x := \frac{W_F}{n w^x} \qquad N^y := \frac{W_F}{n w^y} \qquad \kappa := \frac{fT}{N^x} = \frac{n f w^x}{f w^x + (1-f) w^y}\]
$N^x$ ($N^y$) is the per-agent consumption of $x$ ($y$) under full rationing. $\kappa \in (0, n)$ measures the expected supply of type $x$ per rationed unit. For simplicity, I will assume that $\kappa$ is integral (it can be verified that this is trivial to obtain by slightly perturbing weights). For an agent $i$, I write the expected task counts of an allocation $q_i$ as
\[n_i^x := f \sum_{t = 1}^T q_{i,t}^x \qquad n_i^y := (1-f)\sum_{t=1}^T q_{i,t}^y\]
Fix realized valuations satisfying $v_i(\cdot)$ for all $i \in I$. $i$'s relative valuation for $x$ over $y$ is $V_i = v_i(x)/v_i(y)$. Index agents in descending order of their relative valuations: $V_{(1)} > V_{(2)} > \cdots > V_{(n)}$, where subscript $(k)$ denotes the agent with the $k$-th largest ratio.

\begin{proposition}\label{proposition:threshold}
    (Threshold Characterization) There exists a market-clearing balanced equilibrium allocation $q^*$ with expected task counts:
    \[\big(n_{(k)}^x, n_{(k)}^y\big) = \begin{cases} (N^x, \; 0) & k \leq \kappa \\ (0, \; N^y) & k \geq \kappa + 1\end{cases}\]
\end{proposition}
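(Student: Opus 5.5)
We prove the statement by explicit construction. Because $F_t$ does not depend on $t$, we look for a stationary equilibrium: prices $p^*_t = (p^x, p^y)$ and per-period shares $q^*_{i,t} = q^*_i$ that are the same in every period. The allocation is the threshold one. Each of the $\kappa$ agents with the largest ratios $V_{(k)}$ gets a $1/\kappa$ share of $x$ and no $y$ in every period. Each of the remaining $n-\kappa$ agents gets a $1/(n-\kappa)$ share of $y$ and no $x$ in every period. This is well defined because $\kappa\in(0,n)$ is assumed integral.

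\textbf{Market clearing and balance.} For every $(x,t)$ and $(y,t)$ we have $\sum_i q^{*,x}_{i,t} = \kappa\cdot(1/\kappa)=1$ and $\sum_i q^{*,y}_{i,t}=1$, so the market clears.

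For $k\le\kappa$ the expected count is $n^x_{(k)} = fT/\kappa = N^x$, using the definition $\kappa = fT/N^x$. Its weighted expectation is then $w^x N^x = W_F/n$.

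For $k\ge\kappa+1$ the expected count is $n^y_{(k)} = (1-f)T/(n-\kappa)$. From $\kappa = nfw^x/(fw^x+(1-f)w^y)$ we get
\[
n-\kappa = \frac{n(1-f)w^y}{fw^x+(1-f)w^y}.
\]
Since $W_F = T\big(fw^x+(1-f)w^y\big)$, this gives $n^y_{(k)} = W_F/(nw^y) = N^y$, and its weighted expectation is $w^yN^y=W_F/n$. So $q^*$ is ex-ante balanced with exactly the task counts in the statement.

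\textbf{Prices and budgets.} Expected utility is linear: $u_i(q_i)=\sum_t\big[f\,v_i(x)q^x_{i,t}+(1-f)\,v_i(y)q^y_{i,t}\big]$. The consumption set $\mathcal{Q}$ is the whole nonnegative orthant, so a bundle is optimal under a budget $r_i p\cdot q_i\le 1$ exactly when it exhausts the budget and puts weight only on coordinates that maximize marginal utility per unit of price. The bang-per-buck ratio is $f v_i(x)/p^x$ for $x$ and $(1-f)v_i(y)/p^y$ for $y$. We therefore set
\[
p^x = \rho f, \qquad p^y = 1-f, \qquad \text{for some } \rho\in\big[V_{(\kappa+1)},\,V_{(\kappa)}\big].
\]
This interval is nonempty because the ratios are strictly ordered. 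With this choice, an agent weakly prefers spending on $x$ if and only if $V_i\ge\rho$. Hence the top $\kappa$ agents weakly prefer $x$, and the others weakly prefer $y$.

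Budgets are set to the cost of the proposed bundles: $b_{(k)} = \rho f T/\kappa$ for $k\le\kappa$, and $b_{(k)}=(1-f)T/(n-\kappa)$ otherwise, with $r_i = 1/b_i$. Condition 1 of a balanced equilibrium then holds with equality. Condition 2 holds by the bang-per-buck argument, since any bundle with strictly higher utility than $q_i^*$ must cost strictly more than $b_i$. Condition 3 is balance, shown above. All prices are strictly positive, so market clearing is the operative requirement, and it holds.

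\textbf{Main obstacle.} The one step that needs care is condition 2. It must hold over the unbounded set $\mathcal{Q}$, not the truncated set $T$ used in the proof of Theorem~\ref{thm:existence}. It must also be stated with the $F_t$-weighted marginal utilities, which is why the factor $f$ enters $p^x$ (and $1-f$ enters $p^y$). We would handle this with the standard linear-programming duality fact: for a linear objective with nonnegative coefficients and a single budget constraint, $q_i^*$ is optimal if and only if it spends the entire budget on argmax bang-per-buck coordinates. The boundary agents (those with $V_i=\rho$, if $\rho$ is chosen at an endpoint of the interval) are indifferent, and this is harmless. Everything else is bookkeeping with the definitions of $N^x$, $N^y$ and $\kappa$.
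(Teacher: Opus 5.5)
Your construction is correct, and it takes a different route from the paper.

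The paper argues by necessity. It starts from an arbitrary market-clearing balanced equilibrium, whose existence comes from Theorem~\ref{thm:existence}. A budget-neutral intertemporal swap shows that prices must be constant over time. It then derives the cutoff $c=(1-f)p^x/(fp^y)$, so that agents with $V_i>c$ buy only $x$ and agents with $V_i<c$ buy only $y$. Balance and the accounting identity $|A|+\sum_{i\in C}n_i^x/N^x=\kappa$ then pin down the counts. The indifferent agents at the cutoff are handled by asserting that one may, without loss of generality, restrict to equilibria in which exactly $\kappa-|A|$ of them take only $x$.

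You argue by sufficiency instead. You write down stationary prices $p^x=\rho f$, $p^y=1-f$ (your $\rho$ is exactly the paper's cutoff $c$), explicit budgets, and the threshold allocation. You then verify the three equilibrium conditions directly, using the bang-per-buck characterization of linear demand on the whole orthant.

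What your route buys:
\begin{itemize}
\item It is self-contained and does not need the fixed-point existence theorem.
\item It is explicit exactly where the paper is thinnest. Reassigning the indifferent agents to pure $x$ or pure $y$ requires re-choosing their budgets and re-checking per-period clearing. Your budgets $b_{(k)}$ and per-period shares $1/\kappa$, $1/(n-\kappa)$ supply precisely that.
\end{itemize}

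What the paper's route buys is information about every equilibrium. Whichever equilibrium DPM computes, agents strictly above (below) the cutoff receive only $x$ ($y$), with counts $N^x$ ($N^y$). The later productivity and manipulation discussion implicitly relies on this. Your argument exhibits one such equilibrium, which is all the statement requires.

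One small correction to your justification: the interval $[V_{(\kappa+1)},V_{(\kappa)}]$ is nonempty under any weak ordering of the ratios, so strict ordering is not needed. This matters because under Assumption~\ref{assumption:normalized} one has $V_i\in\{1/2,1,2\}$, so ties are unavoidable once $n\ge 4$. Your construction handles this automatically, since agents tied at $\rho$ are indifferent and any tie-breaking in the ranking works.
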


\begin{center}
    \textbf{Proof of Proposition \ref{proposition:threshold}}
\end{center}

Let $(p^*, r^*, q^*)$ be any market-clearing balanced equilibrium with feasible allocation. Throughout, I use the fact that $u_i(q_i) = f v_i^x \sum_t q_{i,t}^x + (1-f) v_i^y \sum_t q_{i,t}^y$ with all four coefficients strictly positive, and I write $e_{(z,t)}$ for the indicator allocation of the coordinate $(z,t)$ and $Q_i^z := \sum_t q_{i,t}^{*,z}$. Note the following facts that arise from the definition of market-clearing balanced equilibria:
\begin{enumerate}
    \item (Full Allocation) $\sum_{i \in I} q_{i,t}^{*,z} = 1$ for all $(z,t) \in X \times [T]$.
    \item (Positivity) $r_i^* > 0$ for all $i \in I$ and $p_t^{*,z} > 0$ for all $(z,t) \in X \times [T]$.
    \item (Expenditure) $r_i^* p^* \cdot q_i^* = 1$ for all $i \in I$.
\end{enumerate}

\textbf{Step 1:} Prices are constant over time: $p_t^{*,z} = p^z$ for all $t$, for each $z \in X$.

Toward a contradiction, suppose $p_t^{*,z} > p_k^{*,z}$ for some $z$ and $t \neq k$. By Step 1, $\sum_{i} q_{i,t}^{*,z} = 1$, so $q_{i,t}^{*,z} > 0$ for some $i$. Consider
\[q_i := q_i^* - \delta e_{(z,t)} + \delta \frac{p_t^{*,z}}{p_k^{*,z}} e_{(z,k)}\]
for small $\delta \in (0, q_{i,t}^{*,z})$. Then $p^* \cdot q_i = p^* \cdot q_i^*$, so $q_i$ is affordable, yet:
\[u_i(q_i) - u_i(q_i^*) = f_z v_i^z\, \delta \bigg(\frac{p_t^{*,z}}{p_k^{*,z}} - 1\bigg) > 0\]
Then, $i$ would demand $q_i$ in equilibrium, a contradiction. Therefore, the step holds.

\textbf{Step 2:} Define the cutoff $c := \frac{(1-f)p^x}{f p^y} > 0$. Then $V_i > c$ implies $n_i^y = 0$, and $V_i < c$ implies $n_i^x = 0$.

$V_i > c$ implies:
\[\frac{f v_i(x)}{p^x} > \frac{(1-f) v_i(y)}{p^y}\]
Suppose $Q_i^y > 0$. Consider transferring expenditure: $q_i := q_i^* - \delta e_{(y,t')} + \delta \frac{p^y}{p^x} e_{(x,1)}$ for a period $t'$ with $q_{i,t'}^{*,y} > 0$ and small $\delta > 0$. This is affordable and yields
\[u_i(q_i) - u_i(q_i^*) = \delta p^y \bigg( \frac{f v_i^x}{p^x} - \frac{(1-f) v_i^y}{p^y} \bigg) > 0\]
contradicting demand. So $Q_i^y = 0$, i.e., $n_i^y = 0$. The case $V_i < c$ is symmetric.

\textbf{Step 3:} The Proposition holds.

Ex-ante balance $E_{F,w}[q_i^*] = W_F/n$ implies:
\begin{equation}\label{eq:balance-counts}
    w^x n_i^x + w^y n_i^y = \frac{W_F}{n} \iff \frac{w^x n^x_i n}{W_F} + \frac{w^y n_i^y n}{W_F} = 1 \iff \frac{n_i^x}{N^x} + \frac{n_i^y}{N^y} = 1
\end{equation}
An agent with $n_i^y = 0$ therefore has $n_i^x = N^x$, and an agent with $n_i^x = 0$ has $n_i^y = N^y$. Partition $I$ into $A = \{i : V_i > c\}$, $B = \{i : V_i < c\}$, and $C = \{i : V_i = c\}$. By Step 2, every $i \in A$ has counts $(N^x, 0)$ and every $i \in B$ has counts $(0, N^y)$. By full allocation: 
\[\sum_{i \in I} Q_i^x = fT \implies \sum_{i \in A \cup C} n_i^x = fT\]
Dividing by $N^x$ and using \eqref{eq:balance-counts}:
\begin{equation}\label{eq:kappa-accounting}
    |A| + \sum_{i \in C} \frac{n_i^x}{N^x} = \frac{fT}{N^x} = \kappa
\end{equation}
and
\[\sum_{i \in C} \frac{n_i^x}{N^x} \in [0, |C|]\]
by \eqref{eq:balance-counts}. By assumption, $\kappa$ is integral, so $\sum_{i \in C} n_i^x/N^x$ is integral. Without loss of generality, one can restrict attention to equilibria such that $n_i^x = N^x$ for $\kappa - |A|$ workers $i \in C$. A symmetric argument shows any remaining workers can be assigned $n_i^y = N^y$ and partitioned to a separate set $D$.

If $|C| > 0$, the final agent $j \in C$ has rank $\kappa$ in descending order. Otherwise, if $|C| = 0$ and $|D| > 0$, then the first agent $j \in D$ has rank $\kappa + 1$. Either proves the Proposition. $\blacksquare$

The qualification that this applies for \textit{some} equilibria is only necessary to simplify exposition away from the case where agents at the cutoff have convex combinations of the two tasks. The proof shows that one can, without loss of generality, restrict attention to equilibria with discrete allocations.

\subsection{Expected Productivity}

Rotation's expected productivity is simple:

\begin{remark}\label{remark:rotation}
    $\mathcal{P}(\pi_R) = f g^x + (1 - f) g^y$
\end{remark}

This follows directly from the fact that each agent's allocation is i.i.d irrespective of realized preferences. Using the characterization in Proposition \ref{proposition:threshold}, I provide an exact expression for the expected difference in productivity between DPM and Rotation. For the following result, note that $\Omega_{n,g}(a)$ is the standard CDF for the Binomial distribution with $n$ trials and success probability $g$. $G^x = \Pr(V_i = 2)$ and $G^y = \Pr(V_i = 1/2)$ are the probabilities that a worker has a comparative advantage for $x$ or $y$, respectively, and $\tau = \Pr(Y_i^z = 1|V_i = 1)$ is the probability that a worker is productive in a task $z \in \{x,y\}$ (equal across $x,y$) given that she has no reported comparative advantage. 

\begin{theorem}\label{theorem:reduced-form}
    (DPM's Expected Productivity) \begin{align*}
    \mathcal{P}(\pi_{DPM}) &= N^x \sum_{k = 1}^\kappa \bigg[ (1 - \tau)(1 - \Omega_{n,G^x}(k - 1)) + \Omega_{n,G^y}(n - k) \tau \bigg] \\
    &+ N^y \sum_{k = \kappa + 1}^n \bigg[ (1 - \tau)(1 - \Omega_{n,G^y}(n - k)) + \Omega_{n,G^x}(k - 1) \tau \bigg]
\end{align*}
\end{theorem}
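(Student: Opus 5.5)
The plan is to take the threshold equilibrium of Proposition \ref{proposition:threshold} as the allocation DPM computes, and then evaluate expected productivity rank by rank. Under Assumption \ref{assumption:normalized}, each $v_i(z)\in\{1,2\}$, so $V_i\in\{2,1,1/2\}$.
\begin{itemize}
\item $V_i=2$ exactly when $Y_i^x=1,Y_i^y=0$, which has probability $G^x$.
\item $V_i=1/2$ exactly when $Y_i^x=0,Y_i^y=1$, which has probability $G^y$.
\item $V_i=1$ exactly when $Y_i^x=Y_i^y$; conditional on this, both productivities equal $1$ with probability $\tau$.
\end{itemize}
Proposition \ref{proposition:threshold} assumes strict ordering of the $V_{(k)}$. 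With ties I would break them by an independent uniform random order. The proposition's proof, Step 3 with the set $C$, already allows any split of indifferent agents, so the threshold structure survives. The ordering then places all $V=2$ agents first, all $V=1$ agents next, and all $V=1/2$ agents last.

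By Proposition \ref{proposition:threshold}, the agent of rank $k$ has expected task counts $(N^x,0)$ if $k\le\kappa$ and $(0,N^y)$ if $k\ge\kappa+1$. Agent $i$'s allocation depends on the realized valuations, but given them it is independent of the task arrivals. Hence agent $i$'s contribution to $\mathcal{P}(\pi_{DPM})$ is $n_i^x Y_i^x + n_i^y Y_i^y$, averaged over the draw of $Y$. Summing over ranks instead of names gives
\[
\mathcal{P}(\pi_{DPM}) = N^x\sum_{k=1}^{\kappa}\E*{Y_{(k)}^x} + N^y\sum_{k=\kappa+1}^{n}\E*{Y_{(k)}^y}.
\]
Here I would keep the $1/T$ normalization consistent with how the counts $N^x,N^y$ are defined (the statement omits it). I would flag this as bookkeeping and not substance.

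The key computation is $\E*{Y_{(k)}^x}$. Let $M^x\sim\mathrm{Bin}(n,G^x)$ count the $V=2$ agents and $M^y\sim\mathrm{Bin}(n,G^y)$ count the $V=1/2$ agents; both are binomial because types are i.i.d.\ across $i$.
\begin{itemize}
\item Rank $k$ holds a $V=2$ agent iff $M^x\ge k$, which has probability $1-\Omega_{n,G^x}(k-1)$. Such an agent has $Y^x=1$.
\item Rank $k$ holds a $V=1/2$ agent iff $M^y\ge n-k+1$, which has probability $1-\Omega_{n,G^y}(n-k)$. Such an agent has $Y^x=0$.
\item Otherwise rank $k$ holds a $V=1$ agent, with probability $\Omega_{n,G^x}(k-1)-1+\Omega_{n,G^y}(n-k)$. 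Such an agent has $Y^x=1$ with probability $\tau$.
\end{itemize}
For the last case to be valid, the identity of an agent at a given rank must not carry extra information beyond its type class. This holds by exchangeability and the independent tie-breaking.

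Combining the three cases,
\[
\E*{Y_{(k)}^x}=(1-\Omega_{n,G^x}(k-1)) + \tau\big[\Omega_{n,G^x}(k-1)+\Omega_{n,G^y}(n-k)-1\big],
\]
which rearranges to $(1-\tau)(1-\Omega_{n,G^x}(k-1))+\tau\,\Omega_{n,G^y}(n-k)$. This is the first bracket in the theorem.

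The symmetric argument with $x$ and $y$ swapped handles $k\ge\kappa+1$. Rank $k$ is a $V=1/2$ agent with probability $1-\Omega_{n,G^y}(n-k)$ and a $V=2$ agent with probability $1-\Omega_{n,G^x}(k-1)$. This yields the second bracket $(1-\tau)(1-\Omega_{n,G^y}(n-k))+\tau\,\Omega_{n,G^x}(k-1)$.

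I expect the main obstacle to be the ranking and selection step. Proposition \ref{proposition:threshold} guarantees only that \emph{some} balanced equilibrium has the threshold form, and it is stated for strict orderings, while here ties occur with positive probability. I would argue via Steps 2 and 3 of its proof: with ties, agents at the cutoff $c$ can be assigned pure counts arbitrarily while preserving equilibrium, using integrality of $\kappa$. So DPM may select the threshold equilibrium with uniformly random tie-breaking among equal-$V$ agents. This selection is exactly what makes the per-rank type probabilities above correct.
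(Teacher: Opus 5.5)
Your proposal is correct and follows the paper's proof essentially step for step. You use the threshold characterization to reduce to $N^x\sum_{k\le\kappa}\mathbb{E}[Y^x_{(k)}]+N^y\sum_{k>\kappa}\mathbb{E}[Y^y_{(k)}]$, then compute each rank's type probabilities from the binomial counts $\mathrm{Bin}(n,G^x)$ and $\mathrm{Bin}(n,G^y)$, with $\tau$ for the $V=1$ class; your handling of ties (tie-breaking independent of $Y$, which matters when $N^x\neq N^y$ because reports reveal $Y$) and of the $1/T$ normalization fills in points the paper leaves implicit rather than changing the argument.
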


\begin{center}
    \textbf{Proof of Theorem \ref{theorem:reduced-form}} 
\end{center}

I omit $i$ in notation for random variables since they are i.i.d across $i$. Note that:
\[V = \frac{1 + Y^x}{1 + Y^y} \in \{1/2, 1, 2\}\]
and
\[\tau = \Pr(Y^x = 1 | V = 1) = \Pr(Y^y = 1 | V = 1) = \frac{\tilde{g}^{x,y}}{\tilde{g}^{x,y} + (1 - g^x - g^y + \tilde{g}^{x,y})}\]
Let $k$ be the agent with the $k$-th largest $V_{(k)}$. The concomitant productivity of agent $k$ at task $z$ is $Y_{(k)}^z$. The probability that the order statistic $k$ for relative productivity is equal to $a$ is $h_{(k)}^{a} = \Pr(V_{(k)} = a)$. Additionally:
\[M^a = \sum_{i \in I} \mathbbm{1}\{ V_i = a \} \]
Then:
\[h_{(k)}^2 = \Pr(M^2 \geq k) \quad \text{ and } \quad h_{(k)}^{1/2} = \Pr(M^{1/2} \geq n - k + 1)\]
and
\[h_{(k)}^1 = 1 - \Pr(M^2 \geq k) - \Pr(M^{1/2} \geq n - k + 1)\]
It is immediate that $M^2$ and $M^{1/2}$ are distributed according to a Binomial trial with $n$ trials and success probability $G^x := g^x - \tilde{g}^{x,y}$ and $G^y := g^y - \tilde{g}^{x,y}$, respectively. Hence:
\[h_{(k)}^2 = 1 - \Omega_{n,G^x}(k - 1) \quad \text{ and } \quad h_{(k)}^{1/2} = 1 - \Omega_{n,G^y}(n - k) \]
and
\[h_{(k)}^1 = \Omega_{n,G^x}(k - 1) +\Omega_{n,G^y}(n - k) - 1\]
where $\Omega_{n,g}(a)$ is the standard CDF for a Binomial trial with $n$ tries and success probability $g$. By Proposition \ref{proposition:threshold}:
\begin{align*}
    \mathcal{P}(\pi_{DPM}) &= N^x \sum_{k = 1}^\kappa \E*{Y_{(k)}^x} + N^y \sum_{k = \kappa + 1}^n \E*{Y_{(k)}^y}\\
    &= N^x \sum_{k = 1}^\kappa \bigg[ h_{(k)}^{2} + \tau h_{(k)}^{1} \bigg] + N^y \sum_{k = \kappa + 1}^n \bigg[ \tau h_{(k)}^{1} + h_{(k)}^{1/2} \bigg]\\
    &=N^x \sum_{k = 1}^\kappa \bigg[ (1 - \tau)(1 - \Omega_{n,G^x}(k - 1)) + \Omega_{n,G^y}(n - k) \tau \bigg] \\
    &+ N^y \sum_{k = \kappa + 1}^n \bigg[ (1 - \tau)(1 - \Omega_{n,G^y}(n - k)) + \Omega_{n,G^x}(k - 1) \tau \bigg]
\end{align*}

This proves the Theorem. $\blacksquare$

The exact expression for DPM's productivity is expressed only in terms of mean task productivity and primitives: $g^x,g^y,\tilde{g}^{x,y},w^x,w^y,n,$ and $f$. All of $G^x, G^y,$ and $\tau$ are functions of $g^x$, $g^y$, and $\tilde{g}^{x,y}$. This expression, while offering a general prediction from a sparse set of parameters, is obtuse. A simplified form in the case that mean task productivity, arrivals, and weights are identical is more illustrative.

\begin{corollary}\label{corollary:symmetric}
    If $w^x = w^y$, $f = 1/2$, and $g := g^x = g^y$, DPM's expected productivity is:
    \[\mathcal{P}(\pi_{DPM}) = \frac{2\Phi(n/2)}{n}\]
    where:
    \[\Phi(a) = \sum_{k = 1}^a (1 - \tau)(1 - \Omega_{n,G}(k - 1)) + \tau (1 - \Omega_{n,G}(n - k))\]
\end{corollary}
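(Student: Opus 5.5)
The plan is to specialize Theorem~\ref{theorem:reduced-form} directly. No new equilibrium argument is needed, because Proposition~\ref{proposition:threshold} already fixes the DPM allocation. Everything reduces to (i) evaluating the primitives $\kappa$, $N^x$, $N^y$ under the symmetry assumptions, (ii) showing the two sums in Theorem~\ref{theorem:reduced-form} coincide, and (iii) tracking the $1/T$ normalization in $\mathcal{P}(\pi)$.

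\textbf{Primitives.} With $w^x=w^y=:w$ and $f=1/2$, the formula for $\kappa$ gives $\kappa = n\cdot\tfrac12 w/(\tfrac12 w+\tfrac12 w)=n/2$. Also $W_F=\sum_t(\tfrac12 w+\tfrac12 w)=Tw$, so $N^x=N^y=T/n$. Once the $1/T$ factor in the definition of $\mathcal{P}(\pi)$ is applied to the per-agent expected task counts, each rank contributes with weight $1/n$. I will check this normalization against Theorem~\ref{theorem:reduced-form}, where the $1/T$ appears to have been absorbed into $N^x,N^y$. Since $g^x=g^y=g$, we have $G^x=g-\tilde g_{x,y}=G^y=:G$, and $\tau$ is symmetric in $x$ and $y$, as already noted in the proof of Theorem~\ref{theorem:reduced-form}.

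\textbf{Symmetry of the two sums.} The first sum runs over ranks $k=1,\dots,n/2$ with summand $(1-\tau)(1-\Omega_{n,G}(k-1))+\tau\,\Omega_{n,G}(n-k)$. The second sum runs over $k=n/2+1,\dots,n$ with summand $(1-\tau)(1-\Omega_{n,G}(n-k))+\tau\,\Omega_{n,G}(k-1)$. Reindexing the second by $k'=n-k+1$ maps $\{n/2+1,\dots,n\}$ onto $\{1,\dots,n/2\}$ and swaps $n-k\leftrightarrow k'-1$. The second summand then becomes exactly the first, so the total is $\tfrac1n\cdot 2\sum_{k=1}^{n/2}[\cdots]=2\Phi(n/2)/n$.

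\textbf{Expected obstacle.} The only delicate point is matching the $\tau$-term to the form in which $\Phi$ is written. I will recompute that term from first principles rather than from the displayed formula. Using $h^2_{(k)}+\tau h^1_{(k)}$ with $h^2_{(k)}=1-\Omega(k-1)$ and $h^1_{(k)}=\Omega(k-1)+\Omega(n-k)-1$ gives $(1-\tau)(1-\Omega(k-1))+\tau\,\Omega(n-k)$. The $\tau$-term is then $\tau\,\Omega_{n,G}(n-k)=\tau\Pr(M^{1/2}\le n-k)$, i.e.\ the probability that rank $k$ is not a $y$-advantaged agent. Care is needed with the binomial-CDF convention (strict versus weak inequality, $M^{1/2}\ge n-k+1$ versus its complement) so that the displayed $\Phi$ agrees with this term. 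I expect this bookkeeping, together with the $1/T$ normalization, to be where errors would arise. The structural argument itself is just the reindexing symmetry.
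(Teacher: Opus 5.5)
Your argument is the paper's own proof. It specializes Theorem~\ref{theorem:reduced-form} with $\kappa=n/2$, $N^x=N^y=1/n$ (the paper likewise absorbs the $1/T$ into $N$) and $G^x=G^y=G$. It then reindexes the second sum by $k\mapsto n+1-k$, so that both sums become $\sum_{k=1}^{n/2}\phi(k)$.

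The one thing to correct is your ``expected obstacle'': it is not a convention issue that careful bookkeeping will resolve.

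\begin{itemize}
\item \emph{Your summand is the right one.} You derive $\phi(k)=(1-\tau)(1-\Omega_{n,G}(k-1))+\tau\,\Omega_{n,G}(n-k)$, and this is exactly the $\phi$ defined in the paper's proof.
\item \emph{The displayed $\Phi$ contains a typo.} It has $\tau(1-\Omega_{n,G}(n-k))$ instead, and the paper's last step silently identifies $\sum_k\phi(k)$ with it.
\item \emph{No convention reconciles the two.} Under the standard CDF, $\Omega_{n,G}(n-k)=\Pr(M^{1/2}\le n-k)$ and $1-\Omega_{n,G}(n-k)=\Pr(M^{1/2}\ge n-k+1)$ are probabilities of complementary events. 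Switching to any other convention for $\Omega$ would break the term $1-\Omega_{n,G}(k-1)=\Pr(M^2\ge k)=h^2_{(k)}$, which you also need.
\end{itemize}

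A quick check confirms which version is correct. Suppose $\tilde g^{x,y}=g$. Then $G=0$, every worker has $V_i=1$ and $Y_i^x=Y_i^y$, and $\tau=g$, so DPM's productivity must equal $g$. Your $\phi$ gives $\phi(k)=\tau$ and hence $2\Phi(n/2)/n=g$. The displayed $\Phi$ gives $0$.

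So state and prove the corollary with $\tau\,\Omega_{n,G}(n-k)$ in place of $\tau(1-\Omega_{n,G}(n-k))$. This corrected version is also the one the subsequent limit corollary actually relies on, since its proof uses $\tau\,\Omega_{n,G}(n-\lfloor un\rfloor)\to\tau$.
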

\begin{proof}
    $w^x = w^y$ implies that $N := N^x = N^y = 1/n$. Furthermore, $G := G^x = G^y = g - \tilde{g}^{x,y}$. Reindexing the last line from Theorem \ref{theorem:reduced-form} to $j = n + 1 - k$ (i.e. $k = n + 1 - j$):
    \begin{align*}
        &N \sum_{k = \kappa + 1}^n \bigg[ (1 - \tau)(1 - \Omega_{n,G}(n - k)) + \Omega_{n,G}(k - 1) \tau \bigg] \\
        &= N \sum_{j = 1}^{n - \kappa} \bigg[ (1 - \tau)(1 - \Omega_{n,G}(n - k)) + \Omega_{n,G}(k - 1) \tau \bigg] \\
        &= N \sum_{j = 1}^{n - \kappa} \bigg[ (1 - \tau)(1 - \Omega_{n,G}(j - 1)) + \Omega_{n,G}(n - j) \tau \bigg]
    \end{align*}
    where I swap the sum to reverse (from $k = n$ to $\kappa + 1$) then apply the substitution. Let $\phi(k) = (1 - \tau)(1 - \Omega_{n,G}(k - 1)) + \Omega_{n,G}(n - k) \tau$. Then, Theorem \ref{theorem:reduced-form} simplifies to: 
    \[\mathcal{P}(\pi_{DPM}) = N \sum_{k = 1}^\kappa \phi(k) + N \sum_{k = \kappa + 1}^{n} \phi(n - k + 1) = N(\Phi(\kappa) + \Phi(n - \kappa))\]
    where the last equality follows using the definition of $\Phi$ in the Corollary. The assumptions imply that $\kappa = n/2$ and $N = 1/n$, thus:
    \[\mathcal{P}(\pi_{DPM}) = \frac{2\Phi(n/2)}{n}\]
\end{proof}

\begin{corollary}
    Under Corollary \ref{corollary:symmetric}'s conditions: $\lim_{n \to \infty} \mathcal{P}(\pi_{DPM}) = 2G + (1 - 2G)\tau$.
\end{corollary}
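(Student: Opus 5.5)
The plan is to evaluate $\frac{2}{n}\Phi(n/2)$ directly. I will use the summand $\phi(k)=(1-\tau)(1-\Omega_{n,G}(k-1))+\tau\,\Omega_{n,G}(n-k)$ exactly as it arises in the proof of Corollary \ref{corollary:symmetric} from Theorem \ref{theorem:reduced-form}. (The second term in the displayed definition of $\Phi$ should read $\tau\,\Omega_{n,G}(n-k)$ for consistency with that derivation, and I will work with that form.) Write
\[\frac{2}{n}\Phi(n/2)=(1-\tau)\,\frac{2}{n}\sum_{k=1}^{n/2}\big(1-\Omega_{n,G}(k-1)\big)+\tau\,\frac{2}{n}\sum_{k=1}^{n/2}\Omega_{n,G}(n-k),\]
and treat the two averages separately. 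Both are averages of Binomial tail probabilities evaluated at arguments proportional to $n$. Each should therefore converge to the Lebesgue measure of the set where the corresponding limiting indicator equals one.

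\textbf{Step 1 (concentration).} Let $M\sim\mathrm{Bin}(n,G)$. By Hoeffding's inequality, $\Pr(M\ge k)\le e^{-2n(k/n-G)^2}$ when $k/n>G$, and $\Pr(M\le k-1)\le e^{-2n(G-(k-1)/n)^2}$ when $(k-1)/n<G$. Fix $\delta>0$. Then $1-\Omega_{n,G}(k-1)=\Pr(M\ge k)$ is within $e^{-2n\delta^2}$ of $\mathbbm{1}\{k<nG\}$ for every $k$ with $|k/n-G|\ge\delta$. Only the at most $2\delta n+1$ indices in the window $|k/n-G|<\delta$ are uncontrolled, and each contributes at most $1$ to the sum. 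Thus the window contributes $O(\delta)$ to the average of size $2/n$ times the number of terms.

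\textbf{Step 2 (first average).} Note that $G=g-\tilde g_{x,y}$ is the probability of the event $\{Y^x=1,Y^y=0\}$, and by symmetry $\{Y^x=0,Y^y=1\}$ also has probability $G$. Since these events are disjoint, $2G\le 1$, so $G\le 1/2$. Hence the count $\#\{1\le k\le n/2:k<nG\}$ equals $nG+O(1)$. Combining this with Step 1 gives
\[\frac{2}{n}\sum_{k=1}^{n/2}\big(1-\Omega_{n,G}(k-1)\big)\to 2G\]
after letting $n\to\infty$ and then $\delta\to0$. The boundary case $G=1/2$ is already covered, because the window argument makes no interiority assumption.

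\textbf{Step 3 (second average).} For $k\le n/2$ the argument satisfies $n-k\ge n/2\ge nG$. When $G<1/2$, every term satisfies $(n-k)/n-G\ge 1/2-G>0$, so each $\Omega_{n,G}(n-k)\ge 1-e^{-2n(1/2-G)^2}$ and the average tends to $1$. When $G=1/2$, only the indices with $(n-k)/n<1/2+\delta$ are uncontrolled, which again costs $O(\delta)$. So the second average tends to $1$ in all cases.

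Assembling Steps 2 and 3 yields
\[\lim_{n\to\infty}\mathcal{P}(\pi_{DPM})=(1-\tau)2G+\tau=2G+(1-2G)\tau.\]
I expect the main obstacle to be bookkeeping rather than analysis. First, the summand must be taken in the form consistent with Theorem \ref{theorem:reduced-form}: with the literal $1-\Omega_{n,G}(n-k)$, the second average tends to $0$ and the claim fails. Second, $G\le1/2$ must be established so that the threshold $nG$ falls inside the summation range $k\le n/2$. The uniform $\delta$-window argument handles all remaining edge cases without any CLT refinement.
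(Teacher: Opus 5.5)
Your proof is correct. It rests on the same fact as the paper's proof, but reaches the limit by a different technical route. The fact is that the tail $1-\Omega_{n,G}(k-1)$ becomes a step in $k/n$ at $G$, while $\Omega_{n,G}(n-k)\to 1$ for $k\le n/2$.

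The paper rewrites $\Phi(n/2)/n$ as $\int_0^{1/2}\tilde\phi_n(u)\,du$ for a step function $\tilde\phi_n$. It gets the pointwise limit $(1-\tau)\mathbbm{1}\{u<G\}+\tau$ for $u\neq G$ from the weak law of large numbers, and finishes with bounded convergence. You stay with the discrete sums, use Hoeffding's inequality to control all terms outside a $\delta$-window around $nG$ uniformly, and count the rest.

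The paper's argument is shorter and purely qualitative. Yours is elementary and quantitative: it gives an error of order $\delta+e^{-2n\delta^2}+1/n$, hence $O(\sqrt{\log n/n})$ after choosing $\delta\asymp\sqrt{\log n/n}$. That rate matters if the limit is to be used as a prediction for a modest-sized workforce.

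Your two bookkeeping remarks are also right. The paper's own limit argument uses $\tau\,\Omega_{n,G}(n-\lfloor un\rfloor)$, which confirms that the $\tau(1-\Omega_{n,G}(n-k))$ displayed in Corollary~\ref{corollary:symmetric} is a typo. The correct bound is $G\le 1/2$, with equality when $g=1/2$ and $\tilde g^{x,y}=0$. The paper asserts $G<1/2$; this is harmless there, since its pointwise step only needs $1-u>G$ for $u<1/2$. Your window argument covers the boundary case explicitly.
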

\begin{proof}
    Define $u = k/n$, $\phi_n(u) := \phi(\lfloor un \rfloor)$, and $\tilde\phi_n(u) := \phi_n(\lceil un \rceil/n)$. Then:
    \[\frac{\Phi(n/2)}{n} = \frac{1}{n}\sum_{k=1}^{n/2} \phi_n(k/n) = \sum_{k=1}^{n/2} \int_{(k-1)/n}^{k/n} \tilde\phi_n(u)du  = \int_0^{1/2} \tilde\phi_n(u) du\]
    The second-to-last equality follows because $\tilde \phi_n(u)$ is constant over the width $[(k-1)/n,k/n]$, hence $\int_{(k-1)/n}^{k/n} \tilde\phi_n(u)du = \phi_n(k/n)/n$. 
    
    For a Binomial random variable $X_n$ with success probability $G$, by standard properties:
    \[\lim_{n \to \infty} \Pr(|X_n/n - G| \geq \epsilon) = 0\]
    for any $\epsilon > 0$. Fix $u \in (0,1/2)\setminus\{G\}$ and $m_n = \lfloor un \rfloor - 1$. $m_n$ is an integer sequence with $\lim_{n \to \infty} m_n/n = u$. Note that one can express:
    \[\Omega_{n,G}(m_n) = \Pr(X_n \leq m_n) = \Pr\bigg( \frac{X_n}{n} \leq \frac{m_n}{n} \bigg)\]
    Hence:
    \[\lim_{n \to \infty} \Omega_{n,G}(m_n) = \mathbbm{1}\{u > G\}\]
    By definition, $\tilde{g}^{x,y} \geq \max(0, 2g-1)$ and $G \leq \min(g,1-g) < 1/2$, so $1-u > 1/2 > G$ for all $u < 1/2$. Then:
    \[\lim_{n \to \infty} \phi_n(u) = (1-\tau)\big(1 - \Omega_{n,G}(\lfloor un \rfloor - 1)\big) + \tau\, \Omega_{n,G}(n - \lfloor un \rfloor) = (1-\tau)\mathbbm{1}\{u < G\} + \tau\]
    Define the right-hand side as $\hat{g}(u)$. $\lim_{n \to \infty} \tilde\phi_n = \hat g$ pointwise on $(0,1/2)\setminus\{G\}$. Recall from Theorem \ref{theorem:reduced-form} that $\phi(k)$ is the expectation of a Bernoulli order statistic, implying that $\tilde\phi_n \in [0,1]$. Bounded convergence gives:
    \[\lim_{n \to \infty} \Phi(n/2)/n = \int_0^{1/2} \hat g(u) du = (1 - \tau) G + \frac{\tau}{2} = G + (1/2-G)\tau\]
    The Corollary follows by doubling.
\end{proof}

Finally, simplifying Rotation's expected productivity, I obtain the following expression for DPM's expected gain over DPM:
\begin{align*}
    \mathcal{P}(\pi_{DPM}) - \mathcal{P}(\pi_R) &= \tau + 2(g - \tilde{g}^{x,y})(1 - \tau) - g \\
    &= \tau + 2g - 2\tilde{g}^{x,y} - 2\tau g + 2\tau \tilde{g}^{x,y} - g \\
    &= g - 2\tilde{g}^{x,y} + \tau\big(1 - 2g + 2\tilde{g}^{x,y}\big) \\
    &= g - \tilde{g}^{x,y}
\end{align*}
Within-worker correlation across task types therefore enters the gain only through $\tilde{g}^{x,y}$: negative correlation (small $\tilde{g}^{x,y}$) increases the gain, positive correlation decreases it.

The discussion so far assumes that agents report their productivities with zero noise. This is a strong assumption that is unlikely to hold generally. Deriving Theorem \ref{theorem:reduced-form} without this assumption is difficult because the distribution of the concomitant order statistic quickly becomes intractable, and the result is sensitive to the specific error distribution chosen. However, the theorem suggests that the only economic force of interest is whether a worker reports their comparative advantage or not, that is, is $V_i = 2$ an accurate signal for $Y_i^x > Y_i^y$ and $V_i = 1/2$ for $Y_i^y > Y_i^x$. The worst case scenario would be for these signals to be inverse: $V_i = 2$ implies $Y_i^y > Y_i^x$ and vice-versa for $V_i = 1/2$.

I consider a modification to Theorem \ref{theorem:reduced-form} that applies a "reversal rate" that occurs for some fraction of workers $\ell$ reporting a comparative advantage. This yields tractable, simple expressions for DPM's expected productivity corrected for noisy preference reporting.

\begin{remark}\label{remark:noise}
    Suppose that $V_i = 2$ implies $(Y^x,Y^y) = (1,0)$ with probability $1 - \ell$ and $(0,1)$ with probability $\ell$ and symmetrically for $V_i = 1/2$. Theorem \ref{theorem:reduced-form} becomes:
    \begin{align*}
    \mathcal{P}(\pi_{DPM}) &= N^x \sum_{k=1}^\kappa \Big[(1-\tau) + (\tau - 1 + \ell)\Omega_{n,G^x}(k-1) + (\tau - \ell)\Omega_{n,G^y}(n-k)\Big] \\
    &+ N^y \sum_{k=\kappa+1}^n \Big[(1-\tau) + (\tau - \ell)\Omega_{n,G^x}(k-1) + (\tau - 1 + \ell)\Omega_{n,G^y}(n-k)\Big]
    \end{align*}
    and, under Corollary \ref{corollary:symmetric}'s conditions, $\lim_{n \to \infty} \mathcal{P}(\pi_{DPM}) = 2G(1 - \ell) + (1 - 2G)\tau$.
\end{remark}

Reversal changes the expectation of the productivity order statistic to $(1-\ell)h_{(k)}^2 + \ell\, h_{(k)}^{1/2} + \tau h_{(k)}^1$ (symmetrically for $y$). There is a loss from workers who report in err, but there is also a gain from workers formerly assigned to their less productive tasks whom also reported in err. Substituting gives the Remark. This can also be simplified:
\begin{align*}
    \mathcal{P}(\pi_{DPM}) - \mathcal{P}(\pi_R) &= 2G(1-\ell) + (1-2G)\tau - g \\
    &= \big[2G + (1-2G)\tau - g\big] - 2G\ell \\
    &= G - 2G\ell \\
    &= (g - \tilde{g}^{x,y})(1 - 2\ell)
\end{align*}
The noise in preference reports must be particularly large to cause reversals with Bernoulli outcomes. Suppose an agent has a comparative advantage $V_i := Y_i^x/Y_i^y \in \{1/2, 2\}$ but reports $\hat V_i = V_i + \xi$ for mean-zero $\xi \sim Uniform(-a,a)$. $\ell = \Pr(\xi < -3/2) = (a-3/2)/(2a)$ for $a \geq 3/2$, and $\ell = 0$ for $a < 3/2$. Solving, $a = \frac{3}{2(1-2\ell)}$ and $Var(\xi) = a^2/3 = \frac{3}{4(1-2\ell)^2}$. For example, $\ell = 0.1$ requires an uniformly distributed error term with standard deviation approximately equal to $1.08$. When workers report points on a one-hundred point normalized scale, this corresponds to over one hundred points.

\subsection{Binned Estimation}

Given data on task outcomes, I compute Theorem \ref{theorem:reduced-form}'s equation using a bin-then-estimate method. First, I select a partition of $X = X_1 \cup X_2$ such that $|X_1| = |X_2|$. I treat each $x \in X_1$ as an identical task type and each $y \in X_2$ as an identical task type, then I estimate the relevant statistics ($g^x$, $g^y$, $\tilde{g}^{x,y}$, $w^x$, $w^y$, $F(x)$, and $F(y)$). $a$ is the ex-post DPM outcome. Let:
\[|a_i|_X := \sum_{t = 1}^T \sum_{x \in X} a_{i,t}^x \quad \text{ and } \quad \mathcal{P}_X(a_i) := \sum_{t = 1}^T \sum_{x \in X} Y_{i,t}^x a_{i,t}^x\]
I estimate:
\begin{align*}
    \hat \ell &:= \frac{1}{|I|} \sum_{i \in I} \mathbbm{1}\bigg\{  \frac{1}{|X_1|} \sum_{x \in X_1} v_i(x) \geq \frac{1}{|X_2|} \sum_{y \in X_2} v_i(y) \bigg\} \cdot \mathbbm{1}\bigg\{ \frac{\mathcal{P}_{X_1}(a_i)}{|a_i|_{X_1}} < \frac{\mathcal{P}_{X_2}(a_i)}{|a_i|_{X_2}} \bigg\}\\
    &+ \mathbbm{1}\bigg\{  \frac{1}{|X_1|} \sum_{x \in X_1} v_i(x) < \frac{1}{|X_2|} \sum_{y \in X_2} v_i(y) \bigg\} \cdot \mathbbm{1}\bigg\{ \frac{\mathcal{P}_{X_1}(a_i)}{|a_i|_{X_1}} \geq \frac{\mathcal{P}_{X_2}(a_i)}{|a_i|_{X_2}} \bigg\}
\end{align*}
I round $\kappa$ to the nearest integer if it is not integral. Finally, I compute the generalized equation using the estimated quantities.